\documentclass{article}

\usepackage{graphicx}%
\usepackage{multirow}%
\usepackage{amsmath,amssymb,amsfonts}%
\usepackage{amsthm}%
\usepackage{mathrsfs}%
\usepackage[title]{appendix}%
\usepackage{xcolor}%
\usepackage{textcomp}%
\usepackage{manyfoot}%
\usepackage{booktabs}%
\usepackage{algorithm}%
\usepackage{algorithmicx}%
\usepackage{algpseudocode}%
\usepackage{listings}%
%%%%

\usepackage{geometry}
\usepackage{graphicx} % Required for inserting images
\usepackage{hyperref}
\usepackage{amsmath}
\usepackage{amssymb}
\usepackage{amsthm}
\usepackage{float}
\usepackage{color}
\usepackage{xcolor}
\usepackage{tabularx}
\usepackage{array}
\usepackage{enumitem}
\usepackage{cleveref} 
\usepackage{algorithm}
\usepackage{algpseudocode}
\usepackage{mathtools}
\usepackage{stmaryrd}
\usepackage{booktabs}
\usepackage[normalem]{ulem}
\usepackage{pifont}

\newcommand{\mbf}[1]{\mathbf{#1}}

\newcommand*{\cA}{\mathcal{A}}
\newcommand*{\cB}{\mathcal{B}}
\newcommand*{\cC}{\mathcal{C}}
\newcommand*{\cD}{\mathcal{D}}

\newcommand*{\cH}{\mathcal{H}}

\newcommand*{\cI}{\mathcal{I}}

\newcommand*{\cP}{\mathcal{P}}
\newcommand*{\cQ}{\mathrm{Q}}

\newcommand*{\cS}{\mathcal{S}}
\newcommand*{\cT}{\mathcal{T}}

\newcommand*{\cX}{\mathcal{X}}
\newcommand*{\cY}{\mathcal{Y}}

\newcommand*{\N}{\mathbb{N}}

\newcommand*{\RR}{\mathbb{R}}

\newcommand*{\NN}{\mathbb{N}}

\newcommand{\graph}{\mathrm{graph}\ }
\newcommand{\argmin}{\arg\min}

\newcommand{\tupfuncN}{(P_i)_{i = 1}^N}
\newcommand{\firstvar}{x}
\newcommand{\secondvar}{y}

\newcommand{\SA}{\mathcal{SA}}
\newcommand{\LSC}{\mathcal{LSC}}

\newcommand{\BC}{\mathcal{CB}}
\newcommand{\PP}{\mathcal{PP}}

\newcommand{\property}{\texttt{boxtype}}
\newcommand{\unbounded}{\text{unbound}}
\newcommand{\bounded}{\text{bounded}}

\newcommand{\coco}{\text{cc}}
\newcommand{\conv}[1]{\text{Conv}(#1)}
\newcommand{\existleveltwo}{\Sigma_2^p}
\newcommand{\dom}{\mathrm{dom}}
\newcommand{\valoptim}{\varphi_o}
\newcommand{\valpessim}{\varphi_p}
\newcommand{\valbilevel}{\varphi}

\theoremstyle{plain}
\newtheorem{theorem}{Theorem}[section]

\newtheorem{lemma}[theorem]{Lemma}

\newtheorem{corollary}[theorem]{Corollary}
\newtheorem{proposition}[theorem]{Proposition}

\newtheorem*{theorem*}{Theorem}
\newtheorem*{proposition*}{Proposition}

\theoremstyle{definition}
\newtheorem{definition}[theorem]{Definition}
\newtheorem{remark}[theorem]{Remark}

\newtheorem{problem}[theorem]{Problem}

\newtheorem*{definition*}{Definition}

\definecolor{linkcolor}{RGB}{83,83,182}
\hypersetup{
    colorlinks=true,
    citecolor=linkcolor,
    linkcolor=linkcolor
}

% sam: TODO me fait peur, je choisis un titre que l'on pourra changer.
\title{Geometric and computational hardness\\of  bilevel programming}
\author{Jérôme Bolte\thanks{Toulouse School of Economics, University of Toulouse Capitole, Toulouse, France.} \and Tùng Lê \footnotemark[1] \and Edouard Pauwels\footnotemark[1] \and  Samuel Vaiter \thanks{CNRS \& Université Côte d'Azur, Laboratoire J. A. Dieudonné. Nice, France.} }

\begin{document}
\maketitle
\begin{abstract}
    %Bilevel programming is a general framework which encompasses many problems of interest in a diversity of applications. Yet it is a challenging problem class as the theory and practice of bilevel programming lack important assets such as versatile optimality conditions and general purpose solvers. 
    %After observing that
    We first show a simple but striking result in bilevel optimization: unconstrained $C^\infty$ smooth bilevel programming is as hard as general extended-real-valued lower semicontinuous minimization.
    We then proceed to a worst-case analysis of box-constrained bilevel polynomial optimization.  We show in particular that any extended-real-valued semi-algebraic function, possibly non-continuous, can be expressed as the value function of a polynomial bilevel program. Secondly, from a computational complexity perspective, the decision version of polynomial bilevel programming is one level above NP in the polynomial hierarchy ($\existleveltwo$-hard). Both types of difficulties are uncommon in non-linear programs for which objective functions are typically continuous and belong to the class NP. These results highlight the irremediable hardness attached to general bilevel optimization and the necessity of imposing some form of regularity on the lower level.
\end{abstract}
% Title propositions, add your favourite to the list:
% \begin{itemize}
%     \item Nonconvex bilevel optimization
%     \item Hardness of polynomial bilevel-optimization
%     \item \jer{On the hardness of bilevel optimization}
%     \item Polynomial bilevel optimization is hard.
%     \item A worst-case perspective of nonconvex bilevel optimization.
%     \item Geometric and computational hardness of polynomial bilevel programming.
%     \item \jer{Geometric and computational hardness of (nonconvex) bilevel programming.}
%     \item Geometric and computational hardness of nonconvex polynomial bilevel programming. (\TL{I add nonconvex since this keyword is fairly common when we search for papers.)}
% \end{itemize}

\section{Introduction}
This paper is concerned with the bilevel programming problem \eqref{eq:simple-bilevel-poly-optim}, that is formulated as follows:
\begin{equation}
    \label{eq:simple-bilevel-poly-optim}
    \tag{BP}
    \begin{aligned}
        \underset{\firstvar \in \cX}{\text{Minimize}} & \quad P(\firstvar,\secondvar)\\
        \text{s.t.} &\quad \secondvar \in \Theta(\firstvar) :=  \underset{\secondvar \in \cY}{\argmin}\; Q(\firstvar,\secondvar)
    \end{aligned}
\end{equation}
where $P, Q: \RR^n \times \RR^m \to \RR$ are called upper-level and lower-level functions respectively. Similarly, $\firstvar$ and $\secondvar$ are called upper-level and lower-level variables. We will focus on box or convex compact semi-algebraic lower-level feasible set $\cY$, independent of the upper variable $\firstvar$\footnote{Due to this independence, the formulation \eqref{eq:simple-bilevel-poly-optim} is sometimes called \emph{simple bilevel optimization} \cite{jiawang2017bilevel,guihua2014solving}. The general bilevel formulation might replace $\cY$ by $\cY(\firstvar)$ where the feasible set depends on the upper variable $\firstvar$.}, and polynomial $P,Q$.

Since $Q(\firstvar, \cdot)$ might admit more than one minimizer, if $y$ is not considered as a free optimization variable, the problem \eqref{eq:simple-bilevel-poly-optim} may be ill-posed. There are two major approaches to choose a minimizer in $\Theta(\firstvar)$: optimistic and pessimistic\footnote{Apart from these two models, there exist others such as bilevel optimization under uncertainty \cite{beck2021robust,beck2023survey,Aboussoror1995StrongweakSP}. A complete presentation of these models is, however, out of the scope of this paper.}. They yield two different \emph{value functions}:

\noindent\begin{minipage}{.5\linewidth}
\begin{equation}
    \label{eq:optimistic-bilevel}
    \tag{BP-O}
    \valoptim\colon \firstvar \mapsto \inf_{\secondvar \in \Theta(x)} \; P(\firstvar,\secondvar)
\end{equation}
\end{minipage}%
\begin{minipage}{.5\linewidth}
\begin{equation}
    \label{eq:pessimistic-bilevel}
    \tag{BP-P}
    \valpessim \colon x \mapsto \sup_{\secondvar \in \Theta(\firstvar)} \; P(\firstvar,\secondvar)
\end{equation}
\end{minipage},
where $\Theta(\firstvar)$ is defined as in \eqref{eq:simple-bilevel-poly-optim}. %The use of ``$\min/\max$'' to define value functions in \eqref{eq:optimistic-bilevel} and \eqref{eq:pessimistic-bilevel} implicitly means that $\Theta(\firstvar)$ is non-empty for all $\firstvar \in \cX$ and $P(\firstvar,\cdot)$ attains its min/max values in $\Theta(\firstvar)$. This is formalized in \Cref{assumption:well-definedness}. 
We distinguish $\valoptim$ (resp. $\valpessim$) the optimistic (resp. pessimistic) value function, where $\valoptim(\firstvar) = + \infty$ (resp. $\valpessim(\firstvar) = - \infty$), if $\Theta(\firstvar) = \emptyset$.  We emphasize that the term ``value function'' is different from the ``lower-level value function'', which has been used in the literature to construct algorithms, starting with \cite{ye1995optimality}. For simplicity, we will focus on the optimistic formulation of the bilevel problem, for the rest of this paper, the term ``bilevel problem'' in \eqref{eq:simple-bilevel-poly-optim} denotes the optimistic formulation \eqref{eq:optimistic-bilevel} and $\valbilevel = \valoptim$ denote the corresponding value function. This amounts to considering that the minimization in \eqref{eq:simple-bilevel-poly-optim} also takes place over the variable $y$. The majority of our results have a counterpart for the pessimistic formulation, see \Cref{subsection:summary_results}, with lower semicontinuity is replaced by upper semicontinuity.

%\TL{In this paper, our main focus is polynomial bilevel optimization, i.e., $P,Q$ are polynomials, and $\cY$ is a simple feasible set described by polynomial equalities and inequalities. We explain our motivation for such an emphasis in the following.}
\subsubsection*{Motivations and objectives}
%Our work is originally motivated by the applications of bilevel optimization. 
Historically, bilevel optimization has been used to address problems in economics, chemistry, optimal control, and decision-making. For some insights into these traditional applications, we refer readers to \cite[Chapter 1]{dempe2002foundations}, \cite{sinha2018review,cerulli:tel-03587548}. Recently, there has been a regain of interest in bilevel optimization among the machine learning community due to its applications in hyper-parameters tuning and meta-learning \cite{implicitMAML,Franceschi2018BilevelPF}. 

A large body of work on bilevel programming, especially in the context of machine learning \cite{Dagreou2022SABA,Dagreou2023SRBA,Liu2022investigating,gould2016differentiating,ablin2020super-efficiency,Blondel2021EfficientAM,arbel2022amortized,ji2020BilevelOC}, requires strong convexity of the lower-level problem to design scalable and provably convergent algorithms. This is a very favorable situation as the bilevel constraint is essentially equivalent to a {\em qualified} equality constraint, i.e., a manifold-like constraint. However, for many real-world machine learning problems (e.g., meta-learning and hyper-parameters optimization \cite{Franceschi2018BilevelPF}), the lower-level problem is not strongly convex and even non-smooth and nonconvex. To address the challenges in those situations, existing works \cite{liu2021towards,liu2021valuefunction,arbel:hal-03869097,guihua2014solving,Mordukhovich2020,ye1995optimality,dempe2012sensitivity,liu2020generic} have proposed various regularity, variational and ad-hoc assumptions on $P,Q,\cY$. These assumptions are, however, difficult to verify \textit{a priori} and may fail to hold for typical bilevel problems \cite[Section 3.2]{hendrion2011calmness}.  

In this work, we adopt a worst-case approach to explore the hardness of bilevel optimization and identify the class of functions that can be represented using a general bilevel problem (\eqref{eq:simple-bilevel-poly-optim}, especially in the setting where the lower-level programs are smooth but possibly nonconvex. Our analysis highlights the difficulties of general bilevel optimization, {in particular by thoroughly studying the pathologies of polynomial programming.}.%, and the necessity of qualification conditions. %Our result demonstrates that the worst-case of bilevel optimization with smooth, nonconvex lower-level problem is already extremely degenerate, suggesting that it can get worse if 

\subsubsection*{Pathological value functions are omnipresent}

Bilevel programming falls in the general framework of parametric optimization with constraints parameterized by the upper variable $x$. In this context, favorable situations, include cases where the constraint $y \in \Theta(x)$ can be equivalently described with a well-behaved equation or defines a smooth or regular mapping. This is typical in the smooth setting if the lower-level is strongly convex, which results in a smooth mapping $\Theta$.
A typical difficulty of general bilevel programs is that the resulting value function does not inherit the smoothness properties of its defining data $P$ and $Q$ as the argmin mapping  $\Theta$ may be poorly structured ---precisely because it corresponds to a critical set, here an argmin. The following proposition illustrates this behavior.
\begin{theorem}[Lower semicontinuous minimization problems are $C^\infty$  bilevel problems]
    \label{prop:reduction-lsc}
    Any proper lower semicontinuous function $f:\RR^n \to \RR\cup \{+\infty\}$ whose domain $\dom(f):= \{x \in \RR^n \mid f(x) \in \RR\}$ is closed, is the value function of a bilevel problem with $C^\infty$ smooth upper and lower levels $P,Q$ and $\cY = \RR^{3}$.  
\end{theorem}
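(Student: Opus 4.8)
The plan is to realize $f$ through its epigraph. Since $f$ is proper and lower semicontinuous, the set $\mathrm{epi}(f) = \{(x,t) \in \RR^n \times \RR : t \ge f(x)\}$ is nonempty and closed, and $\dom(f)$ is closed by hypothesis. I would first invoke Whitney's representation (\Cref{theorem:whitney-representation}) twice to obtain two $C^\infty$ functions $g \colon \RR^{n+1} \to [0,\infty)$ and $d \colon \RR^n \to [0,\infty)$ with $g^{-1}(0) = \mathrm{epi}(f)$ and $d^{-1}(0) = \dom(f)$. The guiding idea is that minimizing $g(x,\cdot)$ in the lower level carves out exactly the epigraph fiber over $x$, while reading off that fiber's $t$-coordinate at the upper level recovers $f(x) = \inf\{t : t \ge f(x)\}$.

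Concretely, I would use the lower variable $y = (t,u)$ (a value coordinate $t$ together with one auxiliary $u$, sitting inside the claimed $\cY = \RR^3$) and set
\[
  Q(x,t,u) = g(x,t) + \frac{d(x)}{1+u^2}, \qquad P(x,t,u) = t,
\]
both of which are $C^\infty$. For $x \in \dom(f)$ one has $d(x)=0$, so $Q(x,\cdot)$ reduces to $g(x,\cdot)\ge 0$, whose global minimum value $0$ is attained exactly on $\{(t,u) : t \ge f(x)\}$; hence $\Theta(x)$ is this fiber and $\valoptim(x) = \inf_{t \ge f(x)} t = f(x)$.

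The step I expect to be the real obstacle — and the reason the naive choice $Q = g$ fails — is the off-domain case $x \notin \dom(f)$, where $f(x) = +\infty$ and we must force $\Theta(x) = \emptyset$ so that $\valoptim(x) = +\infty$. There $g(x,t) > 0$ for every $t$, but Whitney's $g$ gives no control on $\inf_t g(x,t)$, which might be positive and attained, producing a spurious nonempty argmin with the wrong value. The auxiliary term is designed precisely to break attainment: since $d(x) > 0$, one checks that $\inf_{t,u} Q(x,t,u) = \inf_t g(x,t)$ with this infimum approached only as $|u| \to \infty$, hence never attained, so $\Theta(x) = \emptyset$. Making this non-attainment argument rigorous, and verifying it does not perturb the in-domain analysis, is the crux of the proof.

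Finally, I would dispatch the pessimistic, upper-semicontinuous case by symmetry: applying the same construction to the hypograph $\mathrm{hyp}(f) = \{(x,t): t \le f(x)\}$ (closed since $f$ is usc) with the same auxiliary penalty and $P(x,t,u)=t$ yields $\valpessim(x) = \sup_{t \le f(x)} t = f(x)$ on $\dom(f)$, and $\Theta(x) = \emptyset$ off it, so that $\valpessim(x) = -\infty = f(x)$ there. A harmless unused third lower coordinate (or, if one prefers, parametrizing the ray by a squared slack) brings the lower-level dimension to the stated $\cY = \RR^3$.
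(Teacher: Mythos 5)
Your proposal is correct and follows essentially the same route as the paper's proof: a Whitney function for the epigraph, a second Whitney function for $\dom(f)$, the projection $P = t$ as upper level, and a domain-keyed additive penalty whose infimum is zero but unattained off $\dom(f)$, forcing $\Theta(x) = \emptyset$ there so the convention yields $\valoptim(x) = +\infty$ (resp. $\valpessim(x) = -\infty$). The only difference is cosmetic: the paper's non-attainment gadget is the sum of squares $(g(x)y_2)^2 + (1 - y_2y_3)^2$ in two auxiliary variables, while yours is $d(x)/(1+u^2)$ in one.
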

%\jer{Jérôme suggère d'enlever ou de déplacer tout ce qui suit jusqu'à la partie rouge. Pour améliorer la lisibilité du théorème on peut enlever au dessus pessimists et dire dessous qu'on peut faire pareil. Il faut en effect réaliser que l'intro atteint les 4 pages ce qui est très long et peut enterrer notre papier. Les rapporteurs ont probablement raison d'appeler à la simplification} 
%In particular, any full domain lower semicontinuous function $f: \RR^n \to \RR$ is the value function of a bilevel problem with $C^\infty$ smooth upper and lower levels. %sam: j'ai rajoute ce commentaire trivial mais j'aime bien l'enonce sans discussion sur le domaine.
%The proof of \Cref{prop:reduction-lsc} is given in \Cref{appendix:regularity-not-good-enough}. In particular, \Cref{prop:reduction-lsc} 
%In particular, \emph{any method resembling a gradient algorithm on the value function may encounter insurmountable difficulties \textbf{even} if the problem data is arbitrarily smooth}.
{Further comments and the proof of \Cref{prop:reduction-lsc}, are postponed to \Cref{appendix:regularity-not-good-enough} in the appendix.  The main argument relies on the characterization of closed sets as the zero set of smooth functions due to Whitney (\Cref{theorem:whitney-representation}). }%The implication of \Cref{prop:reduction-lsc} is also discussed more thoroughly in \Cref{appendix:regularity-not-good-enough}.}  
{Let us mention, however, that the pathology of bilevel value functions can actually be much worse. %:  \Cref{prop:reduction-lsc} only shows an inclusion, and there may exist $C^\infty$ bilevel problems which lead to more serious pathologies. 
For example, it is known in probability theory that minimization processes can even destroy Borel measurability, see for example \cite[Chapter 7]{bertsekas1996stochastic}. One can use this to show that certain bilevel optimization problems are equivalent to a constrained optimization problem over a set that is not Borel measurable. }% In the same spirit, one could also consider weaker regularity for $P,Q$ (e.g., locally Lipschitz, analytical, lower-semicontinuous etc.).}%Yet lower semicontinuity remains central in optimization as it is one of the weakest sufficient condition to enforce existence of minimizers, and we leave a more precise analysis of an exact characterization of value functions representable with smooth bilevel programs for future research.}

\subsubsection*{Contributions} 
In light of the preceding examples, it is natural to turn towards more rigid classes $\{P,Q,\cY\}$ to hope for better results without compromising the model's applicability to concrete problems. Polynomial functions and sets constitute one of the simplest instances of such classes: they have a strict algebraic nature, yet they cover a wealth of concrete real-world applications. %The box-like choice of $\cY$ is for an easier analysis but we will see that even with this simplest constraint, polynomial bilevel optimization is already complex (cf. \Cref{theorem:closed=SA,theorem:compact=lsc+bc,theorem:hardness2}).
We actually pertain to box-constrained polynomial bilevel optimization, i.e., $P, Q$ are polynomials and $\cY$ has a box-like form $[a_1,b_1] \times \ldots [a_m, b_m]$ where $a_i, b_i \in \RR \cup \{\pm \infty\}, i = 1, \ldots, m$, independent of the upper level variable $x$. %Our choice of polynomials is motivated by the fact that it is a natural candidate to study bilevel optimization without strong convexity in lower-level problems. Indeed, bilevel optimization with usual regularity assumptions (such as smooth - $C^\infty$ functions $P, Q$) remains challenging for both algorithmic design and analysis. In fact, one can reduce the problem of optimizing a lower semicontinuous (resp. upper semicontinuous) to the bilevel problem with smooth functions $P,Q$. \TL{This claim can be formally stated as follows:}
We investigate the worst-case behavior of such  bilevel polynomial optimization in two different ways:
\begin{enumerate}
    \item \textbf{Geometric complexity}: Solving a bilevel problem is equivalent to optimizing its value function $\valbilevel$. Loosely speaking, we show that box-constrained polynomial bilevel programming is equivalent to the optimization of an arbitrary real semi-algebraic function (non necessarily continuous). In other words, arbitrary polynomial bilevel programming is not tractable. We actually provide sharp characterizations of the class of value functions of \eqref{eq:simple-bilevel-poly-optim} in various scenarios, these results are given in Sections \ref{sec:geometricHardness}.
    \item \textbf{Computational complexity}: We investigate \eqref{eq:simple-bilevel-poly-optim} along the angle of the classical computational complexity theory \cite{arora2009computational}. Our second main result asserts that the decision version of bilevel polynomial optimization is $\existleveltwo$-hard. This implies that bilevel polynomial optimization is more challenging than many $\textbf{NP}$-complete problems.\footnote{Assuming that the polynomial hierarchy does not collapse on the first level, for which a broad consensus exists.}% A more detailed discussion about this and other related results will be provided latter in the introduction as well as in \Cref{subsec:premilinary-poly-hierarchy}.}
\end{enumerate}

\subsection*{A brief discussion about the connection to existing work}
%\jer{Les deux paragraphs qui suivent sont très long alors qu'on veut juste dire : en général les algo vont pas marcher et une façon d'éviter cela c'est les QC, ce qui est au fond standard}
{In order to avoid facing the  pathologies we present, many work rely on strong or complex assumptions on the inner level. We evoke this briefly below; a more detailed discussion is provided in \Cref{appendix:existing-works}. \\
 To avoid ``monsters" induced by the argmin operator, there is quite important literature which focuses on the establishment of necessary conditions for locally optimal solutions. They are connected, implicitly or explicitly, to the so-called qualification conditions, see e.g.,   \cite{ye1995optimality,dempe2012sensitivity,Mordukhovich2020,hendrion2011calmness,chen1995nonlinear}.  Algorithms for bilevel optimization, especially when the lower-level problem is nonconvex, are quite demanding in terms of regularity conditions; some examples can be found in \cite{liu2021towards,liu2021valuefunction,arbel:hal-03869097,liu2020generic, guihua2014solving}. Some of these works are tailored for polynomial bilevel optimization \cite{jeyakumar2016convergent,jiawang2017bilevel} but mostly through algebraic techniques. In any case,  theoretical guarantees for solution methods can only be established under (very) strong qualification assumptions or for relaxed problems (e.g., $y$ is only a stationary point of $Q(x, \cdot)$). This phenomenon is, of course, consistent with the pathologies highlighted in \Cref{prop:reduction-lsc} and research on the computational complexity of bilevel optimization \cite{jeroslow1985polynomial,bard1991property,ben-ayed1990computational,lodi2013bilevel,caprara2014study}.}

\section{Geometric hardness of polynomial bilevel optimization}
\label{sec:geometricHardness}

Solving a bilevel problem is equivalent to optimizing the corresponding value function, $\valbilevel$, which depends solely on the upper-level variables. In this section, we study the complexity of the bilevel polynomial optimization problem by investigating the sets of value functions that can be ``expressed'' via polynomial bilevel formulations. 

% To ensure that $\valoptim$ or $\valpessim$ are well-defined, we will work under the following assumption:
% \begin{assumption}
%     \label{assumption:well-definedness}
%     The functions $P,Q$ and lower-level feasible set $\cY$ satisfy that the lower-level solution set $\Theta(\firstvar)$ (cf. \eqref{eq:simple-bilevel-poly-optim}) is non-empty for all $\firstvar$ and $P(\firstvar,\cdot)$ attains its min/max values in $\Theta(\firstvar)$.
% \end{assumption}

% We remark that if $P,Q$ are continuous with $\cY$  nonempty and compact, then \Cref{assumption:well-definedness} is automatically satisfied. We focus on subclasses of polynomial bilevel problems with variations of boundedness and convexity criteria. 

First, we distinguish situations for which the feasible set of the lower-level optimization problem, $\cY$, is a box of the form $[a_1,b_1] \times \ldots [a_m, b_m]$ which can be bounded or unbounded. More specifically, we say that $\cY$ is an \emph{unbounded box} (or $\cY$ is \emph{unbounded}) if there exists an index $1 \leq i \leq m$ such that $a_i = -\infty$ or $b_i = +\infty$. Otherwise, $\cY$ is a \emph{bounded box} (or simply, \emph{bounded}). It is worth mentioning that, since we are investigating worst case properties, the restriction to box type constraints is actually a strength and our results hold for more general constraint sets. 

Second, we restrict our attention to convex lower-level problems. This means that the lower-level objective $Q(\firstvar, \secondvar)$ is convex with respect to its second argument $\secondvar$ and the constraint set $\cY$ is also convex. This represents a natural intermediate situation between strongly convex lower levels and general nonconvex ones. In this setting, we make a distinction between bounded and unbounded box on the one hand, as well as general compact convex sets on the other hand.

%It is worth emphasizing that our results do not contradict the existence of efficient algorithms \cite{Dagreou2022SABA,Dagreou2023SRBA,Liu2022investigating,gould2016differentiating,ablin2020super-efficiency,Blondel2021EfficientAM} in the literature because they are proposed for the lower-level \emph{strongly} convex problems. 

Our analysis provides a sharp characterization of the class of functions which can be represented as the value function of polynomial bilevel programs: we show that for the most natural types of bilevel programs, this class is the largest possible. This is described in \Cref{subsection:closed-results}. Furthermore, our analysis shows that enforcing convexity of the lower-level problems does not reduce much the corresponding class of polynomial bilevel value functions. %This is in stark contrast with strong convexity for which there is an equivalent single level equality constrained formulation with algorithmic solutions \cite{Dagreou2022SABA,Dagreou2023SRBA,Liu2022investigating,gould2016differentiating,ablin2020super-efficiency,Blondel2021EfficientAM}. 
This is discussed in \Cref{subsection:compact-results}, which is more technical than the general case for which our results are sharper. In the convex setting, we distinguish between box-constrained convex lower level, which we relate to the class of piecewise polynomial functions, and general convex compact semi-algebraic set for which we obtain a sharper characterization.

The rest of this section is organized as follows. We start in \Cref{sec:preliminaries} by providing readers with preliminaries on the mathematical tools required for our analysis: classical results from semi-algebraic geometry and bilevel optimization. The representation results for general polynomial bilevel problems are given in \Cref{subsection:closed-results} and the specification to convex lower-level problems is described in \Cref{subsection:compact-results}. A summary of all the results of this section is provided in \Cref{subsection:summary_results}.

\subsection{Preliminaries}
\label{sec:preliminaries}
First, we will recall the basic definitions and results of semi-algebraic geometry and their consequences for polynomial bilevel programming. Second, we remind the readers of a simplified version of Berge's Maximum Theorem \cite[Section 6.3]{maximum_theorem} and its consequence on semicontinuity properties of bilevel programs. These classical results play a central role in our analysis, and we provide a detailed account for completeness. The informed reader may skip this subsection.

\subsubsection*{Semi-algebraic geometry and polynomial bilevel value functions} 
We provide some definitions and facts about semi-algebraic geometry. For an excellent exhibition of this subject, we refer readers to \cite{realgeoalg}.

\begin{definition}[Semi-algebraic sets and functions]
    \label{def:semi-algebraic-set}
    A set $S \in \RR^n$ is a \emph{basic} semi-algebraic set if it has the form:
    \begin{equation*}
        \cS = \{x \in \RR^n \mid P(x) = 0 \text{ and } Q_{j}(x) > 0, j \in J\},
    \end{equation*}
    where $J$ is a finite index set the functions $P, Q_{j}, j \in J$ are polynomials. A semi-algebraic set is a finite union of basic semi-algebraic sets.  
    
    A function $f: U \subseteq \RR^n \to \RR^m$ is a semi-algebraic function if $\graph f:=\{(x,f(x)) \in \RR^n \times \RR^m \mid x \in U\}$ is a semi-algebraic set of $\RR^{n + m}$. 
\end{definition}

One can even define a semi-algebraic set as a \emph{disjoint union} of finite basic semi-algebraic sets (and not just a \emph{finite union} as in \Cref{def:semi-algebraic-set}). In order to be self-contained, we provide a proof for this claim in \Cref{appendix:alternative-def}. In the following, we also consider a definition for extended-real-valued semi-algebraic functions. 

\begin{definition}[Extended-real-valued semi-algebraic functions]
    \label{def:extended-semi-algebraic-function}
    A function $f: \RR^n \to \RR \cup \{\pm \infty\}$ is an extended-real-valued semi-algebraic function if three sets $\dom(f) := \{x \in \RR^n \mid f(x) \in \RR\}, \dom(f)^+:= \{x \in \RR^n \mid f(x) = +\infty\}$ and $\dom(f)^-:= \{x \in \RR^n \mid f(x) = -\infty\}$ are semi-algebraic and the function $f|_{\dom(f)}$ (the restriction of $f$ to $\dom(f)$) is semi-algebraic (cf. \Cref{def:semi-algebraic-set}). The class of all extended-real-valued semi-algebraic functions is denoted by $\SA$. For a semi-algebraic function $f \colon \RR^p \to \RR$, we have  $\dom(f)^+ = \dom(f)^- = \emptyset$, such functions are called real-valued by oposition to extended-real-valued.
\end{definition}

 A fundamental result in semi-algebraic geometry is the Tarski-Seidenberg theorem, proving the stability of semi-algebraic sets under projection (and consequently, first-order logic).

\begin{theorem}[Tarski-Seidenberg theorem]
    \label{theorem:projection-theorem}
    Let $S \subseteq \RR^n$ be a semi-algebraic set, and $\pi: \RR^n \to \RR^{n-1}$ be the projection onto the first $n-1$ coordinates. Then $\pi(S) = \{x \in \RR^{n - 1} \mid \exists y \in \RR, (x,y) \in S\}$ is also semi-algebraic.
\end{theorem}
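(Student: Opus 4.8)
The plan is to read the statement as a quantifier-elimination result: for $x \in \RR^{n-1}$, membership $x \in \pi(S)$ is exactly the truth of the formula $\exists y \in \RR,\ (x,y) \in S$, so the theorem amounts to showing that the class of semi-algebraic sets is closed under existential projection. First I would reduce to a single basic piece. Since $S$ is a finite union of basic semi-algebraic sets (\Cref{def:semi-algebraic-set}) and projection commutes with finite unions ($\pi(A \cup B) = \pi(A) \cup \pi(B)$), the set $\pi(S)$ is the union of the projections of the pieces; hence it suffices to treat $S = \{(x,y) : P(x,y) = 0,\ Q_j(x,y) > 0,\ j \in J\}$.

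Next I would view $P$ and the $Q_j$ as univariate polynomials in $y$ whose coefficients are polynomials in the parameter $x$. The existence of a value $y$ satisfying the prescribed equality and the strict inequalities is entirely governed by the real-root structure of these univariate polynomials and by the signs they take on the bands between consecutive roots. The crucial observation is that this combinatorial data depends only on the sign pattern, in the variable $x$, of a finite auxiliary family of polynomials: the leading coefficients together with the principal subresultant coefficients (equivalently, Sturm-type remainders) obtained from $P$, the $Q_j$, and their $y$-derivatives $\partial_y P,\ \partial_y Q_j$.

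The technical workhorse is a cylindrical-type decomposition of Cohen--H\"ormander flavour. I would construct, by induction on the degrees and the number of polynomials in the family, a finite partition of $\RR^{n-1}$ into semi-algebraic cells $C_1, \ldots, C_N$ over which all of this combinatorial data is constant: the number of real roots of each polynomial in $y$, their relative order, and the sign of each polynomial on each band between consecutive roots. The induction descends by replacing a polynomial with its $y$-derivative and with Euclidean remainders of strictly smaller degree, and by splitting $\RR^{n-1}$ according to whether leading coefficients vanish; the inductive hypothesis then yields a partition adapted to the smaller family, which one refines to capture the roots of the original family. Once the decomposition is in hand, on each cell $C_i$ the answer to the question ``does a suitable $y$ exist?'' is a single yes/no, so $\pi(S)$ is exactly the union of the cells answering yes, which is semi-algebraic.

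The main obstacle is precisely the construction and correctness of this decomposition: one must control how real roots can collide, emerge, or escape to infinity as $x$ crosses from one cell to another, and guarantee that the sign data is genuinely locally constant on each cell. This is where the subresultant/Sturm machinery is essential and where the bookkeeping in the induction --- especially the degenerate strata on which leading coefficients drop and the degree in $y$ decreases --- becomes delicate. An alternative, less computational route is to invoke Tarski's quantifier-elimination theorem for the first-order theory of real closed fields directly, but this essentially repackages the same content and still rests on the same real-root-counting core.
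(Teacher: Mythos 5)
The first thing to note is that the paper offers no proof of this statement at all: the Tarski--Seidenberg theorem is recalled in the preliminaries as a classical fact of semi-algebraic geometry, with the burden carried by the literature (e.g., the cited real algebraic geometry references). So there is no ``paper proof'' to compare yours against; the comparison can only be between your outline and the standard classical argument. On that score, your skeleton is the right one: reduce to a single basic semi-algebraic piece via $\pi(A \cup B) = \pi(A) \cup \pi(B)$, regard $P$ and the $Q_j$ as univariate polynomials in $y$ with coefficients polynomial in $x$, and argue that the existence of an admissible $y$ is determined by finitely many polynomial sign conditions in $x$, established through a Cohen--H\"ormander-style cell decomposition. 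The reduction step and the final ``union of the cells answering yes'' conclusion are correct as stated.

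However, judged as a proof rather than a plan, there is a genuine gap, and you identify it yourself: the inductive construction and correctness of the sign-invariant decomposition is the entire mathematical content of the theorem, and it is described but never executed. The claim that the real-root structure of the family ``depends only on'' leading coefficients and subresultant/Sturm data is exactly the assertion requiring proof; one must actually run the induction (descending via $y$-derivatives and Euclidean remainders, splitting along the vanishing of leading coefficients, proving that the number and ordering of roots and the inter-root sign bands are locally constant on each cell, and verifying that the cells themselves are semi-algebraic) before anything is established. The alternative you offer --- invoking Tarski's quantifier-elimination theorem directly --- is not an independent route but a restatement: closure of semi-algebraic sets under projection \emph{is} the geometric form of quantifier elimination over real closed fields, so citing it amounts to citing the theorem being proved. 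In short: correct strategy, correct reductions, but the load-bearing lemma is missing; as it stands the proposal is an accurate roadmap of the classical proof rather than a proof.
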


Besides general semi-algebraic sets, closed semi-algebraic sets have a more particular form, given in \Cref{lemma:closed-semi-algebraic-set}.
\begin{proposition}[Characterization of closed semi-algebraic sets {\cite[Exercise 2.5.7]{benedetti1990real}}]
    \label{lemma:closed-semi-algebraic-set}
    Every \emph{closed} semi-algebraic set $S$ in $\RR^n$ can be represented in the form:
    \begin{equation}
        \label{eq:form-of-closed-semi-algebraic-sets}
        S = \bigcup_{i \in I} {\cS_i \qquad \text{ where } \qquad \cS_i :=} \bigcap_{j \in J} \{x \in \RR^n \mid P_{ij}(x) \geq 0\},
    \end{equation}
    $P_{ij}$ are polynomials and $I,J$ are finite index sets.
\end{proposition}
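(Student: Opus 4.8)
The plan is to prove the two directions of the characterization separately. The nontrivial inclusion is that any closed semi-algebraic set $S$ admits a representation of the form \eqref{eq:form-of-closed-semi-algebraic-sets}; the reverse inclusion (that any such union-of-intersections of sublevel sets $\{P_{ij} \geq 0\}$ is closed and semi-algebraic) is immediate, since each $\{x \mid P_{ij}(x) \geq 0\}$ is closed and semi-algebraic, finite intersections preserve both closedness and semi-algebraicity, and finite unions preserve both as well. So the work lies entirely in producing the representation.

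First I would start from the general decomposition of a semi-algebraic set as a finite union of basic semi-algebraic sets (\Cref{def:semi-algebraic-set}). The issue is that the defining conditions use \emph{strict} inequalities $Q_j > 0$ and equalities $P = 0$, so each basic piece need not be closed, and simply taking the closure of the union is not obviously expressible in the desired form. The natural route is to work at the level of polynomials: since $S$ is closed, I want to replace strict inequalities by non-strict ones without changing the set. The key structural tool is that, up to semi-algebraic manipulations, membership in $S$ is a Boolean combination of sign conditions on finitely many polynomials $g_1, \dots, g_k$. Writing $S$ in disjunctive normal form over the atoms $\{g_i > 0\}, \{g_i = 0\}, \{g_i < 0\}$, one obtains $S$ as a finite union of sign-condition cells.

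The heart of the argument is then to show that, because $S$ is closed, one can rewrite each cell (or the whole union) using only $\{P_{ij} \geq 0\}$ atoms. Here I would invoke the standard fact (Łojasiewicz / finiteness of connected components, or the primitive-element-style argument in \cite{benedetti1990real,realgeoalg}) that a closed semi-algebraic set is a finite union of sets each cut out by non-strict polynomial inequalities: intuitively, the closure operation can be pushed onto the boundary polynomials, and an equality $P = 0$ is encoded as the pair of inequalities $P \geq 0$ and $-P \geq 0$, i.e. as $\{P \geq 0\} \cap \{-P \geq 0\}$, which already has the admissible form. Strict inequalities $Q_j > 0$ appearing in a \emph{closed} set are the delicate part, and the fact that $S$ is closed is exactly what lets us absorb them: on the closed set, the relevant strict conditions can be relaxed to $\geq 0$ because the boundary points added by the relaxation already belong to $S$ (this is where closedness is used essentially, and this is the main obstacle—controlling which boundary points the relaxation introduces).

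Finally, I would assemble the pieces: each cell becomes a finite intersection $\bigcap_{j \in J}\{P_{ij} \geq 0\}$ after the relaxation and after encoding equalities as two-sided inequalities, and $S$ becomes the finite union $\bigcup_{i \in I}$ over the cells, yielding exactly \eqref{eq:form-of-closed-semi-algebraic-sets}. Since this is a cited result (\cite[Exercise 2.5.7]{benedetti1990real}), I would in practice give the short argument above and defer the detailed cell-by-cell verification to the reference, emphasizing that the single nontrivial ingredient is the use of closedness to relax strict inequalities to non-strict ones.
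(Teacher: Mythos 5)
There is a genuine gap, and it sits exactly where you flagged ``the main obstacle.'' The step your argument hinges on --- \emph{``on the closed set, the relevant strict conditions can be relaxed to $\geq 0$ because the boundary points added by the relaxation already belong to $S$''} --- is unjustified, and as stated it is false for an arbitrary representation. What closedness of $S$ actually gives you is that the \emph{closure} $\overline{A_i}$ of each basic piece $A_i$ is contained in $S$; but the relaxed set obtained by replacing each condition $Q_{ij} > 0$ with $Q_{ij} \geq 0$ can strictly contain $\overline{A_i}$, and the extra points need not lie in $S$. A one-dimensional counterexample: the closed semi-algebraic set $S = \{x \in \RR \mid x \geq 1\}$ can be written as the union of basic sets $\{x \in \RR \mid x^2(x-1) > 0\} \cup \{x \in \RR \mid x - 1 = 0\}$; relaxing the strict inequality in the first piece yields $\{x \in \RR \mid x^2(x-1) \geq 0\} = \{0\} \cup \{x \geq 1\}$, which introduces the point $0 \notin S$. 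So whether relaxation works depends on the representation you start from, and the proposition is precisely the assertion that \emph{some} representation by non-strict inequalities exists. Your fallback --- invoking ``the standard fact that a closed semi-algebraic set is a finite union of sets each cut out by non-strict polynomial inequalities'' --- is circular: that fact is verbatim the statement to be proved (the finiteness theorem, \cite[Theorem 2.7.2]{realgeoalg}), and its known proofs are genuinely nontrivial, proceeding by induction on dimension via cylindrical/cell decomposition or Łojasiewicz-type separation arguments, not by a local relaxation of a given formula.

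For context: the paper itself offers no proof of \Cref{lemma:closed-semi-algebraic-set}; it is used as a black box with the citation \cite[Exercise 2.5.7]{benedetti1990real}. Deferring to the reference is therefore entirely consistent with the paper's treatment --- but then the deferral \emph{is} the whole content, and the relaxation sketch should not be presented as the argument, since it would not survive scrutiny. (Also, the converse inclusion you verify at the outset --- that sets of the form \eqref{eq:form-of-closed-semi-algebraic-sets} are closed and semi-algebraic --- is not part of the claim; the proposition asserts only the existence of the representation, so that paragraph, while correct, proves nothing that is needed.)
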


Finally, we recall a result of the growth of a semi-algebraic function.
\begin{proposition}[Growth of semi-algebraic functions {\cite[Section $4.12$]{o-minimal-structures}}]
    \label{lemma:growth-semi-algebraic-functions}
    For every semi-algebraic function $f:\RR^n \to \RR$, there exists a natural number $N$ and a constant $C$ such that $|f(x)| \leq \|x\|^N$ for all $x, \|x\| \geq C$ where $\|\cdot\|$ indicates the Euclidean norm.  
\end{proposition}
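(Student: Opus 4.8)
The plan is to reduce the $n$-variable statement to a one-dimensional one, and then to invoke the fact that a semi-algebraic function of a single real variable cannot grow faster than a fixed power of its argument at infinity. The reduction is carried out through the spherical supremum of $|f|$, whose semi-algebraicity is guaranteed by the Tarski--Seidenberg theorem.

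First I would introduce the function $\theta(r) := \sup\{|f(x)| : \|x\| = r\}$ for $r \ge 0$ and show it is semi-algebraic. Consider the set
\[
A := \bigl\{(r,s) \in \RR^2 \;:\; r \ge 0,\ s \ge 0,\ \exists x \in \RR^n \ (\,\|x\|^2 = r^2 \ \wedge\ f(x)^2 \ge s^2\,)\bigr\}.
\]
Since $\graph f$ is semi-algebraic by hypothesis, the set inside the braces (before the existential quantifier) is semi-algebraic in $\RR^{n+2}$; projecting out the $x$-coordinates and applying \Cref{theorem:projection-theorem} iteratively shows that $A$ is semi-algebraic. As $\theta(r) = \sup\{s : (r,s) \in A\}$, both $\graph \theta$ and the locus $\{r : \theta(r) = +\infty\}$ are semi-algebraic. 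Because $|f(x)| \le \theta(\|x\|)$ holds by construction, it suffices to produce $N$ and $C$ with $\theta(r) \le r^N$ for all $r \ge C$, i.e. to prove the one-variable growth bound for the semi-algebraic function $\theta$.

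For the one-variable step (the crux), I would argue as follows. For each integer $N$, the set $E_N := \{r > 0 : \theta(r) > r^N\}$ is semi-algebraic, hence a finite union of points and intervals; thus either $E_N$ is bounded above (and then this $N$ works past its supremum) or $E_N \supseteq (b_N, +\infty)$. If no $N$ works, then $\theta$ eventually exceeds every power of $r$. To contradict this I would use that the discontinuity set of a one-variable semi-algebraic function is finite, so $\theta$ is continuous on a half-line $(a, +\infty)$, and there admits a Puiseux expansion at infinity: after the substitution $r = 1/u$, $\theta(1/u)$ coincides near $u = 0^+$ with a convergent series $\sum_{k \ge k_0} c_k u^{k/q}$, whence $\theta(r) \sim c\,r^{-k_0/q}$ is dominated by a fixed power $r^{N_0}$ --- contradiction. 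This is precisely the polynomial boundedness of the semi-algebraic o-minimal structure, and it closes the argument.

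The main obstacle I expect is not the Puiseux/monotonicity analysis, which is routine once we are on a half-line, but rather establishing that $\theta(r)$ is \emph{finite} for all large $r$, so that the one-variable machinery applies at all; a priori $\theta$ could take the value $+\infty$ on an unbounded set. This finiteness is exactly where the standing assumption that $f$ is real-valued on all of $\RR^n$ is used: on each sphere $\{\|x\| = r\}$, which is compact, $f$ attains a finite maximum (in the continuous case by Weierstrass, and more generally provided $f$ is locally bounded outside a bounded ball), so $\{\theta = +\infty\}$ is bounded and the Puiseux step may be localized near $+\infty$. Once finiteness is secured, combining the two steps yields $|f(x)| \le \theta(\|x\|) \le \|x\|^N$ for all $\|x\| \ge C$, as claimed.
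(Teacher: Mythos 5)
Your overall route (spherical supremum $\theta$, Tarski--Seidenberg to make $\theta$ semi-algebraic, then one-variable polynomial boundedness via the monotonicity theorem and a Puiseux expansion at infinity) is the standard argument for this kind of statement, and the paper itself offers no proof---the proposition is quoted from the literature---so the comparison is with the statement itself. Your one-variable step is sound. The problem is exactly the point you flagged and then talked yourself out of: the finiteness of $\theta(r) = \sup\{|f(x)| : \|x\|=r\}$. Compactness of the sphere does \emph{not} give this, because a semi-algebraic function $f:\RR^n\to\RR$ in the sense of \Cref{def:semi-algebraic-set} need not be continuous, nor even locally bounded; Weierstrass applies only under continuity, and your fallback clause ``provided $f$ is locally bounded outside a bounded ball'' is an extra hypothesis that appears nowhere in the statement. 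Real-valuedness at every point does not imply local boundedness.

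In fact the gap cannot be closed, because the proposition as literally stated fails for $n \geq 2$. Take $f:\RR^2\to\RR$ with $f(x_1,x_2) = 1/x_1$ for $x_1 \neq 0$ and $f(0,x_2)=0$; its graph is the union of the algebraic sets $\{x_1 t = 1\}$ and $\{x_1 = 0,\, t = 0\}$, hence $f$ is semi-algebraic and everywhere real-valued. Yet for any $N$ and $C$, the points $(\varepsilon, C)$ satisfy $\|(\varepsilon,C)\| \geq C$ while $|f(\varepsilon, C)| = 1/\varepsilon \to +\infty$ as $\varepsilon \to 0^+$, whereas $\|(\varepsilon,C)\|^N \to C^N$; equivalently, your $\theta$ is identically $+\infty$ on $(0,+\infty)$, so no polynomial bound at infinity can hold. (For $n=1$ the statement is true without any continuity assumption, since spheres are finite sets; the difficulty is genuinely multivariate.) The statement becomes correct---and your proof then goes through essentially verbatim---once one adds the hypothesis that $f$ is bounded on compact sets, or locally bounded, or continuous, since then $\theta$ is finite-valued and semi-algebraic and the Puiseux step applies. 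This is harmless for the paper, which only invokes the proposition for functions in $\SA \cap \LSC \cap \BC$ (\Cref{theorem:compact=lsc+bc}) or for maxima of finitely many polynomials (\Cref{theorem:convex-compact-bilevel-class}), both bounded on compacta; but as a proof of the proposition as stated, yours has an unfixable hole at precisely the step you identified as the crux.
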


We conclude this section with an application of the Tarski-Seidenberg theorem to justify that value functions of polynomial bilevel problems are semi-algebraic (possibly with extended-real values $\pm \infty$). Consider bilevel optimization: given $P, Q, \cY$, there is a partition of $\RR^n$ into three disjoint sets $\dom(\valbilevel)$, $\dom(\valbilevel)^+$, and $\dom(\valbilevel)^-$ defined as follows:
\begin{enumerate}
    \item $\dom(\valbilevel):= \{\firstvar \in \RR^n \mid \exists \delta \in \RR, \forall \secondvar \in \Theta(\firstvar), \delta \leq P(\firstvar, \secondvar) \text{ and } \forall \epsilon > 0, \exists \secondvar' \in \Theta(\firstvar), \delta + \epsilon > P(\firstvar, \secondvar')\}$: if $\firstvar \in \dom(\valbilevel)$, then $\valbilevel(\firstvar) \in \RR$. 
    \item $\dom(\valbilevel)^+ := \{\firstvar \in \RR^n \mid \nexists \secondvar \in \Theta(\firstvar)\}$: if $\firstvar \in \dom(\valbilevel)^+$, then $\valbilevel(\firstvar) = +\infty$. 
    \item $\dom(\valbilevel)^- := \{\firstvar \in \RR^n \mid \forall \delta \in \RR, \exists \secondvar \in \Theta(\firstvar), P(\firstvar, \secondvar) < \delta\}$: if $\firstvar \in \dom(\valbilevel)^-$, then $\valbilevel(\firstvar) = -\infty$.
\end{enumerate}
where $\Theta(\firstvar) = \{y \in \cY \mid \forall \secondvar' \in \cY, Q(\firstvar,\secondvar) \leq Q(\firstvar,\secondvar')\}$ is also a first-order logic expression (thus, semi-algebraic). Thus, all three sets are semi-algebraic. Moreover, the graph of the restriction of $\valbilevel$ to its domain $\dom(\valbilevel)$ is given by:
\begin{equation*}
    \{(\firstvar,P(\firstvar,\secondvar)) \in \RR^n \times \RR \mid \firstvar \in \dom(\valbilevel), \secondvar \in \Theta(\firstvar) \text{ and }  \forall \secondvar' \in \Theta(\firstvar), P(\firstvar,\secondvar) \leq P(\firstvar,\secondvar')\}. 
\end{equation*}
Using Tarski-Seidenberg quantifier elimination, we have the following proposition:
\begin{proposition}[Semi-algebraicity of value functions]
    \label{cor:upper-bound-nonconvex-class}
    The value functions of any polynomial bilevel optimization problem with box constraint is semi-algebraic.%, i.e.:
    %\TL{All function in $\cP^{\optim}_{\unbounded}$ (resp. $\cP^{\pessim}_{\unbounded}$) are semi-algebraic, i.e.}
    % \begin{equation*}
    %     \cP^{\formule}_{\property} \subseteq \SA \qquad \qquad \text{and} \qquad \qquad \cC^{\formule}_{\property} \subseteq \SA,
    % \end{equation*}
    % \TL{for all $\formule \in \{\optim, \pessim\}$ and $\property \in \{\bounded, \unbounded\}$.}
\end{proposition}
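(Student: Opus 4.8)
The plan is to invoke the principle of quantifier elimination for semi-algebraic sets: any subset of $\RR^n$ defined by a first-order formula whose atomic predicates are polynomial equalities and inequalities is semi-algebraic. Once this principle is available, the proposition is immediate, because the paragraph preceding the statement has already exhibited explicit such first-order formulas for each of the three pieces $\dom(\valoptim)$, $\dom(\valoptim)^+$, $\dom(\valoptim)^-$ and for the graph of $\valoptim|_{\dom(\valoptim)}$ (and symmetrically for $\valpessim$), so that it only remains to verify Definition \ref{def:extended-semi-algebraic-function}.

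First I would establish the Boolean-algebra closure properties. By Definition \ref{def:semi-algebraic-set} a semi-algebraic set is a finite union of basic pieces, so finite unions and, by distributivity, finite intersections of semi-algebraic sets are again semi-algebraic. The complement of a basic semi-algebraic set is a finite union of basic sets obtained by negating each defining (in)equality --- replacing $P = 0$ by $P > 0$ or $-P > 0$, and $Q_j > 0$ by $Q_j = 0$ or $-Q_j > 0$ --- so by De Morgan the complement of any semi-algebraic set is semi-algebraic. Combined with Tarski--Seidenberg (\Cref{theorem:projection-theorem}), which eliminates the existential quantifier $\exists y \in \RR$, one obtains closure under the universal quantifier as well via $\forall y\, \phi \equiv \neg\, \exists y\, \neg \phi$. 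Iterating over the finitely many bound variables shows that the solution set of any first-order formula built from polynomial atoms is semi-algebraic.

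Next I would apply this to the data at hand. Since $\cY$ is a box it is semi-algebraic, and $P, Q$ are polynomials, so the lower-level argmin graph $\{(x,y) : y \in \Theta(x)\} = \{(x,y) : y \in \cY \wedge \forall y' \in \cY,\ Q(x,y) \le Q(x,y')\}$ is the solution set of a first-order polynomial formula, hence semi-algebraic; in particular ``$y \in \Theta(x)$'' is a legitimate semi-algebraic predicate that may be used freely inside further formulas. Substituting this predicate into the explicit descriptions of $\dom(\valoptim)$, $\dom(\valoptim)^+$, $\dom(\valoptim)^-$ and of the graph of $\valoptim|_{\dom(\valoptim)}$ given above, each of these sets is again the solution set of a first-order polynomial formula and is therefore semi-algebraic. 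By Definition \ref{def:extended-semi-algebraic-function} this gives $\valoptim \in \SA$; the pessimistic value function $\valpessim$ is treated identically, exchanging the roles of infima and suprema (equivalently, replacing $P$ by $-P$).

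The only genuine content is the quantifier-elimination principle, and within it the one step that is not purely formal is closure under complementation, since \Cref{theorem:projection-theorem} as stated supplies only projection, i.e.\ existential elimination. Everything else is the bookkeeping of recasting the already-written quantified descriptions into the format demanded by Definition \ref{def:extended-semi-algebraic-function}. I expect no further obstacle; in particular, the boundedness or unboundedness of the box plays no role, which is precisely why the statement holds for arbitrary box constraints.
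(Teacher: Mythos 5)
Your proposal is correct and follows essentially the same route as the paper: both express $\Theta$, the three pieces of the partition $\dom(\valoptim)$, $\dom(\valoptim)^+$, $\dom(\valoptim)^-$, and the graph of $\valoptim|_{\dom(\valoptim)}$ as first-order formulas with polynomial atoms (using that a box is semi-algebraic) and then invoke Tarski--Seidenberg to conclude semi-algebraicity in the sense of \Cref{def:extended-semi-algebraic-function}. The only difference is that you make explicit how full quantifier elimination, including universal quantifiers, follows from the projection theorem combined with Boolean closure of semi-algebraic sets, a step the paper leaves implicit.
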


\subsubsection*{Berge's Maximum Theorem and semicontinuity of bilevel value functions} 
For the analysis in the bounded setting, we will use Berge's maximum theorem. Its presentation involves the notion of outer semicontinuity of compact set-valued maps.

\begin{definition}[Outer semicontinuity]
    A compact set-valued map $\Theta: \cX \rightrightarrows \cY$ is called \emph{outer semicontinuous} if, and only if, for all sequences $(\firstvar_k)_{k \in \NN}$ of $\cX$ and $(\secondvar_k)_{k \in \NN}$ such that $\secondvar_k \in \Theta(\firstvar_k)$, if $\lim_{k \to \infty} \firstvar_k = \firstvar, \lim_{k \to \infty} \secondvar_k = \secondvar$, then $\secondvar \in \Theta(\firstvar)$.
\end{definition}

In the following, we provide a simplified version of Berge's Maximum Theorem to keep our discussion as simple as possible. 
\begin{theorem}[Berge's Maximum Theorem {\cite[Section 6.3]{maximum_theorem}}]
    \label{theorem:maximum_theorem}
    Consider a continuous function $g: \RR^n \times \RR^m$ and a compact set $\cY$. Define $\Theta(\firstvar)=\argmin\{g(\firstvar,\secondvar): \secondvar \in \cY\}$, we have $\Theta: \RR^n \rightrightarrows \cY$ is an outer semicontinuous set-valued mapping with non-empty and compact values.
\end{theorem}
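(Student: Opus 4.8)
The plan is to verify the three asserted properties in turn: nonemptiness and compactness of each value set $\Theta(\firstvar)$, and outer semicontinuity of the map $\Theta$. The first two are immediate consequences of the compactness of $\cY$ combined with the continuity of $g$, while the third is the only part requiring a genuine limiting argument.

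\textbf{Nonemptiness and compactness of the values.} Fix $\firstvar \in \RR^n$. The function $\secondvar \mapsto g(\firstvar,\secondvar)$ is continuous on the compact set $\cY$, so by the Weierstrass extreme value theorem it attains its infimum; hence $\Theta(\firstvar) \neq \emptyset$. Writing $m(\firstvar) := \min_{\secondvar \in \cY} g(\firstvar,\secondvar)$, one can express the value set as $\Theta(\firstvar) = \{\secondvar \in \cY \mid g(\firstvar,\secondvar) = m(\firstvar)\}$, i.e.\ as the intersection of $\cY$ with a preimage of a single point under the continuous map $g(\firstvar,\cdot)$. This is a closed subset of the compact set $\cY$, hence itself compact.

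\textbf{Outer semicontinuity.} Here I take sequences $(\firstvar_k)_{k\in\NN}$ and $(\secondvar_k)_{k\in\NN}$ with $\secondvar_k \in \Theta(\firstvar_k)$, $\firstvar_k \to \firstvar$, and $\secondvar_k \to \secondvar$, and I must show $\secondvar \in \Theta(\firstvar)$. First, since every $\secondvar_k$ lies in the closed set $\cY$, so does the limit $\secondvar$, so $\secondvar$ is feasible. It then remains to check that $\secondvar$ minimizes $g(\firstvar,\cdot)$ over $\cY$. I fix an arbitrary competitor $\secondvar' \in \cY$. Because $\secondvar_k \in \Theta(\firstvar_k)$, the defining optimality inequality gives $g(\firstvar_k,\secondvar_k) \leq g(\firstvar_k,\secondvar')$ for every $k$. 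Passing to the limit and invoking the joint continuity of $g$ --- so that $g(\firstvar_k,\secondvar_k) \to g(\firstvar,\secondvar)$ and $g(\firstvar_k,\secondvar') \to g(\firstvar,\secondvar')$ --- yields $g(\firstvar,\secondvar) \leq g(\firstvar,\secondvar')$. As $\secondvar'$ was arbitrary in $\cY$, this shows $\secondvar \in \Theta(\firstvar)$, establishing outer semicontinuity.

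\textbf{Where the difficulty (does not) lie.} In this simplified formulation there is no real obstacle: the whole argument rests on the stability of the minimization inequality under limits, which continuity of $g$ supplies directly. The only subtlety I would be careful about is keeping the competitor $\secondvar'$ fixed (a constant sequence in $k$) in the outer-semicontinuity step, so that the right-hand side converges to $g(\firstvar,\secondvar')$; this is legitimate precisely because the feasible set $\cY$ is independent of the parameter $\firstvar$. Finally, I note that compactness of $\cY$ is exactly what guarantees both that the limit $\secondvar$ stays feasible and that the infimum is attained, so that no coercivity or boundedness assumption on $g$ is required; dropping compactness is what would make a fuller version of Berge's theorem substantially more delicate.
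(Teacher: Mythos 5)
Your proof is correct. Note, however, that the paper does not actually prove this statement: it is quoted as a classical result (Berge's Maximum Theorem, cited from the literature) and used as a black box in the proof of the subsequent corollary on semicontinuity of bilevel value functions, so there is no in-paper argument to compare yours against. Your self-contained proof is the standard one: Weierstrass for nonemptiness, closedness of a level set inside a compact set for compactness of the values, and passage to the limit in the optimality inequality $g(\firstvar_k,\secondvar_k) \leq g(\firstvar_k,\secondvar')$ for outer semicontinuity. You also correctly identify the one point where the simplification matters: because $\cY$ does not depend on $\firstvar$, a fixed competitor $\secondvar' \in \cY$ remains feasible for every $k$, which is exactly what fails in the general Berge setting (where the constraint correspondence varies with the parameter and one needs its lower hemicontinuity to produce a feasible approximating sequence $\secondvar'_k \to \secondvar'$). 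This is consistent with how the paper uses the theorem, since all its bilevel formulations keep the lower-level feasible set independent of the upper variable.
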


As a consequence of Berge's Theorem, one obtains classical semicontinuity properties for bilevel value functions for which we provide a proof for completeness. Recall that a function $f: \RR^n \to \RR$ is lower semicontinuous if, and only if, its epigraph $\text{epi}(f):=\{(x, \alpha) \in \RR^n \times \RR \mid f(x) \leq \alpha\}$ is closed. %These notions allow establishing an ``upper-bound'' for $\cP^{\optim}_{\bounded}, \cP^{\pessim}_{\bounded}$.
% \begin{corollary}[Semicontinuity of bilevel value functions]
%     \label{lemma:upper-bound-compact-nonconvex-class}
%      Let $P, Q$ and $\cY$ as in \eqref{eq:simple-bilevel-poly-optim} be such that $P$ and $Q$ are continuous and $\cY$ is compact. Then, \Cref{assumption:well-definedness} is satisfied, and the corresponding value function $\valoptim$ (resp. $\valpessim$) for the optimistic bilevel problem \eqref{eq:optimistic-bilevel} (resp. pessimistic bilevel problem \eqref{eq:pessimistic-bilevel}) is lower (resp. upper) semicontinuous and bounded on every compact.
% \end{corollary}

\begin{corollary}[Semicontinuity of bilevel value functions with compact lower-level feasible sets]
    \label{lemma:upper-bound-compact-nonconvex-class}
     Let $P, Q$ and $\cY$ as in \eqref{eq:simple-bilevel-poly-optim} be such that $P$ and $Q$ are continuous and $\cY$ is compact. Then, for all $\firstvar \in \RR^n$, $\Theta(\firstvar)$ is non-empty and $P(\firstvar, \cdot)$ attains its minimum in $\Theta(\firstvar)$. Moreover, the corresponding value function $\valbilevel$ is lower semicontinuous and bounded on every compact.
\end{corollary}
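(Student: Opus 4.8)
The plan is to apply Berge's Maximum Theorem (\Cref{theorem:maximum_theorem}) directly to the lower-level problem to control $\Theta$, and then use its outer semicontinuity together with compactness to transfer regularity to the value functions. First I would observe that since $Q$ is continuous and $\cY$ is compact, \Cref{theorem:maximum_theorem} applied to $g = Q$ immediately yields that $\Theta(\firstvar) = \argmin\{Q(\firstvar,\secondvar) : \secondvar \in \cY\}$ is non-empty, compact-valued, and outer semicontinuous. Since $P(\firstvar,\cdot)$ is continuous and $\Theta(\firstvar)$ is non-empty and compact, $P(\firstvar,\cdot)$ attains both its minimum and maximum on $\Theta(\firstvar)$; hence $\valoptim(\firstvar) = \min_{\secondvar \in \Theta(\firstvar)} P(\firstvar,\secondvar)$ and $\valpessim(\firstvar) = \max_{\secondvar \in \Theta(\firstvar)} P(\firstvar,\secondvar)$ are finite real numbers for every $\firstvar$.

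Next I would establish boundedness on an arbitrary compact $K \subseteq \RR^n$. Since $\cY$ is compact and $K$ is compact, $K \times \cY$ is compact, so the continuous function $P$ is bounded on $K \times \cY$; as $\Theta(\firstvar) \subseteq \cY$, both $\valoptim$ and $\valpessim$ are trapped between these same bounds on $K$, giving boundedness on every compact set.

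The main work is the semicontinuity claim, which I would prove via the closedness of the epigraph (resp. hypograph) using a sequential argument. For lower semicontinuity of $\valoptim$, take a sequence $(\firstvar_k, \alpha_k) \in \text{epi}(\valoptim)$ with $\firstvar_k \to \firstvar$ and $\alpha_k \to \alpha$; I want $(\firstvar,\alpha) \in \text{epi}(\valoptim)$, i.e. $\valoptim(\firstvar) \leq \alpha$. Pick $\secondvar \in \Theta(\firstvar)$ attaining $\valoptim(\firstvar) = P(\firstvar,\secondvar)$. The obstacle is that $\secondvar$ realizing the minimizer at the limit point need not be approximable by minimizers at $\firstvar_k$, since outer semicontinuity only controls limits of minimizers, not their existence near a prescribed target. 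I would therefore argue the other direction via a selection: for each $k$ choose $\secondvar_k \in \Theta(\firstvar_k)$ with $P(\firstvar_k,\secondvar_k) = \valoptim(\firstvar_k) \leq \alpha_k$; by compactness of $\cY$ extract a convergent subsequence $\secondvar_k \to \secondvar^* \in \cY$, and outer semicontinuity of $\Theta$ gives $\secondvar^* \in \Theta(\firstvar)$. Continuity of $P$ then yields $P(\firstvar,\secondvar^*) = \lim P(\firstvar_k,\secondvar_k) \leq \lim \alpha_k = \alpha$, and since $\secondvar^* \in \Theta(\firstvar)$ we get $\valoptim(\firstvar) = \min_{\secondvar \in \Theta(\firstvar)} P(\firstvar,\secondvar) \leq P(\firstvar,\secondvar^*) \leq \alpha$, closing $\text{epi}(\valoptim)$.

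The pessimistic case is symmetric: I would show $\text{hyp}(\valpessim)$ is closed by taking $(\firstvar_k,\alpha_k) \in \text{hyp}(\valpessim)$ with $\valpessim(\firstvar_k) \geq \alpha_k$, selecting maximizers $\secondvar_k \in \Theta(\firstvar_k)$ with $P(\firstvar_k,\secondvar_k) = \valpessim(\firstvar_k) \geq \alpha_k$, passing to a convergent subsequence $\secondvar_k \to \secondvar^*$, and using outer semicontinuity to place $\secondvar^* \in \Theta(\firstvar)$ so that $\valpessim(\firstvar) = \max_{\secondvar \in \Theta(\firstvar)} P(\firstvar,\secondvar) \geq P(\firstvar,\secondvar^*) = \lim P(\firstvar_k,\secondvar_k) \geq \alpha$. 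The subtle point throughout is to select at the $\firstvar_k$ level (where outer semicontinuity pushes the limit point back into $\Theta(\firstvar)$) rather than attempting to approximate a fixed minimizer/maximizer at $\firstvar$; handling this correctly is where the argument must be precise, and the direction of the inequality (min bounded above, max bounded below) must match the epigraph/hypograph being closed.
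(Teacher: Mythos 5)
Your proposal is correct and follows essentially the same route as the paper's proof: Berge's Maximum Theorem for non-emptiness, compactness, and outer semicontinuity of $\Theta$, then selection of minimizers at the approximating points $\firstvar_k$, subsequence extraction via compactness of $\cY$, and outer semicontinuity plus continuity of $P$ to conclude. The only cosmetic differences are that you phrase semicontinuity through closedness of the epigraph/hypograph (the characterization the paper itself recalls) rather than the $\liminf$ inequality, and your boundedness argument traps both value functions directly between the extrema of $P$ on $K \times \cY$, which is marginally cleaner than the paper's use of lower semicontinuity for the lower bound.
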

\begin{proof}
    \begin{enumerate}[leftmargin=*]
        %\item Semi-algebraicity of $\valoptim$: This is a corollary of \Cref{lemma:upper-bound-nonconvex-class}.
        \item Non-emptiness of $\Theta(\firstvar)$: if $\cY$ is compact, then the set $\Theta(\firstvar)$ is non-empty and compact thanks to \Cref{theorem:maximum_theorem}. Thus, the continuous $P$ attains its minimum in $\Theta(\firstvar)$.
        % \item Lower-semicontinuity of $\valoptim$: consider the function
        % \begin{equation*}
        %     H(\firstvar, \secondvar) = \iota_{\Theta(\firstvar)}(\secondvar):= \begin{cases}
        %         0, & \text{if } \secondvar \in \Theta(\firstvar)\\
        %         +\infty, & \text{otherwise}
        %     \end{cases}.
        % \end{equation*}
        % Since $\Theta$ is a compact set-valued mapping and outer semicontinuous (thanks to \Cref{theorem:maximum_theorem}), $H(\cdot,\secondvar)$ is lower semicontinuous. As a consequence, $K_\secondvar(\cdot) = P(\cdot, \secondvar) + H(\cdot, \secondvar)$ is also lower semicontinuous. Thus, for each fixed $\secondvar$, the epigraph of $K_\secondvar(\cdot)$ is closed.

        % Since $\valoptim(\firstvar) = \min_{\secondvar \in \Theta(\firstvar)} P(\firstvar, \secondvar) = \min_{\secondvar} K(\firstvar, \secondvar):= P(\firstvar, \secondvar) + H(\firstvar, \secondvar)$, the epigraph of $\valoptim = \cap_{\secondvar} K_\secondvar$, and thus closed (since the intersection of closed sets remains closed). Hence, $\valoptim$ is lower semicontinuous.
        \item Lower-semicontinuity of $\valbilevel$: Consider a point $\firstvar \in \RR^n$ and a sequence $(\firstvar_k)_{k \in \NN}$ converging to $\firstvar$. Since $\Theta(\firstvar)$ is compact, there exists $\secondvar_k \in \Theta(\firstvar_k)$ such that $\valbilevel(\firstvar_k) = P(\firstvar_k,\secondvar_k)$. We need to prove that: $\valbilevel(\firstvar) \leq \liminf_{k \to \infty} \valbilevel(\firstvar_k)$. 
        
        In the following, we can assume that $\valbilevel(\firstvar_k)$ converges to $\liminf_{k \to \infty} \valbilevel(\firstvar_k)$ and prove this limit is at least $\valbilevel(\firstvar)$. Due to the compactness of $\cY$, the sequence $\secondvar_k$ admits at least an accumulation point ${\secondvar}$. Due to the outer semicontinuity of $\Theta$, ${\secondvar} \in \Theta(\firstvar)$. Due to the continuity of $P$, we also have: $\liminf_{k \to \infty} \valbilevel(\firstvar_k) = P(x,y)$. Therefore,
        \begin{equation*}
            \valbilevel(\firstvar) = P(\firstvar,\secondvar^\star) \leq P(x,y) = \liminf_{k \to \infty} \valbilevel(\firstvar_k),
        \end{equation*}
        where $\secondvar^\star \in \argmin \{P(\firstvar,\secondvar) \mid \secondvar \in \Theta(\firstvar)\}$.
        \item Boundedness on compact sets of $\valbilevel$: Since $\valbilevel$ is lower-semicontinuous, it is lower-bounded in a given compact set $C$. In addition, $\forall \firstvar \in C$, we also have:
        \begin{equation*}
            \begin{aligned}
                \max_{\firstvar \in C} \valbilevel(\firstvar) &= \max_{\firstvar \in C} \min_{\secondvar \in \Theta(\firstvar)}P(\firstvar,\secondvar)
                \leq \max_{\firstvar \in C} \max_{\secondvar \in \cY} P(\firstvar,\secondvar) = \max_{(\firstvar,\secondvar) \in C \times \cY} P(\firstvar,\secondvar).
            \end{aligned}
        \end{equation*}
        Since $P$ is continuous and $C \times \cY$ is compact, there exists a constant $C$ such that $\max_{\firstvar \in C} \valbilevel(\firstvar) \leq C$. Thus, $\valbilevel$ is also upper-bounded in $C$. The proof is concluded. 
    \end{enumerate}   
\end{proof}
\begin{remark}
    As a consequence of \Cref{lemma:upper-bound-compact-nonconvex-class}, semi-algebraic functions that are not lower semicontinuous cannot be expressed as polynomial bilevel programs with compact lower-level feasible sets. %An example of such a function is:
   % \begin{equation*}
    %    f(x) = \begin{cases}
     %       -1 & \text{ if } x < 0 \\
      %      0 & \text{ if } x = 0 \\
       %     1 & \text{ if } x > 0
       % \end{cases},
   % \end{equation*}
   % since $\lim_{x \to 0^-} f(x) = -1 < f(0) < \lim_{x \to 0^+} f(x) = 1$. Thus, the bounded (or more precisely, compact) feasible sets in the lower-level problem help exclude certain ``pathological'' functions and hence deserves to be investigated specifically. 
\end{remark}

%\ed{I keep the following for the moment}\TL{I removed it}
% \begin{lemma}[Properties of continuous semi-algebraic bilevel optimization]
%     \label{lemma:upper-bound-compact-nonconvex-class}
%     Consider a bilevel formulation whose upper-level, lower-level functions $P, Q$ are continuous semi-algebraic and the lower-level feasible set $\cY$ is compact and semi-algebraic. The corresponding value function $\valoptim$ (resp. $\valpessim$) as in \eqref{eq:optimistic-bilevel} (resp. \eqref{eq:pessimistic-bilevel}) is:
%     \begin{enumerate}
%         \item Semi-algebraic.
%         \item Lower (resp. upper) semicontinuous.
%         \item Bounded in every compact set, i.e., for any compact set $S$, there exists a constant $C > 0$ such that $|f(x)| \leq C$. 
%     \end{enumerate}
% \end{lemma}

% \begin{proof}
  
% \end{proof}
{In the rest of this section, the main results are presented in the following plan:
\begin{enumerate}[leftmargin=*]
    \item \Cref{subsection:closed-results}: we study the class of value functions when $P,Q$ are polynomials and $\cY$ is a box.
    \item \Cref{subsection:compact-results}: we study the class of value functions when $P,Q$ are polynomials, $Q(x,\cdot)$ is convex and $\cY$ is either a box or a general convex set.
    \item \Cref{subsection:summary_results}: we summarize all the results in this section.
\end{enumerate}
}
\subsection{Value functions of general polynomial bilevel programs}
\label{subsection:closed-results}
% \jer{Make this section double or with two paragraphs\\
% Make clear what is $Y$\\
% Call a problem SBP with $P, Q$ and $Y$ simple semi-agebraic sets a {\em polynomial bilebvel problem}}\\
We introduce the following notation which is a shorthand for value function classes for bounded or unbounded polynomial bilevel problems.

\begin{definition}[Value functions classes]
    \label{def:class-general-value-functions}
    Given $\property \in \{\bounded, \unbounded\}$, we define $\cP_{\property}$ the set of value functions that can be represented as a bilevel optimization problem with $P,Q$ polynomials and $\cY$ a box satisfying $\property$ (bounded or unbounded), i.e.:
    % \begin{equation}
    %     \label{eq:nonconvex-class}
    %     \cP^{\formule}_{\property} := \{h: \RR^n \to \RR \mid \exists P, Q \text{ polynomials}, \cY \text{ $\property$ satisfy \Cref{assumption:well-definedness} and } \ff_{\formule} = h\}.
    % \end{equation}
    \begin{equation}
        \label{eq:nonconvex-class}
        \cP_{\property} := \{h: \RR^n \to \RR \mid \exists P, Q \text{ polynomials}, \cY \text{ satisfy $\property$ and } \valbilevel = h\}.
    \end{equation}
\end{definition}

For example, if one takes $\property = \unbounded$, we have:
% \begin{equation*}
%     \cP^{\optim}_{\unbounded} := \{h \mid \exists P, Q \text{ polynomials}, \cY \text{ unbounded satisfy \Cref{assumption:well-definedness} and } \valoptim = h\},
% \end{equation*}
\begin{equation*}
    \cP_{\unbounded} := \{h \mid \exists P, Q \text{ polynomials}, \cY \text{ unbounded and } \valbilevel = h\},
\end{equation*}
where $\valbilevel$ is defined in \eqref{eq:optimistic-bilevel}. 
%In the following, we will provide many sharp characterizations for the sets in \eqref{eq:nonconvex-class}. Our results reveal several intrinsic difficulties of bilevel optimization, even in rigid geometry (such as semi-algebraic geometry).
We consider the cases of unbounded and bounded $\cY$ separately.

With our notation, \Cref{cor:upper-bound-nonconvex-class} asserts that $\cP_{\unbounded} \subset \SA$. It is natural to ask whether these inclusions are tight and the following theorem provides a positive answer.
\begin{theorem}[Value functions of  polynomial bilevel programming]
    \label{theorem:closed=SA}
    Any extended-real-valued semi-algebraic function is the value function of a polynomial bilevel problem whose lower-level problem is unconstrained. In particular,
    \emph{$$\cP_{\unbounded} =  \SA.$$}
\end{theorem}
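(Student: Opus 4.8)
The plan is to establish the nontrivial inclusion $\SA \subseteq \cP^{\optim}_{\unbounded}$; the reverse inclusion $\cP^{\optim}_{\unbounded}\subseteq\SA$ is exactly \Cref{cor:upper-bound-nonconvex-class}, and the pessimistic identity will follow from the optimistic one by the duality $\valpessim^{(P,Q)} = -\,\valoptim^{(-P,Q)}$: since $\Theta$ depends only on $Q,\cY$, one has $\sup_{\secondvar\in\Theta(\firstvar)}P=-\inf_{\secondvar\in\Theta(\firstvar)}(-P)$, and this identity remains correct under the empty-argmin conventions because a pessimistic value $-\infty$ corresponds to an optimistic value $+\infty$ of $-P$. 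Thus realizing every $-f\in\SA$ as an optimistic value function realizes every $f\in\SA$ as a pessimistic one. So I fix an extended-real-valued semi-algebraic $f$ with its partition $\RR^n=\dom(f)\sqcup\dom(f)^+\sqcup\dom(f)^-$ into semi-algebraic pieces (\Cref{def:extended-semi-algebraic-function}) and build polynomials $P,Q$ with unbounded box $\cY=\RR^m$ so that $\valoptim=f$.

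The central device is to turn existential semi-algebraic predicates into polynomial equations via \emph{slack variables}, so that extra unconstrained lower-level variables act as existential quantifiers. I encode each atom by a nonnegative polynomial whose value $0$ is \emph{attained} exactly when the atom holds and which is \emph{strictly positive pointwise in the slacks} otherwise: encode $P=0$ by $P^2$, encode $P\ge 0$ by $(P-s^2)^2$, and encode the strict inequality $P>0$ by $(Ps^2-1)^2$. Taking sums for conjunctions and products for disjunctions, any finite Boolean combination of polynomial (in)equalities is represented by a nonnegative polynomial $E$ whose infimum $0$ is attained iff the predicate holds and which is $>0$ pointwise otherwise; here Tarski--Seidenberg (\Cref{theorem:projection-theorem}) guarantees that the predicates ``$\firstvar\in\dom(f)^-$'', ``$\firstvar\in\dom(f)^+$'', and the graph relation ``$t=f(\firstvar)$'' (in an auxiliary value-variable $t$) admit such finite descriptions. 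Writing $E_\gamma,E_-,E_+$ for the respective encodings, I then set $P:=t$ and
\begin{equation*}
Q := \bigl(E_{\gamma}(\firstvar,t,\sigma_\gamma)+w^2\bigr)\cdot\bigl(E_{-}(\firstvar,\sigma_-)+(t+w^2)^2\bigr)\cdot\bigl(E_{+}(\firstvar,\sigma_+)+(vw'-1)^2+v^2\bigr),
\end{equation*}
where $t,w,v,w',\sigma_\gamma,\sigma_-,\sigma_+$ are the unconstrained lower variables.

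The verification proceeds region by region, crucially using that the three pieces of the partition are disjoint, so that for a given $\firstvar$ at most one factor can vanish. On $\dom(f)$ only the first factor can reach $0$, forcing $t=f(\firstvar)$ (the graph is single-valued) and $w=0$; hence $\Theta(\firstvar)\subseteq\{t=f(\firstvar)\}$ and $\valoptim(\firstvar)=f(\firstvar)$, the free variables not affecting $P=t$. On $\dom(f)^-$ only the second factor can vanish, along $\{t=-w^2\}$ with $w$ free, so $\Theta(\firstvar)$ is nonempty with $\inf_{\Theta(\firstvar)}t=-\infty$, i.e.\ $\valoptim(\firstvar)=-\infty$. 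On $\dom(f)^+$ all three factors are strictly positive pointwise, yet sending $v\to 0,\ w'\to\infty$ with $vw'\to 1$ (while fixing the remaining variables) drives the last factor to $0$ and hence $Q\to 0$; thus $\inf Q=0$ is \emph{not attained}, $\Theta(\firstvar)=\emptyset$, and by convention $\valoptim(\firstvar)=+\infty$.

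The main obstacle is precisely the $+\infty$ region: the argmin of a (continuous) polynomial objective is typically nonempty, so representing $+\infty$ requires engineering a genuine non-attainment phenomenon, which is exactly what the factor $(vw'-1)^2+v^2$ supplies. The remaining subtlety is to certify that the three gadgets do not interfere; this is handled by the disjointness of the regions together with the \emph{pointwise} (not merely infimal) positivity of the ``false'' encodings, which is what ensures that the only exact zeros of $Q$, hence the only minimizers, come from the intended branch in each region. Once the optimistic case is settled, the identity $\cP^{\pessim}_{\unbounded}=\SA$ follows from the duality noted above applied to $-f\in\SA$.
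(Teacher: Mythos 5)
Your proposal is correct and follows essentially the same route as the paper's proof: the same slack-variable encodings of basic semi-algebraic sets ($P^2$ for equalities, $(Qs^2-1)^2$ for strict inequalities, sums for conjunctions, products for disjunctions), the same reliance on disjointness of $\dom(f)$, $\dom(f)^+$, $\dom(f)^-$ to keep the gadgets from interfering, the same non-attainment gadget (your $(vw'-1)^2+v^2$ versus the paper's $au^2+(1-uv)^2$) to realize $+\infty$, and a free lower-level direction driving $P=t$ to $-\infty$ on $\dom(f)^-$. The only organizational differences are that you assemble $Q$ as a product of three guarded factors where the paper uses a sum of cross-multiplied terms, and you deduce the pessimistic identity from the duality $\valpessim^{(-P,Q)}=-\valoptim^{(P,Q)}$ (which correctly respects the empty-argmin conventions) rather than repeating the symmetric construction; both variants are sound.
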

\begin{proof}[Sketch of proof for \Cref{theorem:closed=SA}]
    We provide a high-level idea of proof here:
    Given an extended-valued semi-algebraic function $h: \RR^n \to \RR \cup \{\pm \infty\}$, we partition $\RR^n$ into three disjoint semi-algebraic components: $\dom(h), \dom(h)^+, \dom(h)^-$ as in \Cref{def:extended-semi-algebraic-function}.
    
    The main idea of our construction is to build upper-level and lower-level functions $P, Q$ such that:
    \begin{enumerate}
        \item If $\firstvar \in \dom(h)$, then $\Theta(\firstvar) := \argmin_\secondvar Q(\firstvar, \cdot)$ is non-empty. Moreover, $\min_{\secondvar \in \Theta(\firstvar)} P(\firstvar, \secondvar) = h(\firstvar)$ (note that we use $\min$, instead of $\inf$, which implies that the minimum value is attained).
        \item If $\firstvar \in \dom(h)^+$, then $\Theta(\firstvar) = \emptyset$.
        \item If $\firstvar \in \dom(h)^-$, then $\Theta(\firstvar) \neq \emptyset$ but $\inf_{\secondvar \in \Theta(\firstvar)} P(\firstvar, \secondvar) = - \infty$.
    \end{enumerate}
\end{proof}
\begin{proof}[Proof of \Cref{theorem:closed=SA}]
     Consider an extended-real-valued semi-algebraic function $h\colon \RR^n \to \RR \cup \{ \pm \infty\}$. By definition, $\graph  h =\{(x,h(x))\}$ is a semi-algebraic set, and it can be written as:
    \begin{equation}
        \label{eq:graph-decomposition}
        \graph h = \bigcup_{i \in I} \cS_i \quad \text{where} \quad \cS_i := \{(x,t) \in\RR^{n+1} \mid P_i(x,t) = 0 \text{ and } Q_{ij}(x,t) > 0, j \in J\},
    \end{equation}
    where $I$ and $J$ are some finite index sets, the functions $P_i, i \in I$ and $Q_{ij}, (i,j) \in I \times J$ are polynomials w.r.t $x \in \RR^n$ and $t \in \RR$.

    Consider $\dom(h), \dom(h)^+, \dom(h)^-$ defined as in \Cref{def:extended-semi-algebraic-function}: WLOG, we also assume that:
    \begin{equation}
        \label{eq:domain-decomposition}
        \begin{aligned}
            \dom(h) &:= \bigcup_{i \in I} \cD_i \quad \text{where} \quad \cD_i := \{\firstvar \in \RR^n \mid R_i(\firstvar) = 0 \text{ and } S_{ij}(\firstvar) > 0\}\\
            \dom(h)^+ &:= \bigcup_{i \in I} \cD^+_i \quad \text{where} \quad \cD^+_i := \{\firstvar \in \RR^n \mid R^+_i(\firstvar) = 0 \text{ and } S^+_{ij}(\firstvar) > 0\}\\
            \dom(h)^- &:= \bigcup_{i \in I} \cD^-_i \quad \text{where} \quad \cD^-_i := \{\firstvar \in \RR^n \mid R^-_i(\firstvar) = 0 \text{ and } S^-_{ij}(\firstvar) > 0\}\\
        \end{aligned}
    \end{equation}
    with the same index sets $I,J$ as in \eqref{eq:graph-decomposition} (otherwise, one can add ``dummy'' polynomial equalities/inequalities, e.g., $0 = 0, 0 < 1$  to match the index sets). 
    
    In our construction, we use eight sets of variables that are described in \Cref{tab:variables-semi-algebraic-function}. In particular, the lower-level variable $\secondvar$ is the concatenation of $(t, z, \nu, \nu^+, \nu^-, u, v)$.

    \begin{table}[bhtp]
        \centering
        \begin{tabular}{cccc}
            \toprule
             Name & Dimension & Coordinates notation & Type (Upper/Lower variable) \\
             \midrule
             $x$ & $n$ & Not used & Upper \\ 
             \midrule
             $t$ & $1$ & Not used & Lower \\
             \midrule
             $z$ & $|I| \times |J|$ & $z_{ij}$ & Lower \\
             \midrule
             $\nu$ & $|I| \times |J| $ & $\nu_{ij}$ & Lower \\
             \midrule
             $\nu^+$ & $|I| \times |J| $ & $\nu_{ij}^+$ & Lower \\
             \midrule
             $\nu^-$ & $|I| \times |J| $ & $\nu_{ij}^-$ & Lower \\
             \midrule
             $u$ & $1$ & Not used & Lower \\
             \midrule
             $v$ & $1$ & Not used & Lower \\
             \bottomrule
        \end{tabular}
        \caption{Specification for the variables of the bilevel formulation.}
        \label{tab:variables-semi-algebraic-function}
    \end{table}

\noindent
    Consider the function:
    \begin{equation}
        \label{eq:def-F}
        \begin{aligned}
            F(\firstvar, t, z) :=  \prod_{i \in I} \left(P_i(\firstvar,t)^2 + \sum_{j \in J} G_{ij}(x,t,z_{ij})\right).
        \end{aligned}
    \end{equation}
    \begin{equation*}
        G_{ij}(x,t,z_{ij}) = (1 - Q_{ij}(x,t)z_{ij}^2)^2,
    \end{equation*}
    where $Q_{ij}$ are the polynomials defined in \Cref{eq:graph-decomposition}. Let us determine its global minimizers for a fixed value of $(x,t)$. One gets the closed form for the minimizers $z_{ij}^2$ and the optimal values $G_{ij}^\star$ of $G_{ij}(x,t,\cdot)$:
    \begin{equation}
        \label{eq:optimized-z-ij}
        z_{ij}^\star = \begin{cases}
            \pm \sqrt{\frac{1}{Q_{ij}(x,t)}} & \text{ if } Q_{ij}(x,t) > 0\\
            0 & \text{ otherwise,}
        \end{cases} \qquad \qquad
        G_{ij}^\star = \begin{cases}
            0 & \text{ if } Q_{ij}(x,t) > 0\\
            1 & \text{ otherwise}
        \end{cases}.
    \end{equation}
    Given a fixed value $x \in \dom(h)$, consider two cases:
    \begin{enumerate}[leftmargin=*]
        \item If $(x,t) \in \graph h$ (or equivalently, $t = h(x)$), there exists $i \in I$ such that $P_i(x,t) = 0$ and $Q_{ij}(x, t) > 0, \forall j \in J$. Using \eqref{eq:optimized-z-ij}, we have:
        \begin{equation*}
            P_i(x,t)^2 + \sum_{j \in J} G_{ij}(x,t,z_{ij}^\star) = 0.
        \end{equation*}
        Therefore, $F(x,h(x),z^\star) = 0$, in other words $z^\star$ is a global minimizer of $F(x,h(x),\cdot)$. 
        \item If $(x,t) \notin \graph h$ (or equivalently, $t \neq h(x)$), then for all $i \in I$, either $P_i(x,t) \neq 0$ or there is $j$ such that $Q_{ij}(x,t) \leq 0$. In any case, using \eqref{eq:optimized-z-ij} again, we can conclude that:
        \begin{equation*}
            P_i(x,t)^2 + \sum_{j \in J} G_{ij}(x,t,z_{ij}^\star) > 0, \forall i \in I.
        \end{equation*}
        Therefore, if $t \neq h(x)$, we have $F(x,t,z) > 0, \forall z \in \RR^{|I| \times |J|}$. 
    \end{enumerate}
    
    Analogous to \eqref{eq:def-F}, we construct three nearly similar polynomials, using the functions $S, R$ defined in \eqref{eq:domain-decomposition}.
    \begin{equation}
        \label{eq:def-H}
        \begin{aligned}
            H(\firstvar, \nu) :=  \prod_{i \in I} \left(R_i(\firstvar)^2 + \sum_{j \in J} K_{ij}(\firstvar,\nu_{ij})\right) \quad \text{where} \quad K_{ij}(\firstvar, \nu_{ij}) = (1 - S_{ij}(\firstvar)\nu_{ij}^2)^2\\  
            H^+(\firstvar, \nu) :=  \prod_{i \in I} \left(R^+_i(\firstvar)^2 + \sum_{j \in J} K^+_{ij}(\firstvar,\nu^+_{ij})\right) \quad \text{where} \quad K^+_{ij}(\firstvar, \nu^+_{ij}) = (1 - S_{ij}(\firstvar)[\nu^+_{ij}]^2)^2\\  
            H^-(\firstvar, \nu) :=  \prod_{i \in I} \left(R^-_i(\firstvar)^2 + \sum_{j \in J} K^-_{ij}(\firstvar,\nu^-_{ij})\right) \quad \text{where} \quad K^-_{ij}(\firstvar, \nu^-_{ij}) = (1 - S_{ij}(\firstvar)[\nu^-_{ij}]^2)^2\\  
        \end{aligned}
    \end{equation}
    Using a similar argument for $F$, we can conclude that: 
    \begin{enumerate}
        \item if $\firstvar \in \dom(h)$ (resp. $\dom(h)^+, \dom(h)^-$), there exists $\nu$ (resp. $\nu^+, \nu^-$) such that $H = 0$ (resp, $H^+ = 0, H^- = 0$);
        \item Otherwise, $H(\firstvar, \nu) > 0$ (resp. $H^+(\firstvar, \nu^+) > 0, H^-(\firstvar, \nu^-) > 0$) for all $\nu$ (resp. $\nu^+, \nu^-$).
    \end{enumerate}

    Using $F, H, H^+, H^-$, we construct the upper-level and lower-level polynomials $P$ and $Q$ as follows:
    \begin{equation*}
        \begin{aligned}
            P(\firstvar,\secondvar) &= t, \\
            Q(\firstvar,\secondvar) &= \underbrace{H^+(\firstvar, \nu^+)H^-(\firstvar, \nu^-)F(\firstvar, t, z)}_{Q_1(\firstvar, t, z, \nu^+, \nu^-)} + \underbrace{H(\firstvar, \nu)H^-(\firstvar, \nu^-)u^2 + (1 - uv)^2}_{Q_2(\firstvar, \nu, \nu^-, u, v)}.
        \end{aligned}
    \end{equation*}

    By construction, all functions $F, H, H^+, H^-$ are sums of squares. Therefore, given a fixed value $\firstvar$, if $Q(\firstvar, \secondvar) = 0$, then $\secondvar$ belongs to the set of minimizers of $Q(x, \cdot,\cdot)$. We consider three cases corresponding to the partition of $\RR^n$:
    \begin{enumerate}
        \item If $\firstvar \in \dom(h)$: on the one hand, $\min Q(\firstvar, \secondvar) = 0$ and it is attained since we can choose $(t,z, \nu, u, v)$ such that $F(\firstvar, t,z) = 0$ (see \cref{eq:optimized-z-ij}), $H(\firstvar,\nu) = 0$ (see remark after \cref{eq:def-H}), $uv = 1$. On the other hand, if $\secondvar$ minimizes $Q(\firstvar,\cdot)$, that is $\secondvar \in \Theta(\firstvar)$, then $t = h(\firstvar)$ due to our analysis of $F(\firstvar, t, z)$ and the fact that $H^+(\firstvar, \nu^+)H^-(\firstvar, \nu^-) > 0$, for any $\nu^+,\nu^-$. Thus, $P(\firstvar, \secondvar) = t = h(\firstvar)$. 
        \item If $\firstvar \in \dom(h)^+$: the infimum $\inf Q(\firstvar, \secondvar) = 0$ but it is not attained. Indeed, by choosing $\nu^+, u, v$ such that $H^+(\firstvar, \nu^+) = 0$ (see remark after \cref{eq:def-H}), $u \to 0$ and $v = 1/u \to +\infty$, $Q(\firstvar, \secondvar)$ can get arbitrarily close to zero. Nevertheless, the minimum is not attained since for any $\nu, \nu^-$, $H(\firstvar, \nu)H^-(\firstvar, \nu^-) > 0$ and one can verify that $au^2 + (1 - uv)^2 > 0$, for any $u,v$ for any $a > 0$, which concludes the proof for this case.
        \item If $\firstvar \in \dom(h)^-$: $\min Q(\firstvar, \secondvar) = 0$ and it is attained for any $t \in \RR$, by choosing $\nu^-, u, v$ such that $H^-(\firstvar,\nu^-) = 0$ (see remark after \cref{eq:def-H}) and $uv = 1$. Thus, the bilevel optimization results in $\inf P(\firstvar, \secondvar) = -\infty$. 
    \end{enumerate}
    That concludes the proof.
\end{proof}
{
\begin{remark}
    We remark that the degree of the constructed polynomial $Q$ is linear in the degrees of polynomials $P_i,Q_{ij}$ defining the graph of the target function $h$. Since all the information about a bilevel problem is encoded in a single pair of polynomials $(P,Q)$ it is natural that their degree increases depending on the complexity of the underlying representation. A similar comment holds for all the constructions of this section. We leave more quantitative discussions about this representation for future work.
\end{remark}
}
{
\begin{remark}
    While we focus on bilevel programming, the above results actually characterize semi-algebraic functions as a polynomial $\argmin$ since the upper level is just the projection on the first coordinate. We also remark that the proof allows to obtain a representation of semi-algebraic sets using a non-negative polynomial argmin or equivalently the zero locus of a polynomial.
\end{remark}
}
%When the lower-level feasible set $\cY$ is closed, there is not any difference between optimistic (cf. \eqref{eq:optimistic-bilevel}) and pessimistic (cf. \eqref{eq:pessimistic-bilevel}) formulation.

%Similar to the unbounded cases, we provide the results for its bounded counterpart. \TL{Unlike the bounded cases, we have: $\SA \neq \cP^{\optim}_{\bounded}$ and $\SA \neq \cP^{\pessim}_{\bounded}$, as seen in the following remark.} %Our first result in this section shows that if the lower-level feasible set is bounded instead of unbounded, then the sets $\cP^{\optim}_{\bounded}, \cP^{\pessim}_{\bounded}$ are strictly contained in $\SA$. 

Denote respectively by $\LSC, \BC$ the sets of functions that are lower semicontinuous, and bounded on any compact set (cf. \Cref{lemma:upper-bound-compact-nonconvex-class}). Similar to the unbounded case, combining \Cref{cor:upper-bound-nonconvex-class} and \Cref{lemma:upper-bound-compact-nonconvex-class}, we have that $\cP_{\bounded} \subset \SA\cap \LSC \cap \BC$. The following shows that this inclusion is tight. Note that, by definition, functions in $\BC$ have full domain (they do not take value $\pm \infty$).

%we will show in the following that $\cP^{\optim}_{\bounded}$ and $\cP^{\pessim}_{\bounded}$ can be characterized precisely using these sets.

\begin{theorem}[Value functions of  box-constrained polynomial bilevel programming]
    \label{theorem:compact=lsc+bc}
    Any function which is semi-algebraic, lower semicontinuous, and bounded on compact sets is the value function of a polynomial bilevel problem whose lower-level feasible set $\cY$ is a bounded box. In other words:
    \begin{equation*}
        \cP_{\textup{\bounded}} = \SA \cap \LSC \cap \BC.
    \end{equation*}
\end{theorem}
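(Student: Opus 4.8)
The inclusions $\cP^{\optim}_{\bounded}\subseteq \SA\cap\LSC\cap\BC$ and $\cP^{\pessim}_{\bounded}\subseteq\SA\cap\USC\cap\BC$ are already granted by \Cref{cor:upper-bound-nonconvex-class} and \Cref{lemma:upper-bound-compact-nonconvex-class}, so the whole content is the reverse inclusion. I would treat the optimistic case, the pessimistic one being symmetric (swap $\LSC$/$\USC$, epigraph/hypograph, and $\min$/$\max$). Given $h\in\SA\cap\LSC\cap\BC$ --- in particular real-valued with full domain --- the plan is to build $P,Q$ polynomial and a bounded box $\cY$ with $\valoptim=h$, reusing the ``sum-of-squares selects a semi-algebraic set'' mechanism of \Cref{theorem:closed=SA} but adapted so that every auxiliary lower-level variable stays in a fixed compact interval.

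First I would tame the range of $h$. Since $h$ is semi-algebraic and real-valued, \Cref{lemma:growth-semi-algebraic-functions} gives $N,C$ with $|h(x)|\le\|x\|^N$ for $\|x\|\ge C$; combined with boundedness on the ball $\{\|x\|\le C\}$ this shows $\tilde h:=h/(1+\|x\|^2)^N$ is bounded, say $|\tilde h|\le M$. Multiplication by the strictly positive, continuous and semi-algebraic factor $(1+\|x\|^2)^{-N}$ preserves lower semicontinuity and semi-algebraicity, so $\tilde h$ is again lsc, semi-algebraic and bounded. I then take the upper level to be $P(x,y)=t\,(1+\|x\|^2)^N$, where $t$ is a bounded lower-level coordinate destined to encode $\tilde h(x)$; since the prefactor is strictly positive, selecting the minimal admissible value $t=\tilde h(x)$ will recover $\valoptim(x)=\tilde h(x)(1+\|x\|^2)^N=h(x)$.

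The heart of the argument is a bounded encoding of nonstrict inequalities. Because $\tilde h$ is lsc its epigraph is closed and semi-algebraic, so \Cref{lemma:closed-semi-algebraic-set} yields $\mathrm{epi}(\tilde h)=\bigcup_{i\in I}\bigcap_{j\in J}\{(x,t):P_{ij}(x,t)\ge 0\}$. For each constraint I would use the gadget $\tilde G_{ij}(x,t,w_{ij}):=\bigl(P_{ij}(x,t)-w_{ij}^2(1+P_{ij}(x,t)^2)\bigr)^2$ with a single slack $w_{ij}\in[-1,1]$: this nonnegative polynomial vanishes for some $w_{ij}\in[-1,1]$ exactly when $w_{ij}^2=P_{ij}/(1+P_{ij}^2)\in[0,\tfrac12]$, i.e. exactly when $P_{ij}(x,t)\ge 0$, and is strictly positive for every $w_{ij}\in[-1,1]$ otherwise. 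Crucially the normalization $P/(1+P^2)$ keeps the slack in $[-1,1]$ no matter how large $P_{ij}(x,t)$ grows with the unbounded variable $x$, which is precisely what the strict-inequality variables $z_{ij}$ of \Cref{theorem:closed=SA} failed to do. Mirroring that proof, I would set $Q(x,y)=\prod_{i\in I}\sum_{j\in J}\tilde G_{ij}(x,t,w_{ij})$ and $\cY=[-M,M]\times[-1,1]^{|I||J|}$ with $y=(t,(w_{ij}))$.

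Finally I would verify the identity $\valoptim=h$. For fixed $x$ one has $Q(x,\cdot)\ge 0$, and $Q(x,\cdot)=0$ is attainable iff some branch $i$ has all $P_{ij}(x,t)\ge0$, i.e. iff $(x,t)\in\mathrm{epi}(\tilde h)$; since $\tilde h(x)\in[-M,M]$ and the epigraph is upward closed in $t$, the set $\Theta(x)=\argmin_y Q(x,\cdot)$ has $t$-range exactly $[\tilde h(x),M]$, hence is nonempty with $\min_{\Theta(x)}t=\tilde h(x)$ attained. Multiplying by the positive prefactor gives $\valoptim(x)=\tilde h(x)(1+\|x\|^2)^N=h(x)$, as desired; the pessimistic statement then follows by replacing the epigraph with the hypograph and minimal with maximal $t$. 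The step I expect to be delicate --- and the genuinely new ingredient compared with the unbounded case --- is the bounded-slack gadget: one must check simultaneously that it enforces the correct sign condition, that its minimizing slack remains in the fixed box uniformly in the unbounded upper variable $x$, and that the product-of-sums assembly creates no spurious minimizers with $t<\tilde h(x)$.
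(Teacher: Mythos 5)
Your proposal is correct and follows essentially the same route as the paper's proof: the same growth-based normalization to a bounded function with the positive polynomial prefactor absorbed into the upper level, the same representation of a closed semi-algebraic set as $\bigcup_i\bigcap_j\{P_{ij}\ge 0\}$, and the same bounded-slack sum-of-squares gadget (the paper uses $\bigl(P_{ij}-(P_{ij}^2+1)z_{ij}\bigr)^2$ with $z_{ij}\in[0,1/2]$, which is your $w_{ij}^2$-gadget up to reparametrization) assembled as a product of sums with upper level $P=t$. The only cosmetic difference is that you encode the epigraph of the normalized function where the paper encodes the closure of its graph; both have minimal $t$-section equal to the function value by lower semicontinuity, so the argument is unchanged.
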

\begin{proof}
    To prove the equalities in \Cref{theorem:compact-convex=LSC}, we notice that \Cref{cor:upper-bound-nonconvex-class} and \Cref{lemma:upper-bound-compact-nonconvex-class} imply the following:
    \begin{equation*}
        \begin{aligned}
            \cP_{\bounded} &\subseteq \SA \cap \LSC \cap \BC,\\
        \end{aligned}
    \end{equation*}
    Therefore, it is sufficient to prove the first claim of \Cref{theorem:compact-convex=LSC}: for any element $f: \RR^n \to \RR$ of $\SA \cap \LSC \cap \BC$, there exists a polynomial bilevel problem with a bounded box $\cY$ whose value function equals $f$. Note that if $f$ is bounded on any compact, $f(\firstvar) \in \RR, \forall \firstvar \in \RR^n$. Before constructing $P, Q$ and the bounded set $\cY$, we state three key observations concerning $f$.
    \begin{enumerate}[leftmargin=*]
        \item Since $f$ is semi-algebraic, $\graph f$ is also semi-algebraic. Consequently, the closure $\overline{\graph f}$ is also semi-algebraic (see, for example, \cite[Proposition 3.1]{realgeoalg}). By \Cref{lemma:closed-semi-algebraic-set}, $\overline{\graph f}$ can be represented as:
        \begin{equation}
            \label{eq:proof-closed-semi-algebraic-set}
            \overline{\graph f} = \bigcup_{i \in I}\bigcap_{j \in J} \{(x, t) \in \RR^{n+1} \mid \, P_{ij}(x,t) \geq 0\},
        \end{equation}
        where $P_{ij}$ are polynomials with $n+1$ variables, and $I,J$ are two finite index sets. 
        \item Since $f$ is lower-semicontinuous one has  $f(x) = \min \{t \mid t \in \RR, (x,t) \in \overline{\graph f}\}$.
        \item Since $f$ is semi-algebraic, by \Cref{lemma:growth-semi-algebraic-functions}, there exist a positive constant $C$ and an integer $N$ such that $|f(x)| \leq \|x\|^N, \forall x, \|x\|\geq C$. We can choose $N$ to be an even integer so that $\|x\|^N$ is a polynomial. On the other hand, since $f$ is bounded on compact sets, there exists another positive constant $B$ such that $|f(x)| \leq B, \forall x, \|x\| \leq C$. By combining these two observations, we have:
        \begin{equation*}
            f(x) \leq B + \|x\|^N, \forall x \in \RR^n.
        \end{equation*}
       Using this remark, we may assume that $f$ is bounded in $[-1,1]$. 
       
       Indeed, consider the function: $h(x) = f(x) / (B + \|x\|^N)$. Since the polynomial $B + \|x\|^N \geq B, \forall x \in \RR$, $h(x)$ is well-defined, and it remains semi-algebraic, lower-semicontinuous. Moreover, $h$ is bounded in $[-1,1]$. If one can construct $P,Q$ and $\cY$ such that $\valbilevel$ in \eqref{eq:optimistic-bilevel} satisfies $\valbilevel = h$, then the bilevel problem with $(P(B + \|x\|^N), Q, \cY)$ has a value function equal to $f$.
    \end{enumerate}
    In our construction, we use three sets of variables. They are described as in \Cref{tab:variables-lower-semicontinuous-function}.
    \begin{table}[bhtp]
        \centering
        \begin{tabular}{ccccc}
            \toprule
             Name & Dimension & Coordinates notation & Type (Upper/Lower variable) & Feasible set \\
             \midrule
             $x$ & $n$ & Not used & Upper& N/A\\ 
             \midrule
             $t$ & $1$ & Not used & Lower & $[-1,1]$\\
             \midrule
             $z$ & $|I| \times |J|$ & $z_{ij}$ & Lower & $[0,1/2]^{|I|\times|J|}$ \\
             \bottomrule
        \end{tabular}
        \caption{Specification for the variables of the bilevel formulation.}
        \label{tab:variables-lower-semicontinuous-function}
    \end{table}
    
    We introduce the building block of our polynomials $P$ and $Q$. Consider:
    \begin{equation*}
        G_{ij}(x,t,z_{ij}) = \left(P_{ij}(x,t) - (P_{ij}(x,t)^2 + 1)z_{ij}\right)^2,
    \end{equation*}
    where $P_{ij}$ are the polynomials defined in \eqref{eq:proof-closed-semi-algebraic-set}.
    
    Given a fixed value of $(x,t)$, optimizing $G_{ij}(x,t,z_{ij})$ w.r.t $z_{ij} \in [0, 1/2]$, the optimal value and minimizer of $G_{ij}$ are given by:
    \begin{equation}
        \label{eq:proof-semi-optimal-values}
        z_{ij}^\star = \begin{cases}
            \frac{P_{ij}(x,t)}{P_{ij}(x,t)^2 + 1} & \text{ if } P_{ij}(x,t) \geq 0\\
            0 & \text{ otherwise}
        \end{cases} \qquad, \qquad
        G_{ij}^\star(x,t) = \begin{cases}
            0 & \text{ if } P_{ij}(x,t) \geq 0\\
            P_{ij}(x,t)^2 & \text{otherwise}
        \end{cases}
    \end{equation}
    Note indeed that $z^\star_{ij} \in [0, 1/2]$ because $t/(1 + t^2) \in [0,1/2], \forall t \geq 0$. As a consequence $G^\star_{ij}(x,t) = 0$ if and only if $P_{ij}(x,t) \geq 0$. 

    Let us now define $P, Q$ and $\cY$ as:
    \begin{equation}
        \label{eq:construction-compact-lsc}
        \begin{aligned}
            P(x,t,z) &= t, \\
            Q(x,t,z) &= \prod_{i \in I}\left(\sum_{j \in J} G_{ij}(x,t,z_{ij})\right),\\
            \cY &= [-1, 1] \times [0, 1/2]^{|I| \times |J|}.
        \end{aligned}
    \end{equation}
    Note that $Q$ is the product of sums of squares. Hence, $Q(x,t,z) \geq 0$. Consequently, if $Q(x,t,z) = 0$, then $(t,z) \in \argmin_{\cY} Q(x, \cdot,\cdot)$. 

    Consider a point $x \in \RR^n$, there are two possibilities:
    \begin{enumerate}[leftmargin=*]
        \item If $(x,t) \in \overline{\graph f}$, then there exists $i \in I$ such that $P_{ij}(x,t) \geq 0, \forall j \in J$. By \eqref{eq:proof-semi-optimal-values}, we have:
        \begin{equation*}
            \sum_{j \in J} G_{ij}^\star(x,t,z_{ij}) = 0.
        \end{equation*}
        Thus, the optimal value of $Q$ in this case is zero. 
        
        \item If $(x,t) \notin \overline{\graph f}$, then for all $i \in I$, there exists at least an index $j_i \in J$ such that $P_{ij}(x,t) < 0$. Hence, 
        \begin{equation*}
            \sum_{j \in J} G_{ij}(x,t,z_{ij}^\star) \geq P_{ij_i}(x,t)^2 > 0, \forall i \in I.
        \end{equation*}
        Thus, the optimal value of $Q$ is at least $\prod_{ij} P_{ij_i}(x,t)^2 > 0$. 
    \end{enumerate}
    Therefore, for a minimizer $(t,z)$ of $Q(x, \cdot,\cdot)$, $(x,t) \in \overline{\graph f}$. We emphasize that such $t$ always exists and belongs to the interval $[-1,1]$ due to the hypothesis of boundedness of $f$. Finally, among $\{t \in [-1,1] \mid (x,t) \in \overline{\graph f}\}$, the optimistic formulation will choose the smallest $t$, which is exactly the value of $f(x)$ due to the lower-semicontinuity of $f$. 
\end{proof}

\subsection{Polynomial bilevel problems with convex lower-level}
\label{subsection:compact-results}
We also investigate the role of convexity of the lower-level problem in the set of expressible functions. Analogously to \Cref{def:class-general-value-functions}, under the assumption that the lower-level problem is convex, we study the following value function class, where the letter $\cC$ highlights convexity.

\begin{definition}[Value functions with lower-level convexity]
    \label{def:class-lower-level-convex}
    Under the same notations as in \Cref{def:class-general-value-functions}, the sets $\cC_{\property}$ are analogously defined to $\cP_{\property}$, except that the lower-level problem is constrained to be convex in the definition of the former, i.e.:
    % \begin{equation}
    %     \label{eq:convex-class}
    %     \cC^{\formule}_{\property} := \left\{h: \RR^n \to \RR \mid \begin{array}{c} \exists P, Q \text{ polynomials}, Q \text{ convex w.r.t } y, \cY \text{ $\property$ satisfy} \\\text{\Cref{assumption:well-definedness} such that } \ff_\formule = h\end{array} \right\}.
    % \end{equation}
    \begin{equation}
        \label{eq:convex-class}
        \cC_{\property} := \left\{h: \RR^n \to \RR \mid \begin{array}{c} \exists P, Q \text{ polynomials}, Q \text{ convex w.r.t } y, \cY \text{ $\property$} \text{ such that } \valbilevel = h\end{array} \right\}.
    \end{equation}
\end{definition}
Similar to the previous section, we treat the case of bounded and unbounded lower-level constraints separately. This section is concluded with a third result for which we allow the lower-level constraint set to be an arbitrary convex compact semi-algebraic set.

From \Cref{def:class-general-value-functions} and \Cref{def:class-lower-level-convex}, we clearly have that: 
\begin{equation*}
    \cC_{\property} \subseteq \cP_{\property},
\end{equation*}
for $\property \in \{\unbounded,\bounded\}$. 
However, we do not know if this inclusion is strict. Nevertheless, in this section, we show that the class of value functions in $\cC_{\unbounded}$, is very large as it contains all \emph{piecewise polynomial functions}. We denote by $\mbf{1}_S$ be the characteristic function of a subset $S \subseteq \RR^n$ (with value $1$ on $S$ and $0$ elsewhere), a piecewise polynomial can be defined as follows:

\begin{definition}[Piecewise polynomial functions]
    \label{def:piecewise-polynomial}
     A function $f: \RR^n \to \RR$ is called \emph{piece-wise polynomial} if there exist a semi-algebraic partition $S:= \{S_i, i = 1, \ldots, N\}$ of $\RR^n$ (i.e., $S_i$ are semi-algebraic, pairwise disjoint and their union is equal to $\RR^n$) and $N$ polynomials $P_i: \RR^n \to \RR$ such that:
    \begin{equation*}
        f(x) = \sum_{i=1}^N \mbf{1}_{x \in S_i}P_i(x), \forall x \in \RR^n.
    \end{equation*}
    Equivalently, $f(x) = P_i(x)$ for $x \in S_i$. We call $(S, \tupfuncN)$ the representation of $f$ and write $f = (S, \tupfuncN)$ by an abuse of notations. We use the shorthand $\PP$ to denote the set of piecewise polynomial functions.
\end{definition}

% Piecewise polynomial functions are related to the elementary (log-exp) functions \cite[Definition 3]{bolte2020mathematicalmodel}, but they are more restricted since their partitions are semi-algebraic (thus, do not contain log and exponential functions). However, as shown in \cite[Section 5]{bolte2021implicitdifferentiation}, piecewise polynomial functions are already problematic enough to make the traditional gradient descent method generate cyclic iterates and/or have Lorenz chaotic behavior \cite{Lorenz2004}. 

In the following, we show that the set of piecewise polynomial functions is contained in the set of value functions with convex lower-level and simple box constraints. This result illustrates that relaxing strong convexity but maintaining convex lower levels in bilevel programs allows to represent the large class of piecewise polynomial functions. While this is a strictly smaller class compared to semi-algebraic functions, this is still a very large class which contains functions which are generally discontinuous with an arbitrary number of discontinuities, and gradient type methods are not adapted to such functions \cite{bolte2021implicitdifferentiation}. 

\begin{theorem}[Piecewise polynomials are value functions with convex lower-level]
    \label{theorem:convex-closed-bilevel-class}
    Any piece-wise polynomial function is the value function of a polynomial bilevel problem with lower objective convex over a box. More specifically:
    \begin{equation*}
        \begin{aligned}
            \PP \; \subseteq \; \cC_{\textup{\unbounded}} \; \subseteq \; \SA.
        \end{aligned}
    \end{equation*}
\end{theorem}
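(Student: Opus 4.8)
The right-hand inclusions $\cC^{\optim}_{\unbounded}\subseteq\SA$ and $\cC^{\pessim}_{\unbounded}\subseteq\SA$ are immediate, since a convex lower level is a special case of a polynomial one, so they follow directly from \Cref{cor:upper-bound-nonconvex-class}. The whole content is therefore the left inclusions $\PP\subseteq\cC^{\optim}_{\unbounded}$ and $\PP\subseteq\cC^{\pessim}_{\unbounded}$; I treat the optimistic case, the pessimistic one being symmetric. The plan is to reduce a general piecewise polynomial to elementary ``indicator-times-polynomial'' pieces via an additivity argument, and then to realize each such piece with a \emph{linear} (hence trivially convex) lower level over an unbounded box, exploiting the trichotomy of linear programs over $[0,\infty)$ to create the discontinuities. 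Crucially, the product-of-squares gadget $F$ of \Cref{theorem:closed=SA} is \emph{unavailable} here, since a product of squares is wildly nonconvex; the difficulty is to reproduce the same semi-algebraic selection using only convexity-preserving operations.

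\textbf{Reduction by separability.} First I would prove a closure lemma: if $f_1,f_2\colon\RR^n\to\RR$ are real-valued members of $\cC^{\optim}_{\unbounded}$ whose defining problems have nonempty $\Theta_k(x)$ for every $x$, then $f_1+f_2\in\cC^{\optim}_{\unbounded}$. Indeed, placing the two lower-level variable blocks side by side and setting
$$Q\bigl(x,(y^{(1)},y^{(2)})\bigr)=Q_1(x,y^{(1)})+Q_2(x,y^{(2)}),\qquad \cY=\cY_1\times\cY_2,\qquad P=P_1+P_2,$$
the objective $Q$ is still convex in the lower variable and \emph{separates}, so $\Theta(x)=\Theta_1(x)\times\Theta_2(x)$ and the optimistic value splits as $\inf_{\Theta_1}P_1+\inf_{\Theta_2}P_2=f_1(x)+f_2(x)$ (no $\infty-\infty$ issue arises because both blocks are finite). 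Writing $f=\sum_{i=1}^N \mathbf 1_{S_i}P_i$ from \Cref{def:piecewise-polynomial}, and first refining the semi-algebraic partition so that each $S_i$ is a \emph{basic} semi-algebraic cell, it then suffices to represent each building block $g_i:=\mathbf 1_{S_i}P_i$ (equal to $P_i(x)$ on $S_i$ and $0$ off it) as a real-valued member of $\cC^{\optim}_{\unbounded}$ with $\Theta_i(x)\neq\emptyset$ everywhere.

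\textbf{Building block via a sign switch.} The elementary gadget is the linear program $\min_{s\ge 0} g(x)\,s$: its argmin is the singleton $\{0\}$ when $g(x)>0$, the whole ray $[0,\infty)$ when $g(x)=0$, and empty (infimum $-\infty$) when $g(x)<0$. This trichotomy detects each polynomial sign condition \emph{exactly}, including strict inequalities, through whether the minimum is attained at a point, along the unbounded ray, or not at all. I would assemble such gadgets for the finitely many polynomials defining the basic cell $S_i$, combining them additively (preserving convexity) so that the coordinate read off by the upper level is forced to $P_i(x)$ precisely when all the sign conditions cutting out $S_i$ hold, while a separate, always-feasible coordinate supplies the fallback value $0$ and guarantees $\Theta_i(x)\neq\emptyset$ and finiteness everywhere. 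The optimistic $\min$ is then arranged to select $P_i(x)$ on $S_i$ and $0$ on its complement. Combining the blocks through the separability lemma yields $f\in\cC^{\optim}_{\unbounded}$.

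\textbf{Main obstacle.} The hard part is not the logic of the reduction but the explicit block construction under the convexity constraint. The pieces $S_i$ of a semi-algebraic partition are in general neither open nor closed, so $\mathbf 1_{S_i}$ is genuinely discontinuous; reproducing this while keeping $Q$ convex in $y$ rules out every squaring/product trick and forces reliance on the attained-at-a-point / attained-on-a-ray / empty behaviour of linear programs over the unbounded box. The delicate verification is that, \emph{simultaneously for all $x$}, the engineered argmin makes the upper-level value equal $P_i(x)$ on $S_i$, equal $0$ off it, and never $\pm\infty$, especially along the cell boundaries where several sign conditions degenerate at once and the ``ray'' regime $g(x)=0$ overlaps competing branches. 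Checking that convexity in $y$ is retained at every stage and that the fallback coordinate always dominates outside $S_i$ is where the care lies; the remaining bookkeeping (refining to basic cells, matching index sets, reading off coordinates) is routine.
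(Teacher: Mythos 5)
Your overall architecture (right-hand inclusions via \Cref{cor:upper-bound-nonconvex-class}, refinement into disjoint basic semi-algebraic cells, additive combination of blocks exploiting separability of the lower level, so that $\Theta(x)$ factors as a Cartesian product of blockwise argmins) matches the paper's proof, which combines blocks $\mathbf{1}_{T_{ij}}$ exactly this way. The gap is in your elementary gadget. Your sign switch $\min_{s\ge 0} g(x)\,s$ has \emph{empty} argmin whenever $g(x)<0$, and this is fatal under your own reduction: since the argmin of a separable convex problem is the product of the blockwise argmins, emptiness of one block empties $\Theta(x)$ entirely, and by the paper's convention $\valoptim(x)=+\infty$ when $\Theta(x)=\emptyset$. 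So off the cell $S_i$ (where some defining polynomial is negative) your construction returns $+\infty$, not the fallback value $0$. No ``separate, always-feasible coordinate'' can dominate or repair this, because emptiness of a factor is absolute --- it cannot be outvoted by other coordinates. This also contradicts the hypothesis of your own closure lemma, which (correctly) requires $\Theta_k(x)\neq\emptyset$ for all $x$; the gadget you feed into it violates that hypothesis precisely in the regime you need it for.

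The paper's \Cref{lem:constrained-convex-construction} resolves exactly this tension, and it is worth seeing how. It only uses gadgets whose argmin is nonempty for \emph{every} $x$: a linear objective $-Q_j(x)w_j$ over the \emph{compact} interval $w_j\in[0,1]$ (argmin $\{1\}$, $[0,1]$, or $\{0\}$ according to the sign of $Q_j(x)$), and convex quadratics $(1-Q_j(x)z_j)^2$, $(1-tP(x))^2$ over $\RR$ (minimizers $1/Q_j(x)$, $1/P(x)$ when defined, all of $\RR$ otherwise). The strict-inequality detection you struggle with is then done not by the argmin alone but by \emph{evaluation in the upper level}: the upper objective is the product $(1-tP(x))\prod_j w_jz_jQ_j(x)$, whose factors $Q_j(x)$ and $(1-tP(x))$ neutralize the ambiguous minimizers at the boundary $Q_j(x)=0$ or $P(x)=0$, so that every minimizer yields the same value $\mbf{1}_S(x)$. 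This simultaneously fixes your boundary-degeneracy worry and gives the optimistic and pessimistic statements in one stroke (since all minimizers agree, $\inf$ and $\sup$ over $\Theta(x)$ coincide). As written, your proposal is missing a working building block, and the ray-based trichotomy cannot supply one.
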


% \begin{theorem}
% \label{theorem:convex-closed-bilevel-class}
%     For any piecewise polynomial function $f$, there exists two polynomials $P,Q$ and a closed lower-level feasible set $\cY$ such that:
%     \begin{enumerate}
%         \item $Q$ and the set $\cY$ are convex, i.e., the lower-level optimization problem is convex.
%         \item The corresponding optimistic (resp. pessimistic) bilevel problem equals $f$.
%     \end{enumerate}
% \end{theorem}

The proof of \Cref{theorem:convex-closed-bilevel-class} is based on the following lemma.
\begin{lemma}[Bilevel formulation for semi-algebraic characteristic functions]
    \label{lem:constrained-convex-construction}
    Consider a basic semi-algebraic set $S$ (cf. \Cref{def:semi-algebraic-set}). There exists a polynomial bilevel problem with a convex lower-level over an unbounded box whose value function is the characteristic function $\mbf{1}_S$.
\end{lemma}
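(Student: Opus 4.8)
The plan is to build, for each defining (in)equality of $S$, a small \emph{convex} lower-level gadget on its own block of variables, to add these gadgets so that the lower level stays jointly convex in $\secondvar$ and its $\argmin$ $\Theta(\firstvar)$ factorizes as a product, and finally to recover $\mbf 1_S$ by taking a product objective at the upper level and selecting it with $\min$ (optimistic) or $\sup$ (pessimistic). Write $S=\{\firstvar:\ p(\firstvar)=0,\ q_j(\firstvar)>0,\ j\in J\}$ (renaming the defining polynomials to avoid clash with the bilevel data). The whole difficulty is to detect the \emph{strict} inequalities $q_j>0$ with a lower level that is convex in $\secondvar$: the nonconvex square $(1-q_j\secondvar^2)^2$ used in \Cref{theorem:closed=SA} is no longer allowed, and the non-attainment trick is also unavailable since a convex polynomial bounded below on a box attains its infimum.

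The key device for the optimistic case is that a \emph{linear} objective over a box selects a vertex whose choice flips across the hypersurface $\{q_j=0\}$. Concretely, minimizing $-q_j(\firstvar)\rho_j$ over $\rho_j\in[0,1]$ has $\argmin=\{1\}$ when $q_j(\firstvar)>0$, $\argmin=[0,1]$ when $q_j(\firstvar)=0$, and $\argmin=\{0\}$ when $q_j(\firstvar)<0$; the degenerate whole-edge case lands precisely on the \emph{closed} boundary $\{q_j=0\}$, so the attainable values of $\rho_j$ are pinned to $1$ exactly on the \emph{open} set $\{q_j>0\}$. For the equality I would use a square-of-affine penalty: minimizing $(1-\rho_0-p(\firstvar)y_0)^2$ over $\rho_0\in[0,1],\ y_0\in\RR$ pins $\rho_0=1$ on the closed set $\{p=0\}$ and frees $\rho_0\in[0,1]$ elsewhere (because $p(\firstvar)y_0$ ranges over all of $\RR$ as soon as $p(\firstvar)\neq0$). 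I then set
\begin{equation*}
    Q(\firstvar,\secondvar)=\big(1-\rho_0-p(\firstvar)y_0\big)^2-\sum_{j\in J}q_j(\firstvar)\rho_j,\qquad
    P(\firstvar,\secondvar)=\rho_0\prod_{j\in J}\rho_j,\qquad
    \cY=[0,1]\times\RR\times[0,1]^{|J|},
\end{equation*}
which is convex in $\secondvar=(\rho_0,y_0,(\rho_j))$, separable across the blocks, and whose minimum is attained; hence $\Theta(\firstvar)$ is the product of the per-block $\argmin$ sets, and since $\rho_0,\rho_j\ge0$ are independent,
\begin{equation*}
    \valoptim(\firstvar)=\min_{\secondvar\in\Theta(\firstvar)}\rho_0\prod_{j\in J}\rho_j
    =\Big(\min_{\Theta}\rho_0\Big)\prod_{j\in J}\Big(\min_{\Theta}\rho_j\Big)
    =\mbf 1_{\{p(\firstvar)=0\}}\prod_{j\in J}\mbf 1_{\{q_j(\firstvar)>0\}}=\mbf 1_S(\firstvar).
\end{equation*}

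The pessimistic case is handled by the dual gadgets. There I would detect $q_j>0$ by minimizing the square $(\rho_j-q_j(\firstvar)y_j)^2$ over $\rho_j\in[0,1],\ y_j\in[0,+\infty)$, whose attainable range for $\rho_j$ is all of $[0,1]$ when $q_j(\firstvar)>0$ (take $y_j=\rho_j/q_j(\firstvar)$) and collapses to $\{0\}$ as soon as $q_j(\firstvar)\le0$; and detect $p=0$ by minimizing $(p(\firstvar)\rho_0)^2$ over $\rho_0\in[0,1]$, whose range is $[0,1]$ on $\{p=0\}$ and $\{0\}$ off it. With the same product objective $P=\rho_0\prod_{j}\rho_j$ and the unbounded box $\cY=[0,1]^{1+|J|}\times[0,+\infty)^{|J|}$, taking the \emph{supremum} over $\Theta(\firstvar)$ yields $\valpessim=\big(\sup_{\Theta}\rho_0\big)\prod_{j}\big(\sup_{\Theta}\rho_j\big)=\mbf 1_S$.

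The main obstacle, as flagged above, is precisely the interplay between strictness and convexity: I cannot use a nonconvex sum of squares nor non-attainment, and the $\argmin$ of a convex program collapses to a face only on \emph{closed} parameter sets. The resolution is that the two extreme regimes of a parametrized one-dimensional convex program over an interval---a linear objective (optimistic) and a square-of-affine objective with a one-sided slack (pessimistic)---place the degenerate, edge-valued case exactly on $\{q_j=0\}$, so that the open set $\{q_j>0\}$ is captured with the correct boundary behavior in each selection mode. Verifying the three sign cases for each gadget, and checking that the product objective reads off the conjunction of the conditions, are then routine.
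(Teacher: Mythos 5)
Your proof is correct, and its elementary gadgets are essentially the paper's: each strict inequality $q_j(\firstvar)>0$ is detected by minimizing the linear function $-q_j(\firstvar)\rho_j$ over $[0,1]$ (the paper's variable $w_j$ plays exactly this role), and the equality by a square of an affine expression with a free slack variable (the paper uses $(1-tp(\firstvar))^2$ with $t\in\RR$, you use $(1-\rho_0-p(\firstvar)y_0)^2$). The genuine difference lies in how the indicator is extracted. The paper's upper objective is $(1-tp(\firstvar))\prod_j\bigl(w_jz_jq_j(\firstvar)\bigr)$, which mixes the lower-level variables with the data polynomials so as to be \emph{constant on the entire argmin} $\Theta(\firstvar)$, equal to $\mbf{1}_S(\firstvar)$ there; hence one single triple $(P,Q,\cY)$ serves simultaneously as an optimistic and a pessimistic representation. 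You instead put the pure monomial $\rho_0\prod_j\rho_j$ upstairs and let the selection rule (min or sup over $\Theta(\firstvar)$) do the work, which forces two distinct constructions, one per mode: the pessimistic value of your optimistic gadget is $\prod_j\mbf{1}_{\{q_j\geq0\}}$, not $\mbf{1}_S$. This satisfies the lemma exactly as stated (and your boundary analysis at $\{q_j=0\}$ is handled correctly in both modes), but it is a strictly weaker tool than what the paper's proof delivers, and the difference matters downstream: in the proof of \Cref{theorem:convex-closed-bilevel-class} the gadgets are aggregated as $\sum_i P_i(\firstvar)\sum_j F_{ij}$ with weights $P_i(\firstvar)$ of arbitrary sign, and that argument relies on each $F_{ij}$ taking the \emph{same} value at every lower-level minimizer; with your selection-dependent gadgets, the outer optimistic min would effectively maximize the terms carrying $P_i(\firstvar)<0$ and read off the wrong indicator. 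This is repairable if one insists on your version (e.g., normalize the weights to be nonnegative by adding $B+\|\firstvar\|^N$, in the spirit of the shift used in \Cref{theorem:convex-compact-bilevel-class}, and subtract it back in the upper objective), but selection-invariance comes for free with the paper's construction.
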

\begin{proof}
    Let $S$ be of the form:
    \begin{equation}
        \label{eq:basic-elementary-function-proof}
        S:= \{P(x) = 0 \text{ and } Q_j(x) > 0, j \in J\} \subseteq \RR^n,
    \end{equation}
    where $P$ and $Q_j, j \in J$ are polynomials. Consider the following bilevel problem:
    \begin{equation*}
        \begin{aligned}
        &\min_{x} && F(x, w, z, t):= (1 - tP(x))\prod_{j = 1}^{|J|} \left(w_jz_jQ_j(x)\right)\\
        &\text{s.t.} && (w,z,t) \in \underset{w \in [0,1]^{|J|}, z \in \RR^{|J|}, t \in \RR}{\argmin} G(x,w,z,t) := (1 - tP(x))^2 + \sum_{j \in J} (1 - Q_j(x)z_j)^2 - Q_j(x)w_j
    \end{aligned} 
    \end{equation*}
    where $w_j, z_j$ indicate respectively the $j$th coordinate of the vectors $w$ and $z$ in $\RR^{|J|}$. 
    
    By construction, $F, G$ are polynomials. In addition, the lower-level problem is also clearly convex because given a fixed $x$, $G(x,\cdot,\cdot,\cdot)$ is linear w.r.t to $w$ and (semi-definite positive) quadratic w.r.t $z$ and $t$. 
    It remains to prove that the constructed bilevel problem has the value function equal to $\mbf{1}_S$. As we will see, given a fixed $x$, although there are multiple minimizers $(w,z,t)$, they all yield the same value $F(x,w,z,t)$.
    
    Due to the convexity and -- more importantly -- the separation of variables $w,z,t$, we can specify the optimal solution $(w^\star(x),z^\star(x),t^\star(x))$ of $G(x,\cdot,\cdot,\cdot)$ for each fixed $x \in \RR^n$ as follows:
    \begin{equation*}
        w_j^\star(x) = \begin{cases}
            1 & \text{ if } Q_j(x) > 0\\
            [0,1] & \text{ if } Q_j(x) = 0\\
            0 & \text{ if } Q_j(x) < 0
        \end{cases}, j \in J,
    \end{equation*}
    \begin{equation*}
        z_j^\star(x) = \begin{cases}
            1/Q_j(x) & \text{ if } Q_j(x) \neq 0\\
            \RR & \text { otherwise}
        \end{cases}, j \in J, \qquad\qquad t^\star(x) = \begin{cases}
            1/P(x) & \text{ if } P(x) \neq 0\\
            \RR & \text { otherwise}
        \end{cases}.
    \end{equation*}
    Therefore,
    \begin{equation*}
        w_j^\star(x)z_j^\star(x)Q_j(x) = \begin{cases}
            1 & \text{ if } Q_j(x) > 0 \\
            0 & \text{ otherwise}
        \end{cases}.
    \end{equation*}
    Similarly, we have:
    \begin{equation*}
        1 - t^\star(x)P(x) = \begin{cases}
            1 & \text{ if } P(x) = 0\\
            0 & \text{ otherwise}
        \end{cases}.
    \end{equation*}
    The result follows immediately from the two above equalities.
\end{proof}

\begin{proof}[Proof of \Cref{theorem:convex-closed-bilevel-class}]
    Consider the piecewise polynomial $f = (S, \tupfuncN)$ as in \Cref{def:piecewise-polynomial}.
    Due to the definition of semi-algebraic sets (cf. \Cref{def:semi-algebraic-set}), for all $i = 1, \ldots, N$, $S_i$ can be written as:
    \begin{equation*}
        S_i = \bigcup_{j \in J}^M T_{ij},
    \end{equation*}
    where $T_{ij}, i = 1, \ldots, N, j \in J$ are basic pairwise disjoint semi-algebraic -- use \Cref{lemma:stratification-elementary-set} in \Cref{appendix:alternative-def} and the fact that the $S_i, 1 \leq i \leq N$ are disjoint.
    For each $T_{ij}$, we take two polynomials $F_{ij}$, $G_{ij}$ (where $G_{ij}$ is convex w.r.t $\secondvar$) and an unbounded box $\cY_{ij}$ such that the following bilevel problem:
    \begin{equation*}
        \begin{aligned}
            &\min_{\firstvar} && F_{ij}(\firstvar, \secondvar(\firstvar))\\
            &\text{s.t.} && \secondvar(\firstvar) \in \argmin_{\secondvar\in\cY_{ij}} G_{ij}(\firstvar, \secondvar).
    \end{aligned} 
    \end{equation*}
    has value function equal to the characteristic function $\mbf{1}_{T_{ij}}$. Such polynomials and sets exist by the representability result for characteristic functions \Cref{lem:constrained-convex-construction}. 

    Consider the following polynomial bilevel problem:
    \begin{equation*}
        \begin{aligned}
            &\min_{\firstvar} && P(\firstvar,\secondvar): = \sum_{i = 1}^N P_i(\firstvar)\left(\sum_{j \in J} F_{ij}(\firstvar, \secondvar_{ij}(\firstvar))\right)\\
            &\text{s.t.} && \secondvar(\firstvar) := (\secondvar_{ij}(x))_{1 \leq i \leq {N}, j \in J} \in \underset{\secondvar_{ij}\in\cY_{ij}}{\argmin} \; Q(\firstvar, \secondvar) := \sum_{i,j} G_{ij}(\firstvar, \secondvar_{ij}).
    \end{aligned} 
    \end{equation*}
    Due to the separation of variables $y_{ij}$, we have:
    \begin{equation*}
        \underset{\secondvar_{ij}\in\cY_{ij}}{\argmin} \sum_{i,j} G_{ij}(\firstvar, \secondvar_{ij}) = \bigotimes_{i, j} \argmin_{y_{ij}} \; G_{ij}(x, y_{ij}).
    \end{equation*}
    Therefore, by \Cref{lem:constrained-convex-construction}, we have for all $i \in I$:
    \begin{equation*}
        \sum_{j \in J} F_{ij}(\firstvar, \secondvar_{ij}(\firstvar)) = \sum_{j \in J} \mbf{1}_{T_{ij}} = \mbf{1}_{S_i},
    \end{equation*}
    where the second equality holds because the $S_{ij}$ are pairwise disjoint. Therefore, $$P(x,y(x)) = \sum_{i = 1}^N \mbf{1}_{S_i}(x)P_i(x) = f(x)$$ as we desire. In addition, $Q(\firstvar,\secondvar)$ is convex since it is equal to the sum of convex functions. This concludes the proof.
\end{proof}

Analogous to \Cref{theorem:convex-closed-bilevel-class}, we provide a class of functions that can be expressed using polynomial bilevel problems with convex lower-level problems and bounded box $\cY$. Similarly, the boundedness assumption on the lower-level imposes a semicontinuity restriction on the underlying value function class. Note that piecewise polynomial functions are bounded on compact sets by construction.

\begin{theorem}[Semicontinuous piecewise polynomials are value functions with box constraints and convex lower-level]
\label{theorem:convex-compact-bilevel-class}
    Any lower semicontinuous piece-wise polynomial function is the value function of a polynomial bilevel problem with lower-level objective convex over a bounded box, i.e.
    \begin{equation*}
        \begin{aligned}
            \PP \cap \LSC \subseteq \cC_{\textup{\bounded}} \subseteq \SA \cap \LSC \cap \BC.
        \end{aligned}
    \end{equation*}
\end{theorem}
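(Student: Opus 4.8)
The plan is to prove the two displayed chains separately; the right-hand inclusions are immediate and all the work lies in the left-hand ones. For the upper bounds, note that imposing convexity of the lower level only shrinks the class, so $\cC^{\optim}_{\bounded}\subseteq\cP^{\optim}_{\bounded}$ and $\cC^{\pessim}_{\bounded}\subseteq\cP^{\pessim}_{\bounded}$ straight from \Cref{def:class-lower-level-convex}. Hence $\cC^{\optim}_{\bounded}\subseteq\SA\cap\LSC\cap\BC$ and $\cC^{\pessim}_{\bounded}\subseteq\SA\cap\USC\cap\BC$ follow at once from \Cref{theorem:compact=lsc+bc} (equivalently, from \Cref{cor:upper-bound-nonconvex-class} and \Cref{lemma:upper-bound-compact-nonconvex-class}). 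It therefore remains to establish $\PP\cap\LSC\subseteq\cC^{\optim}_{\bounded}$, the pessimistic/$\USC$ case being symmetric.

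For the reduction, take $f=(S,(P_i))\in\PP\cap\LSC$ with $S=\{S_i\}$ a disjoint semi-algebraic partition and $f=P_i$ on $S_i$. As in \Cref{theorem:compact=lsc+bc}, the growth bound \Cref{lemma:growth-semi-algebraic-functions} together with boundedness on compacts produces a positive polynomial $p(x)=B+\|x\|^{2M}$ with $|f|\le p$; after rescaling the upper objective by $p$ (which is $>0$ and independent of the lower variable, hence leaves the optimistic selection unchanged) we may assume the relevant $t$-values lie in $[-1,1]$, since $|P_i(x)|\le p(x)$ on $\overline{S_i}$ by continuity. I would then record the closed-graph identity: because $x\mapsto(x,P_i(x))$ is continuous, $\overline{\graph f}=\bigcup_i\{(x,P_i(x)):x\in\overline{S_i}\}$, and by lower semicontinuity $f(x)=\min\{P_i(x):x\in\overline{S_i}\}$ (the minimum is over the nonempty family of $i$ with $x\in\overline{S_i}$). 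This min-of-polynomials-over-closed-pieces formula is exactly what the construction must reproduce.

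The construction marries the two earlier ones. I keep the \emph{convex-lower-level template} of \Cref{lem:constrained-convex-construction} and \Cref{theorem:convex-closed-bilevel-class}: decompose each $S_i$ into disjoint basic pieces $T_{ik}$ (via \Cref{lemma:stratification-elementary-set}), attach one separated block of lower variables to each piece so that $Q$ is a \emph{sum} of per-piece gadgets, hence convex in $y$, and push all multiplicative coupling into the upper objective $P$, whose convexity is not required. To make the box \emph{bounded}, I replace the unbounded inverse-gadgets of \Cref{lem:constrained-convex-construction} (which used $z_j=1/Q_j$, $t=1/P$) by the bounded convex detector of \Cref{theorem:compact=lsc+bc}, namely blocks $G=(R-(R^2+1)w)^2$ with $w\in[0,1/2]$: this is convex in $w$, lives in a bounded box, and vanishes exactly on $\{R\ge0\}$. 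Detecting the closed conditions cutting out each $\overline{T_{ik}}$ with these blocks, forcing the candidate value through $t_i\,p(x)=P_i(x)$ with $t_i\in[-1,1]$, and combining pieces in $P$ as in \Cref{theorem:convex-closed-bilevel-class}, yields a convex lower level over a bounded box $\cY$.

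The delicate step — and the reason this case is harder than \Cref{theorem:convex-closed-bilevel-class} — is reconciling lower-level convexity with the $\min$-selection forced by lower semicontinuity. Convexity makes $\Theta(x)$ a convex set, so one cannot directly encode the disconnected candidate set $\{P_i(x):x\in\overline{S_i}\}$ as a disconnected $t$-slice the way the nonconvex product construction of \Cref{theorem:compact=lsc+bc} does; the selection must instead be realized as the optimistic minimization of the (unconstrained-in-convexity) upper objective \emph{over} a convex argmin. Moreover the bounded detectors necessarily test \emph{closed} conditions ($\ge 0$), so on the shared boundaries of the pieces several blocks are simultaneously active. The main obstacle is thus to verify that on these boundary sets the optimistic minimization $\valoptim(x)=\min_{y\in\Theta(x)}P(x,y)$ returns precisely the lower envelope $\min_i\{P_i(x):x\in\overline{S_i}\}$ and never a strictly smaller value — this is exactly where the hypothesis $f\in\LSC$ enters, while on the interior of each $S_i$ a single piece is active and the value is trivially $P_i(x)$. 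The pessimistic statement then follows from the mirror-image construction with $\max$-selection under $f\in\USC$.
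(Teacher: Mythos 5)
Your overall architecture (the upper bounds via \Cref{theorem:compact=lsc+bc}, the identity $f(x)=\min\{P_i(x):x\in\overline{S_i}\}$ coming from lower semicontinuity, the normalization via the growth bound, and a sum-separable convex lower level with all multiplicative coupling pushed into the upper objective) matches the paper's proof. The genuine gap is in your choice of bounded detector. You propose to certify the closed conditions $\{R(x)\ge 0\}$ cutting out the closures with blocks $(R(x)-(R(x)^2+1)w)^2$, $w\in[0,1/2]$. For fixed $x$ this block is \emph{strictly} convex in $w$, so its argmin is the singleton $w^\star(x)=\max(R(x),0)/(R(x)^2+1)$, which varies continuously in $x$ and equals $0$ both when $R(x)<0$ and when $R(x)=0$. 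Hence the upper level can only evaluate polynomials at $(x,w^\star(x))$, e.g. $(R^2+1)w^\star=\max(R,0)$, and can never recover the indicator $\mbf{1}_{\{R\ge 0\}}$: any polynomial $\psi$ with $\psi(x,w^\star(x))=\mbf{1}_{\{R(x)\ge 0\}}$ would need $\psi(x,0)=1$ at points where $R(x)=0$ and $\psi(x,0)=0$ at arbitrarily close points where $R(x)<0$, contradicting continuity. Since every one of your detector blocks has a unique, continuously varying minimizer, the optimistic selection has no ties to exploit there, so each per-piece term in the upper objective is a continuous function of $x$, and the value function (a minimum of finitely many such terms, via the simplex weights) is continuous across piece boundaries --- it even sees "activity" $\max(R,0)$ that vanishes on them. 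It therefore cannot equal a discontinuous lower envelope. Concretely, for $f=\mbf{1}_{(-\infty,0)}$, which is lower semicontinuous and piecewise polynomial, your construction outputs the wrong value at $x=0$ (both detectors return $w^\star=0$) and the wrong limit as $x\to 0^-$. So the "delicate step" you flag at the end is not a verification left to do: with this detector it provably fails.

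This is precisely where the paper's construction differs. It uses \emph{linear} detectors $G^i_j(x,z^i_j)=-\sum_{k}P^i_{jk}(x)z^i_{jk}$ over $z^i_{jk}\in[0,1]$, whose argmin is the whole interval $[0,1]$ whenever $P^i_{jk}(x)=0$. This set-valuedness at the boundary is the mechanism, not a defect: after normalizing so that $P_i\le 0$, the optimistic selection breaks the tie by choosing $\prod_k z^i_{jk}=1$ exactly when $x\in\bigcap_k\{P^i_{jk}\ge 0\}$, which recovers the closed-set indicator that no single-valued convex gadget can produce; the simplex gadget $(1-\sum_{ij}s_{ij})^2$ (also with multivalued argmin) then lets the optimistic weights select $\min_{i:\,x\in\overline{S_i}}P_i(x)=f(x)$. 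To repair your proof you must replace the strictly convex detector by a gadget whose argmin is multivalued on the boundary (the linear one being the natural choice) so that optimistic tie-breaking, rather than the upper-level polynomial, implements the $\min$-selection.
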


\begin{proof}
    In the following, we show that if $f \in \PP \cap \LSC$, then $f \in \cC_{\bounded}$, which is the first inclusion. The second inclusion was already justified in \Cref{theorem:compact=lsc+bc}.
    
    Consider $f = (S, \tupfuncN)$ a lower semicontinuous piecewise polynomial function. Our construction is based on two observations:
    \begin{enumerate}
        \item Given a point $x \in \RR^n$, define $\cI(x) := \{i \in \cI \mid x \in \overline{S_i}\}$ the subset of indices in which $x$ belongs to the closure of corresponding sets. Since $f$ is lower-semicontinuous, we have:
        \begin{equation}
            \label{eq:minimum-selection}
            f(x) = \min_{i \in \cI(x)} P_i(x).
        \end{equation}
        \item Similar to the proof of \Cref{theorem:compact=lsc+bc}, there exists a pair integer $N$ and a positive constant $B > 0$ such that $\max_{i = 1, \ldots, N} P_i(x) \leq B + \|x\|^N$. Thus, in the following construction, WLOG, one can assume that $P_i(x) \leq 0, \forall i = 1, \ldots, N, \forall x \in \RR^n$. Otherwise, we can consider the function $r(x) = f(x) - B - \|x\|^N$, which is also an element of $\cS$, lower-semicontinuous and all $r_i(x) = P_i(x) - B - \|x\|^N \leq 0$ for all $x$. If we can find functions $P, Q$ and a bounded box so that the corresponding bilevel problem has the value function equal to $r(x)$, then that of $(P + B + \|x\|^N, Q, \cY)$ is equal to $h$. 
    \end{enumerate}
    For each $i=1,\ldots, N$, since $S_i$ is semi-algebraic, so is its closure. By \Cref{lemma:closed-semi-algebraic-set}, they can be represented by:
    \begin{equation}
        \label{eq:closure-elementary-set-representation}
        \bar{S_i} = \bigcup_{j \in J}\bigcap_{k \in K} \{x \mid P^i_{jk}(x) \geq 0\}.
    \end{equation}
    for some index sets $J, K$ and polynomials $P^i_{jk}$. These polynomials will appear in our construction. 

    Similar to the proofs of other results, we introduce the sets of variables that will be used in our construction. 
    \begin{table}[bhtp]
        \centering
        \begin{tabular}{ccccc}
            \toprule
             Name & Dimension & Coordinates notation & Type (Upper/Lower variable) & Feasible set \\
             \midrule
             $x$ & $n$ & Not used & Upper& N/A\\ 
             \midrule
             $z$ & $|I||J||K|$ & $z_{jk}^i$ & Lower & $[0,1]^{|I| \times |J| \times |K|}$ \\
             \midrule
             $s$ & $|I||J|$ & $s_{ij}$& Lower & $[0,1]^{|I|\times|J|}$\\
             \bottomrule
        \end{tabular}
        \caption{Specification for the variables of the bilevel formulation.}
        \label{tab:variables-lower-compact-semicontinuous-function}
    \end{table}
    
    Denote $z^i_j = (z^i_{jk})_{k \in K}$, consider the optimization problem:
    \begin{equation*}
        \begin{aligned}
            \underset{z^i_j \in [0,1]^{|K|}}{\text{Minimize}} \quad && G^i_j(x,z^i_j) & = - \sum_{k \in K} P^i_{jk}(x)z^i_{jk},\\
        \end{aligned}
    \end{equation*}
    where $P_{jk}^i$ are polynomials defined in \eqref{eq:closure-elementary-set-representation}. Given a fixed value of $x$, the optimal value of $(z^i)^\star$ is given by:
    \begin{equation*}
        (z_{jk}^i)^\star = \begin{cases}
            0 & \text{ if } P_{jk}^i(x) < 0\\
            [0,1] & \text{ if } P_{jk}^i(x) = 0\\
            1 & \text{ if } P_{jk}^i(x) > 0
        \end{cases}.
    \end{equation*}
    Therefore, we can conclude that:
    \begin{equation}
        \label{eq:optimized-product-z-i}
        \prod_{k \in K} (z_{jk}^i)^\star = \begin{cases}
            0 & \text{ if } x \notin \bigcap_{k \in K} \{P^i_{jk}(x) \geq 0\}\\
            [0,1] & \text{ if } x \in \bigcap_{k \in K} \{P^i_{jk}(x) \geq 0\} \text{ and } \exists k \in K, P_{jk}^i(x) = 0\\
            1 & \text{ if } x \in \bigcap_{k \in K} \{P^i_{jk}(x) \geq 0\} \text{ and } \forall k \in K, P_{jk}^i(x) > 0\\
        \end{cases}.
    \end{equation}
    We consider the following bilevel problem:
    \begin{equation}
        \label{eq:bilevel-construction-convex-bounded}
        \begin{aligned}
            P &= \sum_{(i,j) \in I \times J} P_i(x)\left(s_{ij}\prod_{k \in K} z_{jk}^i\right),\\
            Q &= (1 - \sum_{(i,j) \in I \times J} s_{ij})^2 + \sum_{(i,j) \in I \times J} G^i_j(x,z^i_j),\\
            s_{ij} & \in [0,1], \forall (i,j) \in I \times J,\\
            z_{jk}^i & \in [0,1], \forall (i,j,k) \in I \times J \times K.  
        \end{aligned}
    \end{equation}
    The lower-level problem is obviously convex since $(1 - \sum_{(i,j) \in I \times J} s_{ij})^2$ is convex w.r.t $s_{ij}$, $G^i_j$ is linear w.r.t $z_j^i$ and the feasible set is a hypercube.
    
    In particular, $P_i(x)s_{ij} \leq 0$ since $s_{ij} \in [0,1]$ and $P_i(x)$ is assumed to be negative. Due to the optimistic nature of the bilevel problem and \eqref{eq:optimized-product-z-i}, we get:
    \begin{equation}
        \label{eq:optimize-product-h-z}
        P_i(x)\prod_{k \in K} z_{jk}^i = \begin{cases}
            0 & \text{ if } x \notin \bigcap_{k \in K} \{P^i_{jk}(x) \geq 0\}\\
            P_i(x) & \text{ if } x \in \bigcap_{k \in K} \{P^i_{jk}(x) \geq 0\}\\
        \end{cases}.
    \end{equation}
    Moreover, any $(s_{ij})_{(i,j) \in |I| \times |J|} \in [0,1]^{|I| \times |J|}$ such that $\sum_{i,j} s_{ij} = 1$ is optimal for $Q$. Since we consider the optimistic bilevel formulation, the value of $P$ in  \eqref{eq:bilevel-construction-convex-bounded} becomes:
    \begin{equation*}
        \begin{aligned}
            &\underset{s_{i,j} \geq 0, \sum s_{ij} = 1}{\min} \sum_{(i,j) \in I \times J} s_{ij} \left(P_i(x)\prod_{k \in K} z^i_{jk} \right).\\
            %\overset{\eqref{eq:optimize-product-h-z}}{=}& \underset{s_{i,j} \geq 0, \sum s_{ij} = 1}{\text{Minimize}} \sum_{(i,j) \in I \times J} s_{ij} \left(h_i(x)\mbf{1}_{x \in \bigcap_{k \in K} \{P^i_{jk}(x) \geq 0\}}\right)\\
        \end{aligned}
    \end{equation*}
    Thus, the value of $P$ will be equal to the smallest value of $P_i(x)\prod_{k \in K} z^i_{jk}$. This value will be equal to $f(x)$ because:
    \begin{equation*}
        \begin{aligned}
            \underset{(i,j) \in I \times J}{\min} \left\{P_i(x)\prod_{k \in K} z^i_{jk}\right\} &\overset{\eqref{eq:optimize-product-h-z}}{=} \underset{(i,j) \in I \times J}{\min} \left\{P_i(x)\mbf{1}_{x \in \bigcap_{k \in K} \{P^i_{jk}(x) \geq 0\}}\right\} \\
            &\overset{\eqref{eq:closure-elementary-set-representation}}{=} \underset{i \in I}{\min} \left\{P_i(x)\mbf{1}_{x \in \overline{s_i}}\right\} \\
            &\overset{\eqref{eq:minimum-selection}}{=} f(x).
        \end{aligned}
    \end{equation*}
    This concludes the proof. 
\end{proof}

\subsubsection{Extension to arbitrary convex, compact, semi-algebraic lower-level constraints}

\Cref{theorem:convex-closed-bilevel-class} and \Cref{theorem:convex-compact-bilevel-class} are limited to box lower-level constraint set $\cY$. While these constraints are explicit, this leaves open the question of the tightness of the corresponding inclusions. We will not answer this precise question here, but will consider a related question by allowing the lower-level constraint set $\cY$ to be an arbitrary convex compact semi-algebraic set. We remark that for such a lower-level constraint set $\cY$, the leftmost inclusion in \Cref{theorem:convex-compact-bilevel-class} is strict. For example, the Euclidean norm can be expressed as a maximum over a ball and is not piecewise polynomial.

We will use the following shorthand to describe the corresponding class of value functions which admit the required bilevel representation:
\begin{equation}
    \label{eq:convex-compact-class}
    \cC_{\coco} := \{h \mid \exists P, Q \text{ polynomials}, Q \text{ convex in } y, \cY \text{ compact, convex, semi-algebraic s.t. } \valbilevel = h\}
\end{equation}
where $\coco$ stands for ``compact convex''. Using \Cref{theorem:compact=lsc+bc}, we get an immediate relation:
\begin{equation*}
    \begin{aligned}
        \cC_{\bounded} \subseteq \cC_{\coco} \subseteq \SA \cap \LSC \cap \BC.
    \end{aligned}
\end{equation*}
since bounded boxes are compact, convex, and semi-algebraic. Again, it is natural to investigate whether these inclusions are strict. Our following result shows that the second inclusion (in the above equations) is actually an equality.

\begin{theorem}[Value functions of polynomial bilevel programming with convex, compact and semi-algebraic lower-level constraints]
    \label{theorem:compact-convex=LSC}
    Any semi-algebraic, lower semicontinuous function that is bounded on compact sets is the value function of a polynomial bilevel problem with lower objective convex over a compact, convex, and semi-algebraic set. In particular,
    \begin{equation*}
        \begin{aligned}
            \cC_{\textup{\coco}} &= \SA \cap \LSC \cap \BC.
        \end{aligned}
    \end{equation*}
\end{theorem}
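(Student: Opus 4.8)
The inclusions $\cC^{\optim}_{\coco}\subseteq\SA\cap\LSC\cap\BC$ and $\cC^{\pessim}_{\coco}\subseteq\SA\cap\USC\cap\BC$ are already recorded above: a compact convex semi-algebraic $\cY$ is in particular compact, so \Cref{cor:upper-bound-nonconvex-class} and \Cref{lemma:upper-bound-compact-nonconvex-class} apply verbatim (convexity of $Q$ plays no role there). The content is the reverse inclusion, which I treat in the optimistic case; the pessimistic one is symmetric. Fix $f\in\SA\cap\LSC\cap\BC$. The plan is to reuse the box-constrained representation of \Cref{theorem:compact=lsc+bc} and then \emph{convexify its lower level by lifting to moments}. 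Concretely, the construction in the proof of \Cref{theorem:compact=lsc+bc} produces polynomials $P_0(x,y)=t$ and $Q_0(x,y)\ge 0$, together with a compact box $\cY_0=[-1,1]\times[0,\tfrac12]^{N}$ (here $y=(t,z)$), such that $\Theta_0(x):=\{y\in\cY_0\mid Q_0(x,y)=0\}$ is non-empty and $f(x)=\min\{t\mid(t,z)\in\Theta_0(x)\}$, the minimum being attained. The only feature preventing $f\in\cC^{\optim}_{\coco}$ is that $Q_0$ (a product of sums of squares) is not convex in $y$.

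To remove this defect I would pass to the moment body of $\cY_0$. Fix an even integer $2d\ge\deg_y Q_0$ and let $v(y)=(y^{\alpha})_{|\alpha|\le 2d}\in\RR^{s}$ be the associated Veronese embedding. Set
\begin{equation*}
    \cY:=\conv{v(\cY_0)}\subseteq\RR^{s}.
\end{equation*}
This $\cY$ is compact, convex and semi-algebraic: $v(\cY_0)$ is the image of the compact semi-algebraic set $\cY_0$ under a polynomial map, hence compact and semi-algebraic by \Cref{theorem:projection-theorem}; and by Carath\'{e}odory's theorem $\cY$ is the image of the compact semi-algebraic set $\Delta_{s}\times\cY_0^{\,s+1}$ under the polynomial map $(\lambda,y_0,\dots,y_s)\mapsto\sum_{l}\lambda_l v(y_l)$, so it is again compact and semi-algebraic. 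Moreover $\cY$ is exactly the set of truncated moment vectors $\{\int v\,d\mu\}$ of probability measures $\mu$ on $\cY_0$ (Carath\'{e}odory provides finitely atomic representatives, and any moment vector lies in the closed, hence compact, convex hull of $v(\cY_0)$).

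With $m\in\cY$ as the lower-level variable I would take as final data
\begin{equation*}
    P(x,m):=m_{e_t}, \qquad Q(x,m):=\sum_{\alpha}c_{\alpha}(x)\,m_{\alpha}\quad\text{where}\quad Q_0(x,y)=\sum_{\alpha}c_{\alpha}(x)\,y^{\alpha},
\end{equation*}
where $m_{e_t}$ is the coordinate of $m$ indexing the monomial $t$. Both are polynomials in $(x,m)$ and both are \emph{linear}, hence convex, in $m$. For fixed $x$ one computes $\min_{m\in\cY}Q(x,m)=\min_{y\in\cY_0}Q_0(x,y)=0$, and since $Q_0\ge 0$ the argmin $\Theta(x):=\argmin_{m\in\cY}Q(x,m)$ consists precisely of the moment vectors of measures supported in $\Theta_0(x)$, i.e. $\Theta(x)=\conv{v(\Theta_0(x))}$. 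Minimizing the linear functional $P(x,\cdot)$ over this compact convex set is attained at an extreme point $v(y^\star)$ with $y^\star\in\Theta_0(x)$, whence
\begin{equation*}
    \valoptim(x)=\min_{m\in\Theta(x)}m_{e_t}=\min_{y\in\Theta_0(x)}t(y)=f(x).
\end{equation*}
The pessimistic statement follows by replacing $\min$ with $\max$ and invoking the upper-semicontinuous version of \Cref{theorem:compact=lsc+bc}.

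The main obstacle is exactness of the lift. The whole argument hinges on using the \emph{true} moment body $\conv{v(\cY_0)}$ rather than a spectrahedral (moment/localizing-matrix) outer relaxation: exactness guarantees that the minimizers of the convexified lower level are genuine moment vectors of measures supported on $\Theta_0(x)$, so that no spurious pseudo-moment solution can push $m_{e_t}$ below $f(x)$ and create a relaxation gap. The two facts that make this work---that $\conv{v(\cY_0)}$ is semi-algebraic (Carath\'{e}odory together with \Cref{theorem:projection-theorem}) and that it coincides exactly with the achievable truncated moment vectors---are the crux; once they are in place, the identification $\Theta(x)=\conv{v(\Theta_0(x))}$ and the evaluation of the optimistic value are routine.
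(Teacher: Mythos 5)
Your proposal is correct and follows essentially the same route as the paper's own proof: both start from the box-constrained construction of \Cref{theorem:compact=lsc+bc}, lift the lower-level variables to the vector of monomials (your Veronese embedding $v$ is the paper's monomial map $M_d$), take the new constraint set to be $\conv{M_d(\cY_0)}$ --- compact, convex and semi-algebraic via Carath\'{e}odory plus \Cref{theorem:projection-theorem} --- and use linearity of the lifted $P$ and $Q$ to rule out any relaxation gap. The only cosmetic difference is that you package the key step as the identity $\Theta(x)=\conv{v(\Theta_0(x))}$ and an extreme-point argument, while the paper decomposes the optimal lifted point into Carath\'{e}odory atoms and checks each atom is itself optimal; these are the same argument.
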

\begin{proof}
    This proof is based on the following fact: one can equivalently reformulate the lower-level polynomial optimization problem in the proof of \Cref{theorem:compact=lsc+bc}, by a convex optimization problem with a compact, convex, and semi-algebraic feasible set. {This reformulation leads to a modified argmin correspondence in the lower level which does not change the value function overall}. %Thus, we can replace the lower-level polynomial optimization problem in the proof of \Cref{theorem:compact=lsc+bc}.

    Let us provide details: consider the following optimization problem:
    \begin{equation*}
        \underset{\omega \in \Omega}{\text{Minimize}} \quad F(\omega),
    \end{equation*}
    where $\Omega \subseteq \RR^n$ is a compact, semi-algebraic set and $F$ is a polynomial. Let $d = \deg(F)$ be the highest degrees of a monomial (a product of powers of variables with nonnegative integer exponents $x_1^{d_1}\ldots x_n^{d_n}, d_i \in \NN, \forall 1 \leq i \leq n$) of $F$. Consider the function $M_d: \Omega \to \RR^{K_d}$ that maps a point $\omega \in \RR^n$ to the vector of monomials up to degree $d$ (the constant $K_d = {n + d \choose n}$ is the number of such monomials). Since $F$ is a polynomial, it can be written as: $F(\omega) = c^\top M_d(\omega)$ for some vector $c \in \RR^{R_d}$, i.e., $F(x)$ is linear w.r.t $M_d(\omega)$. Thus, the original polynomial optimization problem can be written equivalently as:
    \begin{equation*}
        {\text{Minimize}} \quad c^\top \lambda \quad \text{such that} \quad \lambda \in \conv{M_d(\Omega)},
    \end{equation*}
    where $M_d(\Omega) \subseteq \RR^{R_d}$ is the image of $\Omega$ via the map $M_d$ and $\conv{\cdot}$ is the convex hull of a set. Since $\Omega$ is compact, so are $M_d(\Omega)$ and its convex hull. The semi-algebraicity can be argued similarly (using Carathéodory's theorem for convex hull \cite[Theorem $0.0.1$]{vershyninHighdimensionalProbabilityIntroduction2018} and \Cref{theorem:projection-theorem}). Thus, this new formulation has a linear objective function with a compact, convex, and semi-algebraic feasible set.  

    Now, we will plug this reformulation into the construction in the proof of \Cref{theorem:compact=lsc+bc}. Consider $f \in \SA \cap \LSC \cap \BC$. Using the construction in \eqref{eq:construction-compact-lsc}, we remind readers that there exists two functions $P(x,t,z)$ and $Q(x,t,z)$ ($x$ is the upper-level variable, $t,z$ are lower-level variables) and a bounded box $\cY$ such that:
    \begin{enumerate}
        \item The value function of the associated bilevel optimization problem equals $f$.
        \item The function $P(x,t,z) = t$.
        \item For all $x$, the minimum value of $y$ such that there exists $z$ satisfying $(t,z) \in \argmin Q(x, \cdot, \cdot)$ is $f(x)$.
    \end{enumerate}
    Using this information, we can construct a new bilevel optimization problem as follows: instead of using $t,z$ as lower-level variables, we will use $\lambda$, a (vector-valued) variable representing all the monomials of the concatenation $(t,z)$ up to the degree of interest. We write $\lambda[t]$, for the coordinate entry of $\lambda$ corresponding to the degree-one monomial $t$. The upper-level, lower-level functions, and the lower-level feasible set of the new bilevel problem are respectively given by:
    \begin{equation*}
        \begin{aligned}
            P'(x,\lambda) &= \lambda[t], \\
            Q'(x,\lambda) &= c_Q^\top \lambda\\
            \cY' &= \conv{M_d(\cY)},
        \end{aligned}
    \end{equation*}
    where $c_Q$ is the vector containing the coefficients of the monomials of $Q$. For a given $x$, consider $\lambda^\star$ is an element of $\conv{M_d(\cY)}$ that satisfies:
    \begin{enumerate}
        \item $\lambda^\star \in \argmin_{\lambda \in \cY'} Q'(x, \lambda)$.
        \item $\lambda[t]^\star$ attains the minimum value among elements in $\argmin_{\lambda \in \cY'} Q'(x, \lambda)$ (since $P'(x,\lambda) = \lambda[t]$ and we are considering optimistic bilevel problems).
    \end{enumerate}
    Again, by Carathéodory's theorem \cite[Theorem $0.0.1$]{vershyninHighdimensionalProbabilityIntroduction2018}, an element of $\cY'$ must be written as a convex combination of at most $C = \dim(\lambda) + 1$ ($\dim(\lambda)$ is the dimension of $\lambda$) elements of $M_d(\cY)$. Therefore, there exists $\lambda_1, \ldots, \lambda_C \in M_d(\cY)$ and $c_1, \ldots, c_C \geq 0, \sum_{i = 1}^C c_i = 1$ such that:
    \begin{equation*}
        \lambda^\star = \sum_{i = 1}^C c_i\lambda_i.
    \end{equation*}
    By the linearity of $Q'$ and $P'$ with respect to $\lambda$, we can conclude that for all $i = 1, \ldots, C$, $\lambda_i \in \argmin_{\lambda \in \cY'} Q'(x, \lambda)$ and $\lambda_i[t]$ also attains minimum value among the elements of $\argmin Q'(x, \cdot)$. Since $\lambda_i \in {M_d(\cY)}$, $\lambda_i = M_d((t_i, z_i))$ where $(t_i, z_i) \in \argmin_{(t,z) \in \cY} Q(x, t, z)$ and $t_i$ is the smallest possible such value. It implies that $t_i = f(x), \forall i = 1, \ldots, C$. Hence, 
    $$P'(x, \lambda^\star) = \lambda^\star[t] = \sum_{i = 1}^C c_i\lambda_i[t] = \left(\sum_{i = 1}^C c_i\right)f(x) = f(x).$$
    This concludes the proof.
\end{proof}

Let us emphasize that the constraint set resulting from the proof of this result does not have an explicit construction and only allows for a looser control of the dimensionality, unlike previous results. Indeed, although $\cY$ is convex, compact and semi-algebraic, we do not know how to \emph{explicitly and efficiently} represent it (using polynomial equalities and inequalities). Such representation is important in polynomial optimization \cite{bach2023exponential} and closely related to (but not quite the same as) the SOS relaxation \cite{lasserre2001global,Parrilo2003SemidefinitePR}. We refer readers to \cite{bach2023exponential} for a more dedicated discussion. Note that the same idea (if one allows $\cY$ to be an arbitrary convex, closed semi-algebraic set) does not work for unbounded cases because the convex hull of a closed set is not necessarily closed. We did not find a way around this issue and leave this question for future work. We also leave open the possibility of obtaining similar results for simple lower-level constraints set $\cY$ such as balls or boxes. 

{Finally, while \Cref{theorem:convex-compact-bilevel-class} and \Cref{theorem:compact-convex=LSC} appear very similar, there is a subtle difference in the settings, and consequently, in the final results: $\cY$ in \Cref{theorem:convex-compact-bilevel-class} is a bounded box whereas $\cY$ in \Cref{theorem:compact-convex=LSC} is an arbitrary compact, convex and semi-algebraic set. The additional freedom to choose $\cY$ leads to two different results: \Cref{theorem:convex-compact-bilevel-class} shows an inclusion while \Cref{theorem:compact-convex=LSC} shows an equality. Note that \Cref{lem:constrained-convex-construction} cannot be used to treat the bounded case as it involves an unbounded box. Note also that the technique used in \Cref{theorem:convex-compact-bilevel-class} does not generalize to the unbounded case in \Cref{theorem:convex-closed-bilevel-class} because the convex hull of an unbounded set may not be closed, and adding a closure operation would break the Carathéodory representation which we employ in the proof of \Cref{theorem:compact-convex=LSC}. We leave a more precise characterization in \Cref{theorem:convex-closed-bilevel-class} and \Cref{theorem:convex-compact-bilevel-class} open for future research.}

%Another idea is to take the closure of the convex hull as in \cite{bach2023exponential}. This, however, does not work either since we cannot ensure that the coefficient of $\lambda$ corresponding to the variable $y$ equals $f(x)$ (cf. \Cref{tab:variables-semi-algebraic-function}). 

\subsection{Summary of the results and pessimistic bilevel problems}
\label{subsection:summary_results}
\Cref{tab:expressivity} summarizes below the various representation results mentioned above.

\begin{table}[htbp]
    \scriptsize
	\begin{tabular}{>{\centering\arraybackslash}m{3.2cm}>{\centering\arraybackslash}m{3.3cm}|>{\centering\arraybackslash}m{4.0cm}>{\centering\arraybackslash}m{2.7cm}}
            \toprule
            \multicolumn{2}{c}{\textbf{Type of bilevel problem}} & 
            \multicolumn{2}{c}{\textbf{Expressivity result}}\\
            \toprule
		\textit{Lower-level objective}& \textit{Lower-level constraints} & \textit{Expressible functions}& \textit{Value function class}\\
		\toprule	
        Nonconvex & unbounded box & (\textbf{SA}) & $=$ \\
        \midrule
        Nonconvex & bounded box & (\textbf{SA}) + (\textbf{LSC}) + (\textbf{CB}) & $=$ \\
	\midrule
        Convex & unbounded box & (\textbf{PP}) & $\subset$ \\
        \midrule
        Convex  & bounded box & (\textbf{PP}) + (\textbf{LSC}) & $\subset$ \\
        \midrule
        Convex  & convex, compact, semi-algebraic & (\textbf{SA}) + (\textbf{LSC}) + (\textbf{CB}) & $=$ \\
		\bottomrule
	\end{tabular}
    \caption{Summary of all expressivity results. Abbreviations (\textbf{PP}): \emph{piecewise polynomial}, (\textbf{SA}): \emph{semi-algebraic}, (\textbf{LSC}): \emph{lower-semicontinuous}, (\textbf{CB}): \emph{bounded on every compact}. All our results state that the class of expressible functions (third column) is contained in the class of value functions corresponding to the considered type of bilevel problem. The last column indicates whether this inclusion is actually an equality or not.\label{tab:expressivity}}
\end{table}

As explained in our introduction, we focused on the optimistic bilevel formulation. Let us emphasize that all our results have counterparts for the pessimistic bilevel formulation. Indeed, it can be checked that all our constructions can be adapted to treat this situation similarly. In particular, we have the following.
\begin{itemize}
    \item Both representation results involving unbounded boxes in \Cref{tab:expressivity} hold true also for the pessimistic versions \eqref{eq:pessimistic-bilevel} and actually for any specification of the selection process in the lower level argmin. Indeed, all these constructions involve unique minimizers in the lower level.
    \item For the results involving bounded constraint sets, our constructions can be adapted to treat the pessimistic formulation \eqref{eq:pessimistic-bilevel} similarly. The corresponding representation results would be exactly the same, with the notion of lower-semicontinuity being replaced by that of upper-semicontinuity, as the value function of pessimistic bilevel problems with compact lower level constraints are upper semicontinuous.
\end{itemize}
We do not include these results explicitly for simplicity, but in the context of pessimistic bilevel problems, all the results of \Cref{tab:expressivity} hold true with upper-semicontinuity replacing lower-semicontinuity.

\section{Computational hardness of polynomial bilevel optimization}

%\jer{Given our usual audience I would start by what we have called expressivity (don't like that word too much, it is a positive word to express negative results). ``Geometric hardness"?}

%\jer{Put one of these usual bubble figures showing P/NP/ etc... which allows one to catch in a glance how hard is bilevel}

\subsection{Preliminaries on the polynomial hierarchy}
\label{subsec:premilinary-poly-hierarchy}
This section reminds readers of the polynomial hierarchy, a classification of problems based on their ``hardness'' in computational complexity. We also discuss the subset sum interval problem -- a classical $\existleveltwo$-hard problem that will play an important role in our analysis. 
\subsubsection*{Polynomial hierarchy} 
A classical definition of $\mathbf{P}$ and $\mathbf{NP}$, two important concepts of complexity theory, is based on the Turing machine models: $\mathbf{P}$ and $\mathbf{NP}$ is the set of \emph{decision} problems that can be solved in polynomial time using deterministic and non-deterministic Turing machines, respectively. Alternatively, one can define a problem belonging to $\mathbf{NP}$ if for any instance whose answer is \emph{yes}, there exists a proof verifiable in polynomial time (using a deterministic Turing machine). In other words, verifying a \emph{yes} instance of a $\mathbf{NP}$ problem is a $\mathbf{P}$ problem. The negated class of $\mathbf{NP}$ - $\mathbf{coNP}$ - contains those whose instances with \emph{no} answer can be verifiable in polynomial time. Thus, different from the definitions based on the non-deterministic Turing machine, one can define two classes $\mathbf{NP}$ and $\mathbf{coNP}$ based on $\mathbf{P}$.

A natural generalization of this approach gives us the polynomial hierarchy. Following \cite[Theorem 3.1]{STOCKMEYER19761}, one can define the complexity class $\Sigma_k^p$ as the set of decision problems that can be written in the form:
\begin{equation*}
    (\exists y_1), (\forall y_2), (\exists y_3), \ldots, (\cQ_k y_k), R(x, y_1, \ldots, y_k),
\end{equation*}
where the quantifiers $\cQ_\ell \in \{\exists, \forall\}, 1 \leq \ell \leq k$ alternate and $R$ is a boolean formula that can be evaluated in polynomial time (or equivalently, it is a problem in $\mathbf{P}$). In particular, $\Sigma_1^p = \mathbf{NP}$.

Analogously, one can exchange the role of $\exists$ and $\forall$ in the definition of the class $\Sigma_k^p$ to define the class $\Pi_k^p$ as:
\begin{equation*}
    (\forall y_1), (\exists y_2), (\forall y_3), \ldots, (\cQ_k y_k), R(x, y_1, \ldots, y_k).
\end{equation*}
Similarly, we also have $\Pi_1^p = \mathbf{coNP}$. By convention, $\mathbf{P} = \Sigma_0^p = \Pi_0^p$. Thus, similar to the relation $\mathbf{P} \subseteq \mathbf{NP} \cap \mathbf{coNP}$, we have the generalized version:
\begin{equation*}
    \begin{aligned}
        \Sigma_k^p \subseteq \Sigma_{k+1}^p\cap \Pi_{k+1}^p \qquad \text{ and } \qquad \Pi_k^p \subseteq \Sigma_{k+1}^p \cap \Pi_{k+1}^p.\\
    \end{aligned}
\end{equation*}
For any $ k\in \NN$, it remains unknown whether $\Sigma_k^p \neq \Sigma_{k+1}^p$ or $\Sigma_k^p = \Sigma_{k+1}^p$, and similarly for $\Pi_{k}^p$ and $\Pi_{k+1}^p$. Furthermore, if there is equality for a given $k_0$, then there is equality for all $k \geq k_0$ \cite[Theorem 5.4]{arora2009computational}. This is called the \emph{collapse} of polynomial hierarchy at the $k_0$-th level, a possibility which is considered unlikely and is often used as an assumption in complexity theoretic proofs, see discussions in \cite[Chapter 5]{arora2009computational}).
Readers can view an illustration of the polynomial hierarchy and the relations between their components in \Cref{fig:poly-hierarchy}. 

\begin{figure}[bhtp]
    \centering
    \includegraphics[scale = 0.4]{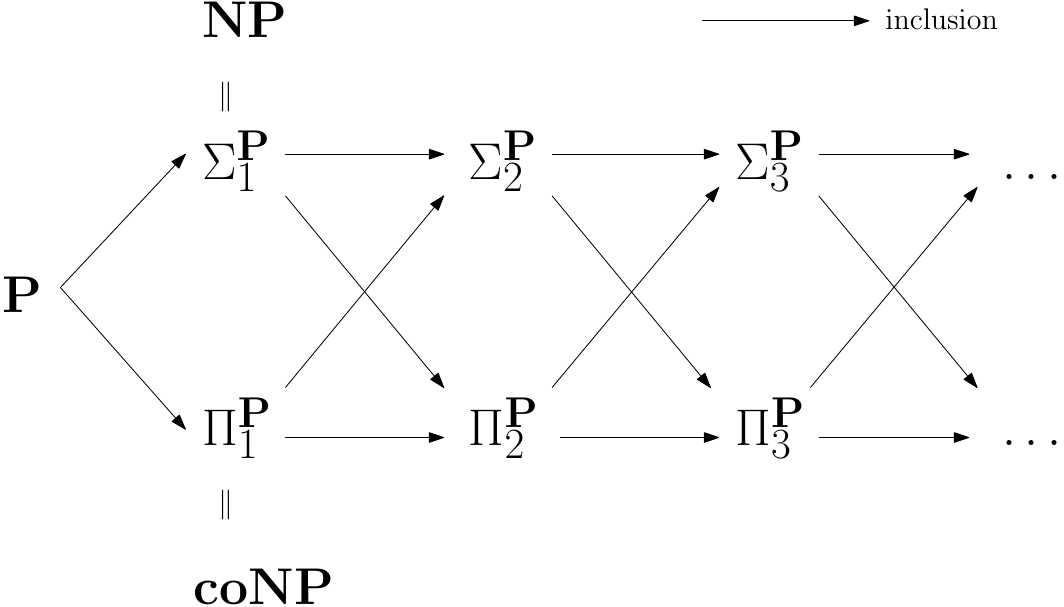}
    \caption{Diagram describes the polynomial hierarchy. Arrows represent the inclusion relation.}
    \label{fig:poly-hierarchy}
\end{figure}

A famous tool to study the relation between $\mathbf{P}$ and $\mathbf{NP}$ is the notion of $\mathbf{NP}$-hardness. A problem $\cA$ is called $\mathbf{NP}$-hard if for any problem $\cB$ in class $\mathbf{NP}$, there exists a transformation (also known as \emph{reduction}) running in polynomial time that turns an instance of $\cB$ into that of $\cA$ and both share the same answer (\emph{yes} and \emph{no}). As such, if one comes up with a polynomial algorithm for $\cA$, then he can solve all $\mathbf{NP}$ problems in polynomial time. On the other hand, to prove that a problem $\cA$ is $\mathbf{NP}$-hard, it is sufficient to construct a polynomial reduction from any $\mathbf{NP}$-hard problem to $\cA$.

Analogously, we can define $\Sigma_k^p/\Pi_k^p$-hardness using the same principle for any $k \in \N$. We will use the concept of $\Sigma_2^p$ hardness. If a problem is $\Sigma_2^p$-hard and the polynomial hierarchy does not collapse at the first level, i.e, $\mathbf{NP} \neq \Sigma_{2}^p$, then the intrinsic computational complexity of this problem is strictly higher than any problem in the class $\mathbf{NP} \cup \mathbf{coNP}$. In the following, we will show that polynomial bilevel optimization is $\existleveltwo$-hard. This highlights the possibility that bilevel polynomial optimization is actually much harder than problems in $\mathbf{NP}$ which are themselves computationally hard. Note that it is a common assumption in computational complexity theory that the polynomial hierarchy does not collapse at all, and specially not at the first level (see discussions in \cite[Section 5.2.1]{arora2009computational}).

\subsubsection*{The subset sum interval problem}
To show the $\existleveltwo$-hardness of polynomial bilevel optimization, we need to perform a reduction to a known $\existleveltwo$-hard problem. We choose the \emph{subset sum interval problem} \cite{caprara2014study}, defined as follows:
\begin{problem}[Subset sum interval problem]
    \label{prob:subset-sum-interval}
    Given a finite number of positive integers $q_1, \ldots, q_k$, and two positive integers $R$ and $r$ with $r \leq k$, decide whether there exists an integer $R \leq S \leq R + 2^r - 1$ such that none of the subsets $I \subseteq \{1, \ldots, k\}$ satisfies $\sum_{i \in I}q_i = S$.  
\end{problem}

If $r > k$, the problem is trivial because there are at most $2^k$ possible values of $\sum_{i \in I} q_i$ but there are already $2^r$ possible values for $S$. Using logic notations, \Cref{prob:subset-sum-interval} is equivalent to deciding the correctness of this first-order logic formula:
\begin{equation*}
    \exists S, \forall I \subseteq \{1, \ldots, k\}, R \leq S \leq R + 2^r - 1 \text{ and } S \neq \sum_{i \in I} q_i.
\end{equation*}
The $\existleveltwo$-hardness of this problem follows from \cite{eggermont2012motion}, see \Cref{sec:subsetSumSigma2pHard} for more details.

\subsection{Polynomial bilevel optimization is $\existleveltwo$-hard}

The main result of this section is to prove the $\existleveltwo$-hardness of the decision version of the polynomial bilevel problem.
\begin{theorem}[Hardness of polynomial bilevel optimization]
    \label{theorem:hardness2}
    Given two polynomials $P$ and $Q$ of degree at most five, a bounded box lower-level feasible set $\cY$ and a constant $c$, deciding whether the optimal value of the corresponding bilevel problem of $(P,Q,\cY)$ is \emph{strictly} smaller than $c$ is $\existleveltwo$-hard.
\end{theorem}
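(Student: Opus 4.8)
The plan is to reduce the subset sum interval problem (Problem~\ref{prob:subset-sum-interval}), which is $\existleveltwo$-hard, to the decision version of optimistic polynomial bilevel programming. The guiding observation is that for the optimistic problem the predicate ``optimal value $<c$'' unfolds as $\exists x\,\exists y\,[\forall y':Q(x,y)\le Q(x,y')]\wedge P(x,y)<c$, an $\exists\forall$ pattern that matches the quantifier structure $\exists S\,\forall I$ of a subset sum interval instance. Given integers $q_1,\dots,q_k$, $R$, $r$, I will encode the guessed value $S$ in the upper variable through its binary digits $x=(x_0,\dots,x_{r-1})$, writing $S(x)=R+\sum_j 2^jx_j$ (so binary digits give exactly the integers in $[R,R+2^r-1]$), and encode a candidate subset through lower variables $y=(y_1,\dots,y_k)$ over the bounded box $\cY=[0,1]^k$, with subset sum $\sum_i y_iq_i$.

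The lower level is built so that its optimal value detects, for an integer $S$, whether $S$ is hit by some subset. I take
\[
Q(x,y)=W\sum_{i=1}^k y_i^2(1-y_i)^2+\Big(\sum_{i=1}^k y_iq_i-S(x)\Big)^2,
\]
where $W$ is a large integer of polynomially many bits. The first term vanishes exactly at binary $y$, and a rounding argument shows the desired value separation: for a \emph{good} integer $S$ (missed by every subset) one has $\min_y Q(x,y)\ge 1/2$, because any $y$ making the distance term small must lie near a binary point whose subset sum misses the integer $S$ by at least $1$, and cancelling that integer gap by moving off the binary point forces the quartic penalty above $1/2$ once $W\ge\mathrm{poly}(k,\max_i q_i)$; for a \emph{bad} $S$ the minimum is $0$, attained at the exactly hitting binary subset. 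I then set the upper objective $P(x,y)=-Q(x,y)+\Lambda\,\beta(x)$ with $\beta(x)=\sum_j x_j^2(1-x_j)^2$ and $\Lambda$ another large polynomial-bit integer. Since $Q$ is constant on $\Theta(x)=\argmin_y Q(x,\cdot)$ and $\beta$ is independent of $y$, and since $\cY$ is compact so that every infimum is attained, the optimistic value function simplifies to $\valoptim(x)=-\min_y Q(x,y)+\Lambda\beta(x)$. Restricted to binary $x$ (so $\beta=0$ and $S(x)$ an admissible integer) this equals $-\min_y Q$, which is $\le -1/2$ precisely for good $S$ and $=0$ for bad $S$; hence over binary $x$ the minimum is $-\max_S\min_y Q$, and I will conclude with the threshold $c=-3/8$: a YES instance produces a good binary $x$ with $\valoptim(x)\le -1/2<c$.

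The degrees are immediately controlled: the penalties are degree four and the distance term is degree two, so $P$ and $Q$ both have degree at most five; all coefficients together with $W$ and $\Lambda$ are computable in polynomial time, so the reduction is polynomial. The main obstacle is the NO direction, where I must rule out any spurious $x$ driving $\valoptim$ below $c$. Binary in-range $x$ give $\valoptim=0$ and are harmless; the delicate case is non-binary $x$, including those whose digits leave $[0,1]$ so that $S(x)$ escapes $[R,R+2^r-1]$. There I must establish the uniform bound $\Lambda\beta(x)\ge \min_y Q(x,y)-1/4$, which yields $\valoptim(x)\ge -1/4>c$. I expect this to rest on two competing estimates: an upper bound $\min_y Q(x,y)\le D(x)^2$, where $D(x)$ is the distance from $S(x)$ to the nearest subset sum — at most $1/2$ when $S(x)$ stays in range in a NO instance — and a lower bound on $\beta(x)$ that grows quartically as digits move outside $[0,1]$, dominating the quadratic growth of $D(x)^2$ once $\Lambda$ is of order $4^{r}$ (a constant with polynomially many bits). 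Balancing these two bounds uniformly over all real $x$, and in particular in the mixed regime where some digits are only slightly fractional while others are far out of range, is the technical heart of the argument.
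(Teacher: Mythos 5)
Your construction is correct, but it takes a genuinely different route from the paper's. The paper also reduces from the subset sum interval problem, also encodes the candidate integer $S$ in binary upper variables and candidate subsets in lower variables $t\in[0,1]^k$, and uses essentially your lower-level penalty $H(x,t)=\sum_i (t_i(1-t_i))^2+(F(x)-G(t))^2$; the difference is how non-binary upper variables are neutralized. The paper does it \emph{inside the lower level}: it adds one extra lower variable $z\in[0,1]$ with lower objective $z[(z-1)^2+\sum_i (x_i(1-x_i))^2]+H(x,t)$ and upper objective $1-z(H(x,t)+1)$, so that the optimistic selection can switch to $z=1$ (activating the payoff $-H$) exactly when $x$ is binary; the value function is then \emph{exactly} $1$ off the binary lattice and $-\min_t H\le 0$ on it, giving an estimate-free argument with threshold $c=0$ and no large coefficients. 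You instead neutralize non-binary $x$ by a soft penalty $\Lambda\beta(x)$ in the upper objective and take $P=-Q+\Lambda\beta(x)$. This costs you the quantitative balancing with poly-bit constants $W,\Lambda$, but it buys something the paper's proof does not have: since $P(x,\cdot)$ is constant on $\Theta(x)$, the optimistic and pessimistic values coincide, so your reduction proves $\existleveltwo$-hardness of the \emph{pessimistic} decision problem as well, whereas the paper explicitly remarks its construction cannot be extended to the pessimistic case (its value function is lsc but not usc). Your value function is even continuous, so hardness persists within the class of bilevel problems with continuous value functions.

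The step you flag as the technical heart does close, and more cleanly than your estimate $D(x)\le 1/2+\sum_j 2^j e_j$ suggests (that crude bound loses the additive constant near the box and only barely survives with $c=-3/8$). Use an in-range/out-of-range split instead. Let $e_j=\mathrm{dist}(x_j,[0,1])$, $E=\sum_j 2^j e_j$, and let $\tilde x$ be the clipping of $x$ to $[0,1]^r$, so that $S(\tilde x)\in[R,R+2^r-1]$ and $|S(x)-S(\tilde x)|\le E$. In a NO instance: if $S(x)\in[R,R+2^r-1]$, the nearest in-range integer is a subset sum, so $D(x)\le 1/2$; if $S(x)$ overshoots the range, then since the endpoints $R$ and $R+2^r-1$ are themselves subset sums, $D(x)\le|S(x)-S(\tilde x)|\le E$. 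Hence $D(x)^2\le\max(1/4,E^2)$. Moreover $\beta(x)\ge\sum_j e_j^2\ge 3E^2/4^r$ by Cauchy--Schwarz, so taking $\Lambda\ge 4^r$ gives, in the out-of-range case, $-D(x)^2+\Lambda\beta(x)\ge -E^2+3E^2\ge 0$, and in the in-range case $\ge -1/4$. Thus $\valoptim(x)\ge -1/4$ uniformly over $\RR^r$, which is exactly your claimed bound, with comfortable room above $c=-3/8$; combined with your YES-direction estimate $\min_y Q(x^*,y)\ge 1/2$ (which is correct for $W\ge 32M^2k$, $M=\max_i q_i$), the reduction is complete.
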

\Cref{theorem:hardness2} is an immediate result of the following lemma.
\begin{lemma}
    \label{lemma:hardness-bounded}
    Given an instance of \Cref{prob:subset-sum-interval}, there exist two polynomials of degree five $P$ and $Q$ whose coefficients are integers in the interval $[-2(M^2 + 1), 2(M^2 + 1)]$ where $M = \max(\max_{i = 1}^k q_i, 2^{r-1}, R)$ and variable $(\firstvar,\secondvar)$, where $\firstvar \in \RR^{r}, \secondvar \in \RR^{k + 1}, \cY = [0,1]^{k+1}$, such that the optimal value of \eqref{eq:optimistic-bilevel}  
    \iffalse
    the following optimistic polynomial bilevel problem:
    \begin{equation*}
        \begin{aligned}
            &\min_{\firstvar} && \valoptim(\firstvar):= \min_{(x, y(x))} P(\firstvar, \secondvar(\firstvar))\\
            &\text{s.t.} && \secondvar(\firstvar) \in \argmin_{\secondvar \in \cY} Q_\firstvar(\secondvar) := Q(\firstvar, \secondvar) 
        \end{aligned}
    \end{equation*}
    \fi
    is strictly smaller than zero if the answer for the instance of \Cref{prob:subset-sum-interval} is YES, and exactly equal to zero if the answer is NO. 
\end{lemma}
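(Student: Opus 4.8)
The plan is to reduce \Cref{prob:subset-sum-interval} to the optimistic problem \eqref{eq:optimistic-bilevel} by letting the upper variable $\firstvar\in\RR^r$ encode the guessed integer $S$ and letting the lower variables both explore all subsets $I$ and run a threshold gadget that converts ``no subset sums to $S$'' into a strictly negative value. I would set $S(\firstvar)=R+\sum_{j=1}^r 2^{j-1}\firstvar_j$, so that $S(\firstvar)$ runs exactly through the integers of $[R,R+2^r-1]$ as $\firstvar$ runs through $\{0,1\}^r$ (the weights $2^{j-1}$ are forced if we want all $2^r$ consecutive values). Write $\secondvar=(y_1,\dots,y_k,w)\in[0,1]^{k+1}$, where $y_1,\dots,y_k$ are the subset indicators, $w$ is the gadget variable, and put $P(\firstvar,\secondvar)=-w$.

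The core object is a lower-level ``distance to a feasible subset'',
\[
D(\firstvar,y)=\Big(S(\firstvar)-\textstyle\sum_{i=1}^k q_iy_i\Big)^2+4M\sum_{i=1}^k y_i(1-y_i),
\]
whose binary-enforcing term carries the coefficient $4M$ precisely so the continuous relaxation cannot beat the integer gap. The first key step is to prove the dichotomy $d_{\min}(\firstvar):=\min_{y\in[0,1]^k}D(\firstvar,y)\in\{0\}\cup[1,\infty)$ for $\firstvar\in\{0,1\}^r$: a perfect subset gives $D=0$ at a binary $y$, while the weight $4M$ forces any near-minimizer to within $O(1/M)$ of $\{0,1\}^k$, where the integrality of $S(\firstvar)$ and of all partial sums makes the squared gap at least $1$. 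One checks that $4M\le 2(M^2+1)$, so this penalty respects the coefficient bound.

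Next I would define the lower level as the degree-$5$ polynomial
\[
Q(\firstvar,\secondvar)=D(\firstvar,y)(1-w)+w\big(1+\lambda\,\rho(\firstvar)\big),\qquad \rho(\firstvar)=\textstyle\sum_{j=1}^r \firstvar_j^2(1-\firstvar_j)^2 .
\]
Minimizing out $y$ (using $1-w\ge0$) leaves $\eta(w)=d_{\min}(\firstvar)(1-w)+w(1+\lambda\rho(\firstvar))$, which is affine in $w$; its minimizer over $[0,1]$ is $\{w=0\}$ when $d_{\min}<1+\lambda\rho$ and contains $w=1$ when $d_{\min}\ge 1+\lambda\rho$. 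For binary $\firstvar$ one has $\rho=0$, so $\Theta(\firstvar)$ forces $w=0$ (hence $\valoptim(\firstvar)=0$) exactly when a subset sums to $S(\firstvar)$, and admits $w=1$ (hence $\valoptim(\firstvar)=-1$) exactly when none does. Thus if the answer is YES, the binary $\firstvar^\star$ encoding a witness $S^\star$ gives $\valoptim(\firstvar^\star)=-1<0$, so the optimal value is strictly negative.

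The main obstacle is the NO direction: I must show $\valoptim(\firstvar)\ge 0$ for \emph{all} $\firstvar\in\RR^r$, including non-binary ones, so that the infimum equals $0$ (attained at any binary $\firstvar$). This is exactly what the coupling $\lambda\rho(\firstvar)w$ is for. In the NO case every integer of $[R,R+2^r-1]$ is an attainable subset sum, so $d_{\min}(\firstvar)\le \mathrm{dist}\big(S(\firstvar),\Z\cap[R,R+2^r-1]\big)^2$, which is $<1$ whenever $S(\firstvar)$ lies in the range; only the out-of-range regime can make the gadget fire, and there $\firstvar$ is bounded away from $\{0,1\}^r$, forcing $\rho(\firstvar)$ up. I would establish $d_{\min}(\firstvar)<1+\lambda\rho(\firstvar)$ for every non-binary $\firstvar$ via the quadratic estimates $\rho(\firstvar)\ge\tfrac14\|\firstvar-\bar{\firstvar}\|^2$ and $(S(\firstvar)-S(\bar{\firstvar}))^2\le(\sum_j 4^{j-1})\|\firstvar-\bar{\firstvar}\|^2$, with $\bar{\firstvar}$ the rounding of $\firstvar$, so that choosing $\lambda=\Theta(M^2)$ renders the relevant quadratic form negative semidefinite. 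The genuinely delicate part—where I expect most of the work—is carrying out this estimate with the \emph{explicit} constant: the squared linear form produces coefficients $\le 2M^2$, and the coupling produces coefficients $\le 2\lambda$, so the target bound $2(M^2+1)$ forces $\lambda\le M^2+1$, right at the boundary of the semidefiniteness argument (one uses $2^{j-1}\le M$, hence $\sum_j 4^{j-1}<\tfrac43 M^2$, together with the restriction to the out-of-range regime to make the argument tight). Once this is done, $P,Q$ have integer coefficients in $[-2(M^2+1),2(M^2+1)]$ and degree $5$ (the term $\rho(\firstvar)w$ being the only degree-$5$ contribution), and combining the two cases yields optimal value strictly below zero iff the answer is YES and exactly zero iff it is NO.
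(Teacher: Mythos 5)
There is a genuine gap, and it sits exactly where you flag the work as ``delicate'': no admissible value of $\lambda$ can both respect the coefficient bound and make the NO direction true. Your own estimates show that in the worst deviation direction $x_j-\bar x_j\propto 2^{j-1}$ (the Cauchy--Schwarz equality direction), the ratio $d_{\min}/\rho$ approaches $\sum_{j=1}^r 4^{j-1}=\tfrac{4^r-1}{3}$, so you need $\lambda\gtrsim\tfrac{4^r-1}{3}$; when $M=2^{r-1}$ (i.e.\ whenever $\max_i q_i$ and $R$ are at most $2^{r-1}$) this equals $\tfrac{4M^2-1}{3}$, which exceeds $M^2+1$ as soon as $M>2$, while the bound $2\lambda\le 2(M^2+1)$ forces $\lambda\le M^2+1$. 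The out-of-range refinement (replacing $\rho\ge\tfrac14\|u\|^2$ by $\rho\ge\|u_B\|^2$ on the out-of-box coordinates) gains only a factor $4$, not the missing factor, and the $+1$ slack only protects deviations whose overshoot $\Delta$ is at most $1$. A concrete counterexample: take the NO instance $k=128$, $q_1=\dots=q_k=1$, $R=1$, $r=7$ (every integer in $[1,128]$ is a subset sum), so $M=2^{r-1}=64$ and your constraint gives $\lambda\le M^2+1=4097$. At $x=(1,1,1,1,1,\,1.04,\,1.08)$ one gets $S(x)=134.4$; since every subset sum lies in $\{0,\dots,128\}$ and $\sum_i y_i\le 128$ on $[0,1]^{128}$, the minimizer of $D(x,\cdot)$ is $y=\mathbf{1}$ and $d_{\min}(x)=(6.4)^2=40.96$, whereas $\rho(x)=(1.04)^2(0.04)^2+(1.08)^2(0.08)^2\approx 0.0092$, so $1+\lambda\rho(x)\approx 38.67<d_{\min}(x)$. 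The affine function of $w$ then has negative slope, the lower level forces $w=1$, and $\valoptim(x)=-1<0$: your reduction declares YES on a NO instance. Taking $\lambda$ smaller only worsens the violation, so the failure is not an artifact of loose constants.

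The contrast with the paper's proof explains why this cannot be patched by sharper estimates within your architecture. The paper never compares magnitudes of a ``distance'' against a ``penalty'': its lower level is $Q=z\bigl[(z-1)^2+\sum_i(x_i(1-x_i))^2\bigr]+H(x,t)$, so the gadget value $z=1$ is a lower-level minimizer \emph{exactly} when $x\in\{0,1\}^r$, and for non-binary $x$ the unique minimizer is $z=0$, which makes the upper objective equal to $1>0$ identically off the binary grid. Binarity of the upper variable is thus enforced exactly, not through a threshold, and the NO case requires no tuning of constants at all. Your construction is salvageable if you relax the statement: setting $\lambda=\tfrac{4^r-1}{3}$ (an integer) makes your three-case argument (in range, overshoot, undershoot, using $\rho\ge\sum_{j\in B}d_j^2$ on out-of-range coordinates) go through, with coefficients bounded by $2\lambda\le\tfrac{8}{3}M^2$; this still yields a polynomial-size reduction and hence the hardness theorem, but it proves a weaker lemma than the one stated, whose bound $2(M^2+1)$ your approach cannot attain.
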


Before proving the lemma, we argue that the ``description'' of the polynomials $P,Q$ in \Cref{lemma:hardness-bounded} is at most polynomial w.r.t. the size of the inputs of \Cref{prob:subset-sum-interval}. Indeed, the polynomials $P,Q$ have degree five and all coefficients can be expressed by $\log M$ bits. We remark that the representation of each coefficient might use $O(r)$ bits, from the bound $2^{r-1}$, which is exponential w.r.t. the number of bits representing the number $r$ itself. In addition, we have polynomials of $r$ variables, which represent a number of coefficient polynomial in $r$. This dependency in $r$ is still polynomial overall because any bit representation of \Cref{prob:subset-sum-interval} must use at least $k \geq r$ bits to represent to numbers $q_i, i = 1, \ldots, k$. Regarding the other quantities appearing in the definition of $M$, it is clear that an integer smaller than $\max_{i = 1}^k q_i$ has a bit representation of size bounded by that of the collection $q_1,\ldots, q_k$ and similarly for $R$.

\begin{proof}[Proof of {\Cref{lemma:hardness-bounded}}]
        We will use $x \in \RR^{r}$ and $y = (z, t), z \in \RR^r, t \in \RR^{k}$ (which implies $y \in \RR^{r + k}$) to indicate the upper-level and lower-level variables respectively. We also use $x_i, z_i, t_i \in \RR$ to indicate the $i$th coordinate of the variables $x, z, t$ respectively. For $x \in \RR^r$, define:
    \begin{equation*}
        \label{eq:binary_representation}
        F(x) := R + \sum_{j = 1}^r 2^{j-1}x_j,
    \end{equation*}
    a linear combination (thus, a polynomial) of $x$. Intuitively, ${x}$ is the binary {encoding} of a number in the range $[R, R + 2^r - 1]$ if all variables $x_j$ are binary. 
    
    Moreover, for $t \in \RR^k$, we also define:
    \begin{equation*}
        \label{eq:subset_representation}
        G(t) := \sum_{i = 1}^k t_iq_i,
    \end{equation*}
    where $q_i$ is the integers that appear in the given instance of \Cref{prob:subset-sum-interval}. Intuitively, $G(t)$ is equal to the sum of some subsets of $\{q_i, i = 1, \ldots, k\}$ (if all variables $t_i$ are binary). Finally, we define:
    \begin{equation*}
        H(x,t) = \left(\sum_{j = 1}^k (t_i(1 - t_i))^2\right) + (F(x) - G(t))^2
    \end{equation*}
    Note that $H(x,t)=0$ if and only if $t$ is a binary vector and $F(x) = G(t)$. This mimics the situation where a number $F(x) \in [R, R + 2^r - 1]$ can be written as a subset-sum of the array $q$.
    
    The polynomials $P, Q$ and the lower-level feasible set can be constructed as follows: We constrain the lower-level variable $\secondvar = (z,t) \in \cY = [0,1]^{k+1}$, or equivalently, $z \in [0,1], t \in [0,1]^k$. 
    We define $P$ and $Q$ as follows:
    \begin{equation}
        \label{eq:construction-P-Q}
        \begin{aligned}
            P &= 1 - z(H(\firstvar,t) + 1),\\
            Q &= \underbrace{z\left[(z-1)^2 + \sum_{i = 1}^r \left((1 - x_i)x_i\right)^2\right]}_{L(x,z)} + H(\firstvar,t).\\
        \end{aligned}
    \end{equation}
    By construction, $\deg(P) = \deg(Q) = 5$ as stated. In addition, it can be shown that the coefficients of monomials of $P$ and $Q$ have their absolute values bounded by $2(M^2+1)$. Indeed, all the coefficients of $L(x,z)$ in \eqref{eq:construction-P-Q} belong to $\{0, 1, -2\}$. A direct calculation also shows that absolute values of the coefficients of $H(x,t)$ are bounded by $2(M^2 + 1)$. More importantly, these two polynomials do not have any common monomial ($H$ does not have variable $z$ but every monomial of $L$ has at least one $z$). Thus, we can conclude that the coefficients of $P$ and $Q$ lie in the interval $[-2(M^2 + 1), 2(M^2 + 1)]$.

    Due to the separation of $z$ and $t$, given a fixed $x$, we have:
    \begin{equation}
        \label{eq:structure_of_optimum}
        \argmin Q(x, \cdot, \cdot) = \argmin L(x, \cdot) \times \argmin H(x, \cdot).
    \end{equation}
    where $\times$ is the Cartesian product between two sets. The key idea in this construction is based on a sequence of observations.

    \paragraph{Observation 1} With the construction of $P$ and $Q$ as in \eqref{eq:construction-P-Q}, we have:
    \begin{enumerate}
        \item If $\firstvar \notin \{0,1\}^r, \valbilevel(\firstvar) = 1$.
        \item If $\firstvar \in \{0,1\}^r, \valbilevel(\firstvar) \leq 0$. 
    \end{enumerate}
    
    Indeed, we have: $L(x,z) \geq 0$ because $z \in [0,1]$ and the other factor is a sum of squares. Thus, $z = 0$ is always a minimizer of $L(x,\cdot)$ on $[0,1]$. Another possibility to attain the global minimum value of $L(\firstvar, z)$ is to have $x \in \{0,1\}^r$ and $z = 1$. Therefore,
    \begin{equation}
        \label{eq:optimal-z-1}
        z(x):= \arg\min_z L(x,z) = \begin{cases}
            0 & \text{ if } x \notin \{0,1\}^r\\
            \{0,1\} & \text{ otherwise } \\
        \end{cases} \;\implies\; \valbilevel(\firstvar) = \begin{cases}
            1 & \text{ if } x \notin \{0,1\}^r\\
            -H(\firstvar, t(x)) & \text{ otherwise}
        \end{cases}.
    \end{equation}
    where $(z(\firstvar), t(\firstvar)) \in \argmin Q(\firstvar, \cdot, \cdot)$. Since $H$ is a sum of squares, we proved the first observation.

    To finish the proof, we need a second observation.
    \paragraph{Observation 2} With the construction of $P$ and $Q$ as in \eqref{eq:construction-P-Q}, we have:
    \begin{enumerate}
        \item \textbf{First case}: If there exists an integer $S \in [R, R + 2^r - 1]$ such that it is not equal to the sum of any subset of $\{q_i, i = 1, \ldots, k\}$, then  $\inf_x \valbilevel(x) < 0$. 
        \item \textbf{Second case}: Otherwise, $\valbilevel(x) = 0, \forall x \in \{0,1\}^{r}$. Hence, $\min_x \valbilevel(x) = 0$.
    \end{enumerate}
    Due to the previous observation, it is sufficient to consider only binary inputs $x$. In that case, $\valbilevel(x) \in -H(x,t(x)) = -\min_{t \in [0,1]^k} H(x,t)$ (cf. \Cref{eq:optimal-z-1}).
    In addition, $F(x)$ is an integer belonging to the interval $[R, R + 2^r - 1]$. We consider two cases one by one:
    \begin{enumerate}[leftmargin=*]
        \item \textbf{First case}: We choose $x \in \{0,1\}^r$ such that $S = F(x)$. Moreover, due to the property of $S$, $H(x,t) > 0$ because $H(x,t) = 0$ if and only if $t$ is a binary vector and $F(x) = G(t)$. However, that will be equivalent to the statement that $S$ equals the sum of some subsets of $\{q_i, i = 1, \ldots, k\}$, a contradiction. Therefore, $\valbilevel(x) = -\min_{t \in [0,1]^k} H(x,t) < 0$. 
        \item \textbf{Second case}: In this case, for any $R \leq S = F(x) \leq R + 2^r - 1$, there exists a binary vector $t$ such that $F(x) = G(t)$. Thus, in the lower-level problem, when we minimize $H(x,t)$, we will get a binary vector $t(x)$ such that $H(x,t(x)) = 0$. It allows us to conclude that $\valbilevel(x) = -H(x,t(x)) = 0, \forall x\in\{0,1\}^r$.   
    \end{enumerate}
    Combining two cases yields the proof.
\end{proof}
For interested readers, the whole proof of \Cref{lemma:hardness-bounded} is to find two polynomials whose corresponding optimistic bilevel problem equals to the lower-semicontinuous function $\valbilevel$ in \eqref{eq:optimal-z-1}. Thus, our construction is similar to the proof of \Cref{theorem:compact=lsc+bc}, with some simplification adapted to the structure of $\valbilevel$ to minimize the degrees of $P,Q$. This proof cannot be extended for the pessimistic version since a pessimistic bilevel problem with a bounded box lower-level constraint has an upper-semicontinuous value function. Finally, one might wonder if we can construct a ``difficult'' instance with a convex lower-level problem. Nevertheless, our technique in \Cref{theorem:compact-convex=LSC} cannot be applied since the reduction is not guaranteed to be polynomial.

\bigskip

\section{Conclusion} 
{
The pathological examples we provided show that neither rigidity nor smoothness makes bi-level programming tractable, whether it is from a geometrical or computational complexity. This calls for the identification of new classes of problems that possess a strong overall regularity and which are amenable to optimization. In particular, it encourages the search for new, easily verifiable qualification conditions and a thorough study of their interaction with usual regularity assumptions, and with the behavior of solution algorithms.}

\begin{appendices}
\section{The difficulty of smooth bilevel optimization}
\label{appendix:regularity-not-good-enough}
In this section, we provide the proof of \Cref{prop:reduction-lsc}. The proof uses the Whitney representation of closed set \cite[Section 3.10, 2]{lafontaine2015introduction} \cite[Theorem 2.29]{lee2000smoothmanifolds}.

%\ed{Theorem 2.29 in Lee's introduction to smooth manifolds (second edition). It is not attributed to Whitney though.}

%\TLc{Where to cite the result of Whitney?}
\begin{theorem}[Whitney representation of closed sets]
    \label{theorem:whitney-representation}
    Any closed set of $\RR^n$ is the set zeros of a smooth function $f: \RR^n \to \RR$. 
\end{theorem}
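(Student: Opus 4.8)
The plan is to realize the closed set as the zero set of a nonnegative smooth function assembled from bump functions supported on a countable ball cover of the complement. Write $C \subseteq \RR^n$ for the given closed set and $U := \RR^n \setminus C$ for its open complement; the goal is to produce $f \in C^\infty(\RR^n)$ with $f \geq 0$, with $f > 0$ exactly on $U$ and $f = 0$ exactly on $C$, so that $f^{-1}(0) = C$. First I would use the Lindelöf property of $\RR^n$ (it is second countable) to write $U$ as a countable union of open balls $U = \bigcup_{k \in \NN} B_k$ with $B_k = B(a_k, r_k) \subseteq U$.

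Next, starting from the standard bump $\psi(x) = \exp\!\big(-1/(1 - \|x\|^2)\big)$ for $\|x\| < 1$ and $\psi(x) = 0$ otherwise --- which is smooth, strictly positive on the open unit ball, and vanishes off it --- I set $\phi_k(x) := \psi\!\big((x - a_k)/r_k\big)$, so that $\phi_k$ is smooth, $\phi_k > 0$ on $B_k$, and $\phi_k = 0$ on $\RR^n \setminus B_k$. The core of the construction is the weighted series $f := \sum_{k \in \NN} c_k \phi_k$ with carefully chosen constants $c_k > 0$. Since each $\phi_k$ has compact support, all of its partial derivatives are bounded, so I can pick $c_k$ small enough that $\sup_x |\partial^\alpha (c_k \phi_k)(x)| \leq 2^{-k}$ for every multi-index $\alpha$ with $|\alpha| \leq k$.

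The step I expect to be the main (though essentially routine) obstacle is verifying $C^\infty$ regularity of this infinite sum, which is exactly what the scaling of the $c_k$ is designed to handle: for each fixed multi-index $\alpha$, the tail of the series $\sum_k \partial^\alpha (c_k \phi_k)$ is dominated termwise by $\sum_{k \geq |\alpha|} 2^{-k}$, so by the Weierstrass $M$-test every derivative series converges uniformly. The standard theorem on uniform limits of derivatives then yields $f \in C^\infty$ together with the identity $\partial^\alpha f = \sum_k \partial^\alpha (c_k \phi_k)$. Finally, since $c_k > 0$ and $\phi_k \geq 0$, we have $f(x) = 0$ if and only if $\phi_k(x) = 0$ for all $k$, if and only if $x \notin B_k$ for every $k$, if and only if $x \notin U$, that is, $x \in C$. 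Hence $f^{-1}(0) = C$, which proves that $C$ is the zero set of a smooth function and completes the argument.
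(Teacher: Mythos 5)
Your proof is correct. Note that the paper does not prove this statement at all: it is quoted as a known result (Whitney's theorem) with citations to Lafontaine and to Lee's \emph{Introduction to Smooth Manifolds} (Theorem 2.29), and your argument---covering the open complement by countably many balls, summing bump functions with coefficients $c_k$ chosen so that all derivatives of order at most $k$ are bounded by $2^{-k}$, and invoking the Weierstrass $M$-test plus term-by-term differentiation---is precisely the standard construction given in those references, carried out without gaps.
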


We remind readers of an important property of lower (resp. upper) semicontinuous functions.
\begin{proposition}
    \label{prop:property-lsc}
    A function $f:\RR^n \to \RR$ is lower semicontinuous if and only if its epigraph is closed where the epigraph of $f$ is defined as:
    \begin{equation*}
        \begin{aligned}
            \mathrm{epi}(f) &:= \{(x,\alpha) \mid f(x) \leq \alpha\} \subseteq \RR^{n+1}.
        \end{aligned}
    \end{equation*}
\end{proposition}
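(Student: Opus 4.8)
The plan is to prove the equivalence for the lower semicontinuous case only, since the upper semicontinuous statement follows at once by applying the lower semicontinuous result to $-f$, which exchanges $\mathrm{epi}$ with $\mathrm{hyp}$ and swaps lower with upper semicontinuity. Throughout I would work with the sequential characterizations, which are available because $\RR^{n+1}$ is a metric space: a subset of $\RR^{n+1}$ is closed if and only if it is sequentially closed, and a finite-valued $f$ is lower semicontinuous at $x_0$ precisely when $f(x_0) \leq \liminf_{k} f(x_k)$ for every sequence $x_k \to x_0$.

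For the forward implication, I would assume $f$ is lower semicontinuous and take an arbitrary sequence $(x_k, \alpha_k) \in \mathrm{epi}(f)$ with $(x_k,\alpha_k) \to (x,\alpha)$. Combining $f(x_k) \leq \alpha_k$ with lower semicontinuity gives $f(x) \leq \liminf_k f(x_k) \leq \liminf_k \alpha_k = \alpha$, so $(x,\alpha) \in \mathrm{epi}(f)$. Hence $\mathrm{epi}(f)$ is sequentially closed, and therefore closed.

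For the converse I would argue by contraposition. Suppose $f$ fails to be lower semicontinuous at some $x_0$, so there is a sequence $x_k \to x_0$ with $\liminf_k f(x_k) < f(x_0)$; after passing to a subsequence I may assume $f(x_k)$ converges to this liminf $L$. Fixing any level $\alpha$ with $L < \alpha < f(x_0)$, one has $f(x_k) < \alpha$ for all large $k$, hence $(x_k,\alpha) \in \mathrm{epi}(f)$, while $(x_k,\alpha) \to (x_0,\alpha)$ and $f(x_0) > \alpha$ force $(x_0,\alpha) \notin \mathrm{epi}(f)$. This exhibits $\mathrm{epi}(f)$ as not closed, completing the contrapositive.

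The argument is essentially routine, so I do not expect a genuine obstacle; the only point requiring a little care is the converse, where one must extract a subsequence realizing the $\liminf$ and then select a strictly intermediate level $\alpha$ in order to produce a sequence lying in the epigraph whose limit escapes it. Because $f$ is finite-valued, every $\liminf$ appearing above is a well-defined real number, so no separate handling of infinite values is needed.
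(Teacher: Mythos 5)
Your proof is correct, and it is the standard sequential argument; note, however, that the paper itself never proves this proposition --- it is recalled twice as a classical fact (once in the preliminaries of Section~2, just before \Cref{lemma:upper-bound-compact-nonconvex-class}, and again as \Cref{prop:property-lsc} in \Cref{appendix:regularity-not-good-enough}) and is used there essentially as the working characterization of semicontinuity, so there is no in-paper proof to compare against. One small inaccuracy in your write-up is worth fixing: the closing claim that finite-valuedness of $f$ makes every $\liminf$ above a well-defined \emph{real} number is false. Even for $f:\RR^n \to \RR$ one can have $\liminf_{k} f(x_k) = -\infty$ along $x_k \to x_0$ (take $f(x) = -1/\|x\|$ for $x \neq 0$ and $f(0)=0$), so $L$ must be allowed to be an extended real, and the subsequence you extract may diverge to $-\infty$ rather than converge. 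This does not damage the contrapositive argument: all you need is a level $\alpha$ with $L < \alpha < f(x_0)$, and such an $\alpha$ exists precisely because $f(x_0)$ is finite, whether $L$ is finite or equal to $-\infty$; along the subsequence one still has $f(x_k) < \alpha$ eventually, and the rest of your argument goes through verbatim. With that remark corrected, the proof stands.
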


\begin{proof}[Proof of \Cref{prop:reduction-lsc}]
    If $f$ is lower semicontinuous, the epigraph of $f$ -- $\mathrm{epi}(f)$ -- is closed. Using \Cref{theorem:whitney-representation}, there exists a smooth function $h: \RR^{n+1} \to \RR$ such that its zeros set equals $\mathrm{epi}(f)$. 
    
    Since the domain $\mathrm{dom}(f) := \{\firstvar \in \RR^n \mid f(\firstvar) < +\infty\}$ is closed (due to our assumption), by \Cref{theorem:whitney-representation}, there exists $g: \RR^n \to \RR$ such that its zeros equals $\mathrm{dom}(f)$.
    
    Consider the upper-level and lower-level variables $\firstvar \in \RR^n$ and $\secondvar = (\secondvar_1, \secondvar_2, \secondvar_3) \in \RR^3$, define:
    \begin{equation*}
        \begin{aligned}
            P &:= \secondvar_1\\
            Q &:= h(\firstvar,\secondvar_1)^2 + \underbrace{\left(g(\firstvar)\secondvar_2\right)^2 + (1 - \secondvar_2\secondvar_3)^2}_{p(\firstvar, \secondvar_2, \secondvar_3)}.
        \end{aligned}
    \end{equation*}
    We argue that with this choice of $P,Q$, the function $\valbilevel$ of the optimistic version equals $f$. Indeed, due to the separation of $\secondvar_1$ and $(\secondvar_2, \secondvar_3)$, we have:
    \begin{equation*}
        \argmin_\secondvar Q = \argmin_{\secondvar_1} h(\firstvar, \cdot) \times \argmin_{\secondvar_2, \secondvar_3} p(\firstvar, \cdot, \cdot).
    \end{equation*}
    Consider two cases:
    \begin{enumerate}
        \item If $\firstvar \in \mathrm{dom}(f)$: $\Theta(\firstvar) = \argmin Q(\firstvar, \cdot)$ is the set of $\secondvar$ such that $h(\firstvar,\secondvar_1) = 0$ and $\secondvar_2\secondvar_3 = 1$. Therefore, $\Theta(x) = \mathrm{epi}(f) \cap \{x\} \times \RR$. Since we deal with the optimistic version, $\valbilevel(\firstvar) = \min \{\secondvar \mid \secondvar \in \mathrm{epi}(f) \cap \{x\} \times \RR\} = f(x)$ by definition of epigraph. This concludes the proof. 
        \item If $\firstvar \notin \mathrm{dom}(f)$, then $g(\firstvar) \neq 0$. We claim that the set $\argmin_{\secondvar_2, \secondvar_3} p(\firstvar, \cdot, \cdot)$ is empty and that will conclude the proof. Indeed, the infimum of $p$ is zero by taking $\secondvar_2 \to 0$ and $\secondvar_3 = 1 / \secondvar_2$. However, this infimum cannot be attained since both squares cannot equal zero simultaneously. That concludes the proof.
    \end{enumerate}
\end{proof}
\noindent \textbf{Discussion} \Cref{prop:reduction-lsc} highlights the impossibility of dealing with the general $ C^\infty$ bilevel problem and raises concerns about the very meaning of a solution algorithm for such problems. Indeed, to have a sense of Proposition \ref{prop:reduction-lsc}, one can observe that the following monstrous univariate functions admit a representation as a bilevel program with smooth data: 
\begin{itemize}
    \item The \emph{negative} of Thomae's popcorn function \cite[Example 5.1.6h)]{bartle2011realanalysis} is lower semicontinuous and discontinuous on $\mathbb{Q}$, with value $0$ on irrationals and global minimum at $1/2$.
    \item The characteristic function of the Smith-Voltera-Cantor set \cite[Definition 11.1.10]{bartle2011realanalysis} ($1$ on the set, $0$ outside). This is a closed set, its characteristic function is upper semicontinuous, but discontinuities have positive Lebesgue measure.
    \item The Weierstrass function \cite[Remark of Theorem 6.1.2]{bartle2011realanalysis} which is continuous but nowhere differentiable and does not have bounded variations.
    \item The Cantor staircase \cite[Section 6.5.3]{thomson2008elementary}, which is monotone, nonconstant, and almost everywhere differentiable with null derivative.
    \item Pathological Lipschitz function for which local minimizers form a dense subset \cite{loewen2000typical}, or for which subgradient sequences may fail to have a minimizing behavior \cite{daniilidis2020pathological,rios2022examples}.
\end{itemize}  
In particular, \emph{any method resembling a gradient algorithm on the value function may encounter insurmountable difficulties \textbf{even} if the problem data is arbitrarily smooth}.
% \begin{theorem}[Value functions of  polynomial bilevel programming]
%     \label{theorem:closed=SA-extended-valued}
%     Any extended-valued semi-algebraic function is the value function of a polynomial bilevel problem whose lower-level is unconstrained. In particular,
%     $$\cP^{\optim}_{\unbounded} = \cP^{\pessim}_{\unbounded} = \SA.$$
% \end{theorem}

\section{Connection to existing work}
\label{appendix:existing-works}
Our study provides an explanation for limited theoretical guarantees of many proposed algorithms of bilevel optimization, especially when the lower-level problem is not (strongly) convex. In the literature, many works proposed or analyzed algorithms based on automatic differentiation \cite{liu2021towards,liu2021valuefunction,arbel:hal-03869097,liu2020generic}. In this approach, one replaces the condition $\secondvar \in \Theta(\firstvar)$, cf. \eqref{eq:simple-bilevel-poly-optim}, by an algorithm $\cA$ minimizing $Q(\firstvar, \cdot)$. Intuitively, if the algorithm $\cA$ is differentiable w.r.t. to the upper variable $\firstvar$, then one can also calculate the gradient of the bilevel problem w.r.t. to the upper variable $\firstvar$ by the chain rule and use classical first-order methods. The difficulty of this approach is that for general nonconvex functions, most algorithms $\cA$ can only find stationary points or local minima of $Q(\firstvar, \cdot)$. Therefore, guarantees of these algorithms are established either under strong assumptions (e.g., regularity of $\Theta(\firstvar)$, uniform convergence of $\cA$) \cite{liu2021towards,liu2020generic} or for a relaxed model (e.g., $\secondvar$ is only required to be a stationary point) \cite{arbel:hal-03869097}. Another approach is based on smoothing techniques \cite{liu2021valuefunction,guihua2014solving} for which accumulation points of optimal solutions of the smooth approximation are the stationary points or minimizers of the original/approximate bilevel problems. Since the class of the value functions of general bilevel optimization problems can be large (e.g., \Cref{prop:reduction-lsc} and our following results), it is \emph{unsurprising} that standard convergence result of bilevel optimization might require much stronger assumptions to be established.

To avoid pathologies that will be shown in this work, another line of work of bilevel optimization focuses on establishing necessary conditions for locally optimal solutions. These works \cite{ye1995optimality,dempe2012sensitivity,Mordukhovich2020,hendrion2011calmness} are mostly based on the reformulation of bilevel problems into single-level ones using the so-called value function of the lower-level problem \cite{chen1995nonlinear}, which is constrained to be non-positive. In \cite{ye1995optimality}, the authors proposed to use the so-called \emph{partial calmness} to establish Karush-Kuhn-Tucker (KKT)-like necessary conditions for bilevel optimization. Their proposed qualification constraint plays an important role in this line of research since contrary to classical optimization, other popular constraint qualifications conditions such as linear independence (LICQ), Mangasarian-Fromovitz (MFCQ), and Slater's condition generally fail to hold %for the reformulated single-level problem 
(see \cite[Proposition 4.1]{ye1995optimality} for more discussion). The result in \cite{ye1995optimality} applies only if we view bilevel programming as a classical optimization problem that jointly minimizes $P$ w.r.t. to $(\firstvar,\secondvar)$ with a special constraint $\secondvar \in \Theta(\firstvar)$. To achieve necessary conditions of the local solutions of $\valbilevel(\firstvar)$, subsequent works required in addition inner semicontinuity of certain set-valued mappings related to $\Theta(\firstvar)$ \cite{dempe2012sensitivity,Mordukhovich2020,hendrion2011calmness}. While these constraint qualification conditions and assumptions are non-trivial and mathematically interesting, it is difficult to identify a class of bilevel problems that satisfy all of them. Our results might partly explain this difficulty, at least in the polynomial setting (see e.g., \Cref{theorem:closed=SA,theorem:compact=lsc+bc}). %\jer{Annoncer la motivation en premier lieu : les conditions de qualif permettent la bonne position et évitent les pathologies que nous constatons}

Our study on worst-case analysis is different but complementary to works devoted to algorithmic aspects of polynomial bilevel optimization. Using dedicated tools from polynomial optimization, in \cite{jeyakumar2016convergent}, the authors proposed semidefinite programming (SDP) relaxations for polynomial bilevel optimization with lower-level convex problems. More specifically, thanks to the lower-level convexity (and other reasonable constraint qualifications), one can reformulate polynomial bilevel problems into single-level constrained optimization with polynomial objectives and constraints, using KKT conditions. It allows using SDP relaxation techniques for solving polynomial optimization \cite{lasserre2001global}. When the lower-level problem is not convex, \cite{jeyakumar2016convergent,jiawang2017bilevel} proposed to solve an $\epsilon$-approximate version of the polynomial bilevel problem. Thus, the algorithm consists of two nested loops: the outer one solves the $\epsilon$-approximation problem and the inner one solves a sequential SDP relaxation corresponding to a fixed $\epsilon$. Cluster points of the sequence $\{(\firstvar_\epsilon, \secondvar_\epsilon)\}_{\epsilon > 0}$ (optimal solutions for the $\epsilon$-approximate problem) are the optimal solutions of the original problem (under certain assumptions). %\jer{A nouveau quel est le lien ?? L'annoncer préalablement}

%Our characterization of the geometric complexity of bilevel optimization is related to general representation results for semi-algebraic sets, for example in game theory. Indeed, \Cref{theorem:closed=SA} states that any semi-algebraic function is the value function of a polynomial bilevel problem \jer{attention a ne pas ouvrir la porte à des attaques, que va t on répondre à la question : comment comparez vous votre résultat à celui de Vigeral?}. An analogous result states that any nonempty compact semi-algebraic set is the set of Nash equilibrium payoffs of a finite game with $N \geq 3$ players \cite{vigeral2023characterization}. Many related characterizations of the sets of Nash equilibria and equilibrium payoffs described with the (semi-)algebraic notions are studied in \cite{vigeral2016semialgebraic,levy2015projections,datta2003universality,balkenborg2014universality}.

This work also investigates computational complexity, an aspect that is well-studied for the cases of (integer) linear bilevel programming. Indeed, it is known that, while linear bilevel optimization (i.e., $P,Q$ are linear, $\cY$ is a polyhedron) is $\textbf{NP}$-hard \cite{jeroslow1985polynomial,bard1991property,ben-ayed1990computational}, integer linear bilevel optimization (i.e., when several variables are constrained to be integers) belongs to the so-called $\existleveltwo$-hard problems  \cite{lodi2013bilevel,caprara2014study,jeroslow1985polynomial}. Essentially, the class $\existleveltwo$ is a generalization of the notions $\textbf{P}$ and $\textbf{NP}$ to capture the complexity of integer linear bilevel programming (see \Cref{subsec:premilinary-poly-hierarchy} for more details).
Moreover, if the polynomial hierarchy does not collapse in the first level, i.e. $\textbf{NP} \subsetneq \existleveltwo$, these results imply that the complexity of the $\existleveltwo$-hard integer linear bilevel optimization problem is much higher than that of any $\textbf{NP}$ complete problem. We prove the analog $\existleveltwo$-hardness of \emph{polynomial bilevel optimization}. This result is natural since single-level polynomial optimization itself is $\textbf{NP}$ hard (similarly as linear programming with integer constraints), and bilevel problems typically incur a complexity jump in the polynomial hierarchy.

\section{An alternative definition of semi-algebraic sets}\label{appendix:alternative-def}

\begin{proposition}    
    \label{lemma:stratification-elementary-set}
    Any semi-algebraic set $S \in \RR^n$ can be written as a finite union of disjoint basic semi-algebraic sets.
\end{proposition}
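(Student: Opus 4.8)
The plan is to refine any given representation of $S$ into the common \emph{sign cells} of the finitely many polynomials that define it. Writing $S = \bigcup_{\ell \in L} B_\ell$ as a finite union of basic semi-algebraic sets (as in \Cref{def:semi-algebraic-set}), I first collect \emph{all} polynomials appearing in the equalities and inequalities across the $B_\ell$ into a single finite list $P_1, \dots, P_m$. The guiding idea is that $S$ is a Boolean combination of the atomic sign conditions $\{P_i = 0\}$, $\{P_i > 0\}$, $\{P_i < 0\}$; so if I partition $\RR^n$ according to the simultaneous signs of all the $P_i$, then $S$ is automatically a union of blocks of that partition, and the blocks are disjoint by construction.

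Concretely, for each sign vector $\sigma \in \{-1, 0, 1\}^m$ I set
\[
  C_\sigma := \bigcap_{i:\sigma_i = 0}\{P_i = 0\}\;\cap\;\bigcap_{i:\sigma_i = 1}\{P_i > 0\}\;\cap\;\bigcap_{i:\sigma_i = -1}\{P_i < 0\}.
\]
These $3^m$ sets are pairwise disjoint and cover $\RR^n$, since every point $x$ determines a unique sign vector $(\operatorname{sign} P_1(x), \dots, \operatorname{sign} P_m(x))$. Each $C_\sigma$ is itself \emph{basic} in the sense of \Cref{def:semi-algebraic-set}: the equalities can be merged into the single polynomial equality $\sum_{i:\sigma_i = 0} P_i^2 = 0$, and the conditions $P_i < 0$ are rewritten as $-P_i > 0$, leaving one equality together with a finite family of strict inequalities.

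The key step is then to observe that each defining set $B_\ell$, and hence $S$ itself, is a union of cells $C_\sigma$. Indeed, the truth value of every atomic condition is constant on a fixed cell $C_\sigma$ (it is read off directly from the corresponding entry $\sigma_i$), so the membership predicate defining $B_\ell$ — and therefore the predicate $x \in S$ — is constant on $C_\sigma$. Consequently each cell is either entirely contained in $S$ or entirely disjoint from it, which yields
\[
  S = \bigsqcup_{\sigma \,:\, C_\sigma \subseteq S} C_\sigma,
\]
a finite disjoint union of basic semi-algebraic sets once the empty cells are discarded. This is exactly the claimed decomposition.

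The disjointness and covering are immediate from the definition of a sign vector, so they pose no difficulty. The only point requiring a little care is matching the precise single-equality form of \Cref{def:semi-algebraic-set}, which is handled by the sum-of-squares observation that $\sum_i P_i^2 = 0$ is equivalent to all $P_i = 0$ simultaneously. I do not anticipate a genuine obstacle: the statement is essentially a bookkeeping refinement of the union representation, and the sign-cell construction makes the disjointness automatic.
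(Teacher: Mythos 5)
Your proof is correct and follows essentially the same route as the paper's: both collect all polynomials appearing in the given union representation and partition $\RR^n$ into the pairwise-disjoint sign cells determined by those polynomials, then observe that $S$ is a union of such cells because membership in each defining basic set is sign-invariant. If anything, your write-up is slightly more careful than the paper's, since you explicitly verify that each sign cell is \emph{basic} in the sense of \Cref{def:semi-algebraic-set} (merging the equalities via $\sum_i P_i^2 = 0$ and rewriting $P_i < 0$ as $-P_i > 0$), a step the paper leaves implicit.
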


\begin{proof}
    By definition of semi-algebraic sets (cf. \Cref{def:semi-algebraic-set}), $S$ has the following form:
    \begin{equation}
        \label{eq:decomposition-si}
        S = \bigcup_{i \in I}  S_i \quad \text{where} \quad S_i:= \{x \mid P_i(x) = 0 \text{ and } Q_{ij}(x) > 0, j \in J\}.
    \end{equation}
    (We assume the same $J$ for all $i \in I$ for simplicity, but one can increase the number of inequalities in the definition of each $S_i$ to make this assumption valid).
    % Since $s_i, i = 1, \ldots, N$ are disjoint, it is sufficient to find a finite collections of disjoint basic elementary sets $\cT_\ell, \ell = 1, \ldots, K$ such that $s_i = \cup_{i = 1}^K \cT_\ell$. The collection satisfying the statement of the lemma is the union of such collections for all $s_i, i = 1, \ldots, N$. 

    To this end, let $\cH:= \{P_i, i = 1 \in I\} \cup \{Q_{ij}, (i,j) \in I \times J\}$ the set of all polynomials appearing in \eqref{eq:decomposition-si}. It is noteworthy that $\cH$ is finite, i.e., $|\cH| < +\infty$, thus we can write $\cH = \{h_1, \ldots, h_{|\cH|}\}$ where $h_i$ are polynomials. Consider the following collections of $3^{|\cH|}$ basic semi-algebraic sets, indexed by $\cI \in \{<, >, =\}^{|\cH|}$, and defined as:
    \begin{equation*}
        \cT_\cI := \{x \mid h_i \quad \Delta_i \quad 0, i = 1, \ldots, \cH\},
    \end{equation*}
    where $\Delta_i$ can receive three possible values $\{>, <, =\}$, encoded by the index $\cI$. These sets are semi-algebraic and they are inherently disjoint. Moreover, for each $\cT_\cI$ is either disjoint and included in $S_i$ for $i \in I$. Finally, the union of these sets equals to $\RR^n$. Therefore, there must exist a subset of indices $I \subseteq \{<, >, =\}^{|\cH|}$ such that:
    \begin{equation*}
        S = \bigcup_{\cI \in I} \cT_\cI.
    \end{equation*}
    This concludes the proof.
\end{proof}

\section{$\existleveltwo$-hardness of the subset sum interval problem}
\label{sec:subsetSumSigma2pHard}
The $\existleveltwo$-hardness of the subset sum interval problem is mentioned in \cite{caprara2014study}, and was credited to \cite{eggermont2012motion}. However, in \cite{eggermont2012motion}, the authors did not prove that the subset sum interval problem is $\existleveltwo$-hard. Instead, they proved that a problem called \emph{captive queen} is $\Pi^p_2$-hard. The goal of this section is to clarify the link the work of \cite{eggermont2012motion} and the subset sum interval problem: an intermediate result in \cite{eggermont2012motion} imply its $\existleveltwo$-hardness. But this implication is non-trivial and we describe the arguments here for self-containedness of our presentation. We do \emph{not} claim any scientific contribution as these results are due to \cite{eggermont2012motion,caprara2014study}. We thank Carvalho M. \footnote{\url{https://margaridacarvalho.org/}} for insightful discussions on this question.

We start with the quantified 3-CNF-SAT problem: given a boolean formula $\phi(X,Y)$ over variables $X$ and $Y$, in conjunctive normal form with each clause with exactly three literals, decide if $\forall X$, $\exists Y$, $\phi(X,Y)$ is true. This is a $\Pi_2^p$-complete problem \cite[Example 5.9]{arora2009computational}. 

Given such a $\phi$, \cite[Lemma 3.3 and Lemma 3.4]{eggermont2012motion} describe the construction of a polynomial time reduction to an instance of the so-called \emph{captive queen problem} \cite{eggermont2012motion} of the following forms: there exists $k$ integers $q_1, \ldots, q_k$ and two positive integers $R, r \leq k$ such that $\forall X$, $\exists Y$, $\phi(X,Y)$ holds, if and only if

\begin{equation*}
    \forall S, \exists I \subseteq \{1, \ldots, k\}, R \leq S \leq R + 2^r - 1 \text{ and } S = \sum_{i \in I} q_i.
\end{equation*}
This is the negation of the truth value of the subset sum interval problem. Since the 3-CNF-SAT problem is $\Pi_2^p$ complete, we have that negation of the subset sum interval problem is $\Pi_2^p$-complete, and the problem itself $\Sigma_2^p$-complete \cite[Section 5.1]{arora2009computational}.

\section*{Acknowledgements}
JB, TL, EP thank AI Interdisciplinary Institute ANITI funding, through the French ``Investments for the Future -- PIA3'' program under the grant agreement ANR-19-PI3A0004, Air Force Office of Scientific Research, Air Force Material Command, USAF, under grant numbers FA8655-22-1-7012. JB, EP and SV acknowledge support from ANR MAD. JB and EP thank TSE-P and acknowledge support from ANR Chess, grant ANR-17-EURE-0010, ANR Regulia. EP acknowledges support from IUF and ANR Bonsai, grant ANR-23-CE23-0012-01.
{The authors warmly thank M. Carvalho and the anonymous referees for their very useful comments.}

%%=============================================%%
%% For submissions to Nature Portfolio Journals %%
%% please use the heading ``Extended Data''.   %%
%%=============================================%%

%%=============================================================%%
%% Sample for another appendix section			       %%
%%=============================================================%%

%% \section{Example of another appendix section}\label{secA2}%
%% Appendices may be used for helpful, supporting or essential material that would otherwise 
%% clutter, break up or be distracting to the text. Appendices can consist of sections, figures, 
%% tables and equations etc.

\end{appendices}

\bibliographystyle{plain}
\bibliography{references}

\begin{thebibliography}{10}

\bibitem{ablin2020super-efficiency}
Pierre Ablin, Gabriel Peyr\'{e}, and Thomas Moreau.
\newblock Super-efficiency of automatic differentiation for functions defined
  as a minimum.
\newblock In {\em Proceedings of the 37th International Conference on Machine
  Learning}, ICML'20, virtual, 2020. JMLR.org.

\bibitem{Aboussoror1995StrongweakSP}
Abdelmalek Aboussoror and Pierre Loridan.
\newblock Strong-weak stackelberg problems in finite dimensional spaces.
\newblock {\em Serdica. Mathematical Journal}, 21:151--170, 1995.

\bibitem{arbel2022amortized}
Michael Arbel and Julien Mairal.
\newblock Amortized implicit differentiation for stochastic bilevel
  optimization.
\newblock In {\em International Conference on Learning Representations},
  Virtual, 2022.

\bibitem{arbel:hal-03869097}
Michael Arbel and Julien Mairal.
\newblock {Non-Convex Bilevel Games with Critical Point Selection Maps}.
\newblock In {\em {NeurIPS 2022 - 36th Conference on Neural Information
  Processing Systems}}, Advances in Neural Information Processing Systems
  (NeurIPS) 2022, pages 1--34, New Orleans, United States, November 2022.

\bibitem{arora2009computational}
Sanjeev Arora and Boaz Barak.
\newblock {\em Computational Complexity: A Modern Approach}.
\newblock Cambridge University Press, Cambridge, 2009.

\bibitem{bach2023exponential}
Francis Bach and Alessandro Rudi.
\newblock Exponential convergence of sum-of-squares hierarchies for
  trigonometric polynomials.
\newblock {\em SIAM Journal on Optimization}, 33(3):2137--2159, 2023.

\bibitem{bard1991property}
Jonathan Bard.
\newblock Some properties of the bilevel programming problem.
\newblock {\em Journal of Optimization Theory and Applications}, 68:371--378,
  02 1991.

\bibitem{bartle2011realanalysis}
Robert~Gardner Bartle and Donald~R Sherbert.
\newblock {\em Introduction to real analysis}.
\newblock Wiley,, New Delhi, 2011.

\bibitem{realgeoalg}
Saugata Basu, Richard Pollack, and Marie-Fran\c{c}oise Roy.
\newblock {\em Algorithms in Real Algebraic Geometry (Algorithms and
  Computation in Mathematics)}.
\newblock Springer-Verlag, Berlin, Heidelberg, 2006.

\bibitem{beck2023survey}
Yasmine Beck, Ivana Ljubić, and Martin Schmidt.
\newblock A survey on bilevel optimization under uncertainty.
\newblock {\em European Journal of Operational Research}, 311(2):401--426,
  2023.

\bibitem{beck2021robust}
Yasmine Beck and Martin Schmidt.
\newblock A robust approach for modeling limited observability in bilevel
  optimization.
\newblock {\em Operations Research Letters}, 49(5):752--758, 2021.

\bibitem{ben-ayed1990computational}
Omar Ben-Ayed and E.~Blair.
\newblock Computational difficulties of bilevel linear programming.
\newblock {\em Operations Research}, 38:556--560, 06 1990.

\bibitem{benedetti1990real}
R.~Benedetti and J.J. Risler.
\newblock {\em Real Algebraic and Semi-algebraic Sets}.
\newblock Actualit{\'e}s math{\'e}matiques. Hermann, Paris, 1990.

\bibitem{maximum_theorem}
C.~Berge.
\newblock Topological spaces.
\newblock {\em Proceedings of the Edinburgh Mathematical Society}, 13, 1963.

\bibitem{bertsekas1996stochastic}
Dimitri Bertsekas and Steven~E Shreve.
\newblock {\em Stochastic optimal control: the discrete-time case}, volume~5.
\newblock Athena Scientific, 1996.

\bibitem{Blondel2021EfficientAM}
Mathieu Blondel, Quentin Berthet, Marco Cuturi, Roy Frostig, Stephan Hoyer,
  Felipe Llinares-L'opez, Fabian Pedregosa, and Jean-Philippe Vert.
\newblock Efficient and modular implicit differentiation.
\newblock {\em ArXiv}, abs/2105.15183, 2021.

\bibitem{bolte2021implicitdifferentiation}
J{\'e}r{\^o}me Bolte, Tam Le, Edouard Pauwels, and Antonio Silveti-Falls.
\newblock {Nonsmooth Implicit Differentiation for Machine Learning and
  Optimization}.
\newblock In {\em {Advances in Neural Information Processing Systems}}, Online,
  France, December 2021.

\bibitem{caprara2014study}
Alberto Caprara, Margarida Carvalho, Andrea Lodi, and Gerhard~J. Woeginger.
\newblock A study on the computational complexity of the bilevel knapsack
  problem.
\newblock {\em SIAM Journal on Optimization}, 24(2):823--838, 2014.

\bibitem{cerulli:tel-03587548}
Martina Cerulli.
\newblock {\em {Bilevel optimization and applications}}.
\newblock Theses, {Institut Polytechnique de Paris}, December 2021.

\bibitem{chen1995nonlinear}
Y.~Chen and M.~Florian.
\newblock The nonlinear bilevel programming problem:formulations,regularity and
  optimality conditions.
\newblock {\em Optimization}, 32(3):193--209, 1995.

\bibitem{Dagreou2022SABA}
Mathieu Dagr{\'e}ou, Pierre Ablin, Samuel Vaiter, and Thomas Moreau.
\newblock A framework for bilevel optimization that enables stochastic and
  global variance reduction algorithms.
\newblock In {\em Advances in {{Neural Information Processing Systems}}
  ({{NeurIPS}})}, 2022.

\bibitem{Dagreou2023SRBA}
Mathieu Dagr{\'e}ou, Thomas Moreau, Samuel Vaiter, and Pierre Ablin.
\newblock A {{Lower Bound}} and a {{Near-Optimal Algorithm}} for {{Bilevel
  Empirical Risk Minimization}}.
\newblock In {\em International Conference on Artificial Intelligence and
  Statistics (AISTATS)}, 2024.

\bibitem{daniilidis2020pathological}
Aris Daniilidis and Dmitriy Drusvyatskiy.
\newblock Pathological subgradient dynamics.
\newblock {\em SIAM Journal on Optimization}, 30(2):1327--1338, 2020.

\bibitem{dempe2002foundations}
S.~Dempe.
\newblock {\em Foundations of Bilevel Programming}.
\newblock Nonconvex Optimization and Its Applications. Springer US, US, 2002.

\bibitem{dempe2012sensitivity}
S.~Dempe, B.~S. Mordukhovich, and A.~B. Zemkoho.
\newblock Sensitivity analysis for two-level value functions with applications
  to bilevel programming.
\newblock {\em SIAM Journal on Optimization}, 22(4):1309--1343, 2012.

\bibitem{eggermont2012motion}
Christian~E.J. Eggermont and Gerhard~J. Woeginger.
\newblock {Motion planning with pulley, rope, and baskets}.
\newblock In Thomas~Wilke Christoph~D{\"u}rr, editor, {\em {STACS'12 (29th
  Symposium on Theoretical Aspects of Computer Science)}}, volume~14, pages
  374--383, Paris, France, 2012. {LIPIcs}.

\bibitem{Franceschi2018BilevelPF}
Luca Franceschi, Paolo Frasconi, Saverio Salzo, Riccardo Grazzi, and
  Massimiliano Pontil.
\newblock Bilevel programming for hyperparameter optimization and
  meta-learning.
\newblock In {\em International Conference on Machine Learning}, 2018.

\bibitem{gould2016differentiating}
Stephen Gould, Basura Fernando, Anoop Cherian, Peter Anderson, Rodrigo~Santa
  Cruz, and Edison Guo.
\newblock On differentiating parameterized argmin and argmax problems with
  application to bi-level optimization.
\newblock {\em CoRR}, abs/1607.05447, 2016.

\bibitem{hendrion2011calmness}
René Henrion and Thomas Surowiec.
\newblock On calmness conditions in convex bilevel programming.
\newblock {\em Applicable Analysis}, 90:951--970, 06 2011.

\bibitem{jeroslow1985polynomial}
R.~Jeroslow.
\newblock The polynomial hierarchy and a simple model for competitive analysis.
\newblock {\em Mathematical programming}, 32:146--164, 1985.

\bibitem{jeyakumar2016convergent}
V.~Jeyakumar, J.~B. Lasserre, G.~Li, and T.~S. Pham.
\newblock Convergent semidefinite programming relaxations for global bilevel
  polynomial optimization problems.
\newblock {\em SIAM Journal on Optimization}, 26(1):753--780, 2016.

\bibitem{ji2020BilevelOC}
Kaiyi Ji, Junjie Yang, and Yingbin Liang.
\newblock Bilevel optimization: Convergence analysis and enhanced design.
\newblock In {\em International Conference on Machine Learning}, Vienna,
  Austria, 2020.

\bibitem{lafontaine2015introduction}
Jacques Lafontaine et~al.
\newblock {\em An introduction to differential manifolds}.
\newblock Springer, 2015.

\bibitem{lasserre2001global}
Jean~B. Lasserre.
\newblock Global optimization with polynomials and the problem of moments.
\newblock {\em SIAM Journal on Optimization}, 11(3):796--817, 2001.

\bibitem{lee2000smoothmanifolds}
John~M. Lee.
\newblock {\em Introduction to Smooth Manifolds}.
\newblock Springer, Springer, New York, NY, 2000.

\bibitem{guihua2014solving}
Gui-Hua Lin, Mengwei Xu, and Jane Ye.
\newblock On solving simple bilevel programs with a nonconvex lower level
  program.
\newblock {\em Mathematical Programming}, 144, 04 2014.

\bibitem{Liu2022investigating}
R.~Liu, J.~Gao, J.~Zhang, D.~Meng, and Z.~Lin.
\newblock Investigating bi-level optimization for learning and vision from a
  unified perspective: A survey and beyond.
\newblock {\em IEEE Transactions on Pattern Analysis and Machine Intelligence},
  44(12):10045--10067, dec 2022.

\bibitem{liu2021valuefunction}
Risheng Liu, Xuan Liu, Xiaoming Yuan, Shangzhi Zeng, and Jin Zhang.
\newblock A value-function-based interior-point method for non-convex bi-level
  optimization.
\newblock In Marina Meila and Tong Zhang, editors, {\em Proceedings of the 38th
  International Conference on Machine Learning, {ICML} 2021, 18-24 July 2021,
  Virtual Event}, volume 139 of {\em Proceedings of Machine Learning Research},
  pages 6882--6892, virtual, 2021. {PMLR}.

\bibitem{liu2021towards}
Risheng Liu, Yaohua Liu, Shangzhi Zeng, and Jin Zhang.
\newblock Towards gradient-based bilevel optimization with non-convex followers
  and beyond.
\newblock In A.~Beygelzimer, Y.~Dauphin, P.~Liang, and J.~Wortman Vaughan,
  editors, {\em Advances in Neural Information Processing Systems}, 2021.

\bibitem{liu2020generic}
Risheng Liu, Pan Mu, Xiaoming Yuan, Shangzhi Zeng, and Jin Zhang.
\newblock A generic first-order algorithmic framework for bi-level programming
  beyond lower-level singleton.
\newblock In {\em International Conference on Machine Learning}, Vienna,
  Austria, 2020.

\bibitem{lodi2013bilevel}
Andrea Lodi, Ted Ralphs, and Gerhard Woeginger.
\newblock Bilevel programming and the separation problem.
\newblock {\em Mathematical Programming}, 146, 08 2013.

\bibitem{loewen2000typical}
Philip~D. Loewen and Xianfu Wang.
\newblock {Typical Properties of Lipschitz Functions}.
\newblock {\em Real Analysis Exchange}, 26(2):717 -- 726, 2000.

\bibitem{Mordukhovich2020}
Boris~S. Mordukhovich.
\newblock {\em Bilevel Optimization and Variational Analysis}.
\newblock Springer International Publishing, Cham, 2020.

\bibitem{jiawang2017bilevel}
Jiawang Nie, Li~Wang, and Jane~J. Ye.
\newblock Bilevel polynomial programs and semidefinite relaxation methods.
\newblock {\em SIAM Journal on Optimization}, 27(3):1728--1757, 2017.

\bibitem{Parrilo2003SemidefinitePR}
Pablo~A. Parrilo.
\newblock Semidefinite programming relaxations for semialgebraic problems.
\newblock {\em Mathematical Programming}, 96:293--320, 2003.

\bibitem{implicitMAML}
Aravind Rajeswaran, Chelsea Finn, Sham~M Kakade, and Sergey Levine.
\newblock Meta-learning with implicit gradients.
\newblock In H.~Wallach, H.~Larochelle, A.~Beygelzimer, F.~d\textquotesingle
  Alch\'{e}-Buc, E.~Fox, and R.~Garnett, editors, {\em Advances in Neural
  Information Processing Systems}, volume~32. Curran Associates, Inc., 2019.

\bibitem{rios2022examples}
Rodolfo R{\'\i}os-Zertuche.
\newblock Examples of pathological dynamics of the subgradient method for
  lipschitz path-differentiable functions.
\newblock {\em Mathematics of Operations Research}, 47(4):3184--3206, 2022.

\bibitem{sinha2018review}
Ankur Sinha, Pekka Malo, and Kalyanmoy Deb.
\newblock A review on bilevel optimization: From classical to evolutionary
  approaches and applications.
\newblock {\em IEEE Transactions on Evolutionary Computation}, 22(2):276--295,
  2018.

\bibitem{STOCKMEYER19761}
Larry~J. Stockmeyer.
\newblock The polynomial-time hierarchy.
\newblock {\em Theoretical Computer Science}, 3(1):1--22, 1976.

\bibitem{thomson2008elementary}
B.S. Thomson, J.B. Bruckner, and A.M. Bruckner.
\newblock {\em Elementary Real Analysis}.
\newblock www.classicalrealanalysis.com, 2008.

\bibitem{o-minimal-structures}
Lou van~den Dries and Chris Miller.
\newblock {Geometric categories and o-minimal structures}.
\newblock {\em Duke Mathematical Journal}, 84(2):497 -- 540, 1996.

\bibitem{vershyninHighdimensionalProbabilityIntroduction2018}
Roman Vershynin.
\newblock {\em High-Dimensional Probability: An Introduction with Applications
  in Data Science}.
\newblock Number~47 in Cambridge Series in Statistical and Probabilistic
  Mathematics. {Cambridge University Press}, {Cambridge}.

\bibitem{ye1995optimality}
J.~J. Ye and D.~L. Zhu.
\newblock Optimality conditions for bilevel programming problems.
\newblock {\em Optimization}, 33(1):9--27, 1995.

\end{thebibliography}

%%===========================================================================================%%
%% If you are submitting to one of the Nature Portfolio journals, using the eJP submission   %%
%% system, please include the references within the manuscript file itself. You may do this  %%
%% by copying the reference list from your .bbl file, paste it into the main manuscript .tex %%
%% file, and delete the associated \verb+\bibliography+ commands.                            %%
%%===========================================================================================%%

%\bibliography{sn-bibliography}% common bib file
%% if required, the content of .bbl file can be included here once bbl is generated
%%\input sn-article.bbl

\end{document}